\DeclareSymbolFont{STIXsymbols}{LS1}{stixscr}{m}{n}
\DeclareMathSymbol{\bbowtie}{\mathrel}{STIXsymbols}{"0E}
\DeclareMathSymbol{\lltimes}{\mathbin}{STIXsymbols}{"0F}
\newif\ifArxivVersion
\lstdefinestyle{mystyle}{
    basicstyle=\ttfamily
}
\pgfplotsset{compat=1.17}
\newcommand{\nop}[1]{}
\newcommand{\lbl}{\operatorname{label}}
\newcommand{\queryname}{zero-materialisation answerable\xspace}
\newcommand{\qnameshort}{0MA\xspace}
\newcommand{\YA}{Yannakakis' algorithm\xspace}
\newcommand{\Att}{\mathit{Att}}
\newcommand{\ghw}{\mathit{ghw}}
\newcommand{\figARXIV}{\url{https://figshare.com/s/b9ba4b798760cf6af3a4}}
  \providecommand\BibTeX{{%
    \normalfont B\kern-0.5em{\scshape i\kern-0.25em b}\kern-0.8em\TeX}}}
\tikzset{
  my node style/.style={
    font=\scriptsize,
    top color=white,
    bottom color=white,
    rectangle,
    minimum size=6mm,
    draw=black,
    thick,
    drop shadow,
    align=center,
  }
}
\definecolor{pastelBlue}{HTML}{B4CFEC }
\definecolor{pastelGreen}{HTML}{88ddc8}
\definecolor{pastelReddish}{HTML}{EDB4D0}
\definecolor{pastelThistle}{HTML}{E0BBE4}
\definecolor{pastelLumber}{HTML}{e89a7d}
\definecolor{pastelPeach}{HTML}{fecba5}
\definecolor{pastelYellow}{HTML}{f0db75}
\definecolor{pastelIvory}{HTML}{ded5ba}
\definecolor{DuckDB}{HTML}{88ddc8}
\definecolor{Postgres}{HTML}{e89a7d}
\definecolor{Spark}{HTML}{f0db75}
\pgfplotsset{every tick label/.append style={font=\tiny}}
\definecolor{darkgreen}{rgb}{0.0, 0.5, 0.0}
\newcommand{\QueryPlanNode}[3]{
  \begin{tabular}{l r}
        #1  \ &  \color{red} #2{s} \\
     \multicolumn{2}{c}{\color{darkgreen} \# #3}
  \end{tabular}
}
\newcommand{\QueryPlanNodePlain}[1]{
  \begin{tabular}{l r}
     \multicolumn{2}{c}{ #1}
  \end{tabular}
}
\newcommand{\ourSystem}{\textsc{YanRe}\xspace}
\begin{document}
\title{
Structure-Guided Query Evaluation: \\
Towards Bridging the Gap from Theory to Practice}

\ifArxivVersion
    \author{Georg Gottlob}
    \orcid{0000-0002-2353-5230}
    \author{Matthias Lanzinger}
    \orcid{0000-0002-7601-3727}
    \affiliation{%
      \institution{University of Oxford}
      \country{United Kingdom}
    }
    \email{georg.gottlob@cs.ox.ac.uk}
    \email{matthias.lanzinger@cs.ox.ac.uk}
    
    \author{Davide Mario Longo}
    \orcid{0000-0003-4018-4994}
    \author{Reinhard Pichler}
    \orcid{0000-0002-1760-122X}
    \author{Alexander Selzer}
    \affiliation{%
      \institution{TU Wien}
      \country{Austria}
    }
    
    \email{firstname.lastname@tuwien.ac.at}

    \author{Cem Okulmus}
    \orcid{0000-0002-7742-0439}
    \affiliation{%
      \institution{Ume\aa{} University}
      \country{Sweden}
    }
    \email{okulmus@cs.umu.se}
\else
     \author{Georg Gottlob}
     \orcid{0000-0002-2353-5230}
    \author{Matthias Lanzinger}
    \orcid{0000-0002-7601-3727}
    \affiliation{%
      \institution{University of Oxford, UK}
    }
     \email{georg.gottlob@cs.ox.ac.uk}
    \email{matthias.lanzinger@cs.ox.ac.uk}
    
    \author{Davide Mario Longo}
    \orcid{0000-0003-4018-4994}
    \author{Reinhard Pichler}
    \orcid{0000-0002-1760-122X}
    \author{Alexander Selzer}
    \affiliation{%
      \institution{TU Wien, Austria}
    }
    
    \email{firstname.lastname@tuwien.ac.at}

    \author{Cem Okulmus}
    \orcid{0000-0002-7742-0439}
    \affiliation{%
      \institution{Ume\aa{} University, Sweden}
    }
    \email{okulmus@cs.umu.se}

\fi

\begin{abstract}
Join queries involving many relations pose a severe challenge to 
today's query optimisation techniques. To some extent, this is due 
to the fact that these techniques do not pay sufficient attention
to structural properties of the query. 
In stark contrast, the Database Theory community has intensively studied structural properties of queries
(such as acyclicity and various notions of width) and proposed efficient query evaluation
techniques through variants of Yannakakis' algorithm. However, although most queries in practice actually are acyclic or have low width, structure-guided query evaluation techniques based on Yannakakis' algorithm have not found their
way into mainstream database technology yet.

The goal of this work is to address this gap between theory and practice and to demonstrate that the consideration of query structure can improve query evaluation performance on modern DBMSs significantly in cases that have been traditionally challenging. In particular, we study the performance of structure-guided query evaluation in three  architecturally distinct DBMSs by rewriting SQL queries into a sequence of SQL statements that express an execution of Yannakakis' algorithm. 
Moreover, we identify a class of queries that is particularly well suited for our approach and allows query answering in a variety of common scenarios without materializing any join. Through empirical evaluation we show that structure-guided query evaluation can make the evaluation of many difficult join queries feasible whereas their evaluation requires a prohibitive amount of time and memory on current DBMSs. %
\end{abstract}

\maketitle

\section{Introduction}
\label{sect:Introduction}

Query processing lies at the very heart of database applications and
systems -- with join queries arguably being the most fundamental and basic form of queries. A lot of research 
spanning over several decades has gone into optimising queries in general and join queries in particular. Consequently, in many practical cases, 
Database Management Systems (DBMSs) perform really well -- even as the amounts of data to be handled get bigger and bigger. However, there still remain queries where 
today's DBMSs struggle or simply fail. This is especially the case with queries that involve the join of many 
(say 10, 50 or even hundreds of) relations. 
Large join queries remain challenging even when all joins are made along foreign key constraints, one of the most common and basic cases for relational DBMSs. We will summarily refer to these kinds of queries as \emph{typical yet challenging}. 
Such queries are becoming more and more common. For example, queries automatically
generated by business intelligence tools may easily reach this size~\cite{DBLP:conf/sigmod/NeumannR18}.
It, therefore, is a requirement for DBMSs today to cope with~such~queries.

The traditional approach to evaluating a join query is to split it into a sequence of two-way joins. One of the main tasks of query optimisation is then to determine 
the optimal or at least a good join order. In particular, part of finding a good join order is avoiding the costly computation of large intermediate results as far as possible.
However, typical systems rely on some combination of heuristics and optimisation procedures to determine the join order for given queries. Hence, even for moderately large queries, the resulting optimisation problems become too difficult to solve exactly and the quality of the resulting join orders degrades quickly. For instance, PostgreSQL 14 by default performs a full search for the optimal plan only up to 11 joins before falling back on heuristic optimisation techniques. Sophisticated pruning methods and parallelisation have been shown to push this threshold higher~\cite{DBLP:conf/sigmod/ManciniKCMA22,%
DBLP:conf/icde/MageirakosMKCA22}, but the task still remains fundamentally challenging.
Moreover, the problem of huge intermediate results 
is not restricted to the choice of a bad join order. 
As has been shown in recent work on worst-case optimal 
joins~\cite{DBLP:journals/jacm/NgoPRR18,DBLP:journals/sigmod/NgoRR13}, 
it is an intrinsic deficiency of splitting join queries into a sequence of two-way joins. %
For queries of particular structure 
(typically, small queries with joins that do not 
follow
foreign key relationships)
and heavily skewed data, worst-case optimal joins 
may indeed help to avoid the generation of intermediate result tuples 
that do not contribute to the final result. But 
empirical studies of database queries 
have shown that most queries in practice are acyclic or almost acyclic,
involving mostly joins along foreign 
keys~\cite{DBLP:journals/vldb/BonifatiMT20,DBLP:journals/jea/FischlGLP21} and thus call for a different solution.

From a theoretical perspective, the problem of avoiding large intermediate results in join queries has long been considered as essentially solved. For
acyclic queries, Yannakakis'
algorithm~\cite{DBLP:conf/vldb/Yannakakis81} is well known to
guarantee query answering without any unnecessary intermediate results
by following the inherent tree-like structure of acyclic queries in
the evaluation of the query. 
From there, a rich theory of structural
decompositions and related notions of width was developed~\cite{DBLP:journals/jcss/GottlobLS02,2014grohemarx} that generalises the
acyclic case to general queries 
with guaranteed 
bounds (relative to some notion of width) 
on the intermediate results.
Furthermore, it is considered highly unlikely that stronger bounds can be given on the intermediate results through other methods, see~\cite{DBLP:journals/siamcomp/AtseriasGM13,DBLP:journals/jacm/Grohe07}.

A small number of research systems have indeed adopted structural decomposition methods and worst-case optimal join algorithms with highly promising results~\cite{DBLP:journals/tods/AbergerLTNOR17, 
DBLP:journals/corr/abs-2302-13140,%
DBLP:conf/sigmod/IdrisUV17,%
DBLP:journals/vldb/IdrisUVVL20,%
DBLP:conf/sigmod/PerelmanR15,%
DBLP:conf/sigmod/TuR15,%
DBLP:journals/corr/abs-2301-04003,%
DBLP:conf/sigmod/0001022,%
DBLP:conf/sigmod/WangY20,%
DBLP:conf/sigmod/Wang021}.
The work in 
\cite{DBLP:conf/sigmod/IdrisUV17,%
DBLP:journals/vldb/IdrisUVVL20,%
DBLP:conf/sigmod/WangY20,%
DBLP:journals/corr/abs-2301-04003} 
focuses specifically on the problem of dynamic query answering, i.e., updating the answers to a query as modifications are made to the database. 
Important extensions of Yannakakis's algorithm beyond 
pure conjunctive queries are studied for instance 
in~\cite{DBLP:journals/corr/abs-2302-13140,%
DBLP:journals/vldb/IdrisUVVL20,%
DBLP:conf/sigmod/0001022}, where 
acyclic conjunctive queries are extended by 
set difference, theta-joins, and comparisons spanning several
relations, respectively. 
The methods and systems presented 
in~\cite{DBLP:journals/tods/AbergerLTNOR17,%
DBLP:conf/sigmod/PerelmanR15,%
DBLP:conf/sigmod/TuR15,DBLP:conf/sigmod/Wang021}
have focused on queries that are more of graph theoretical interest rather than typical relational database queries. In particular, their performance evaluation focuses on tasks like counting triangles, barbell graph queries (i.e., two disjoint cliques that are connected by a single edge), and the like, where worst-case optimal join techniques are expected to have a larger effect on the performance than structure-guided evaluation. These works motivate the use of theoretical results in practical systems in their own right, but tell us little about the real-world feasibility for typical yet challenging queries. Moreover, it is difficult to judge how performance improvements translate from purpose-built research systems to the established systems that are popular in industry. 
Ultimately, the principal question remains unanswered:
\begin{center}\it
  Can structure-guided evaluation improve real-world performance of standard database systems for typical yet challenging queries?
\end{center}

The goal of this paper is to study this question and bridge this gap 
between the theory and systems communities. Our results show that structure-guided evaluation brings large performance gains on a variety of mainstream DBMSs and can thus alleviate some of the most critical pain-points of modern DBMSs.

Our affirmative answer is primarily based on a broad experimental evaluation 
on a recent benchmark by~\citet{DBLP:conf/sigmod/ManciniKCMA22} that is representative of the typical yet challenging queries we are interested in.
Such an evaluation on mainstream DBMSs has traditionally been difficult due to an apparent mismatch in paradigms between Yannakakis' algorithm, which operates in multiple phases, and the Volcano iterator
 model~\cite{DBLP:conf/icde/GraefeM93} commonly adopted by modern DBMSs. A direct integration of such methods is therefore laborious and shifts the performance question towards a matter of effective implementation and integration, rather than a study of the general viability of the method. Moreover, such an integration would, in the first place,  
 be restricted to a {\em single} DBMS based on a {\em single} architectural type.

We therefore base our experimental evaluation on a DBMS-agnostic rewriting
to control 
a Yannakakis-style evaluation from ``outside'' the DBMS  by 
submitting to the DBMS appropriate SQL statements that correspond to the 
operations performed by Yannakakis' algorithm.
Using these rewritings, we
 compare the performance of a structure-guided approach to the standard query execution strategies in three DBMSs: PostgreSQL, DuckDB, and Spark SQL, that were selected as popular representatives of distinct types of DBMS architecture.

In addition to our empirical results, we also observe that certain common query patterns are particularly amenable to structure-guided evaluation.
 For these queries, even a partial execution of Yannakakis' algorithm 
 is sufficient to answer the query. 
 More specifically, it is possible to leave out the materialisation of any joins
 and to check consistency via semi-joins only. We shall therefore refer to 
 these  queries as 
 \emph{zero-materialisation answerable (0MA)}. 

\paragraph{Contributions}
Our main contributions are as follows.
\begin{itemize}
\item
We develop a flexible methodology for Yanna\-kakis-style query processing 
on top of a DBMS without requiring any modification to the DBMS itself. 
Our rewriting-based approach is 
completely DBMS-agnostic and could, in principle, be applied to any DBMS that 
adheres to the SQL standard. This will facilitate wider experimental investigation of the benefits of structure-guided query evaluation in different settings and systems without the need for deep integration in a DBMS, which -- at this stage -- would require 
an unjustifiably high effort.
\item 
We introduce and formally define the class of 
zero-mate\-rialisation answerable (0MA) queries and show that all 
0MA queries can be evaluated without materialising any joins.
We illustrate through various examples that 
this class indeed covers a wide variety of common query patterns. Moreover, we show how the beneficial properties of 0MA queries can be generalised to allow for highly efficient structure-guided query answering beyond the class of 0MA queries.
\item 
We experimentally verify that structure-guided query processing can indeed provide significant improvements %
for difficult queries.
Our experiments are carried out on 
three concrete, quite different DBMSs: (1) PostgreSQL -- a well-established
row-oriented DBMS, (2) DuckDB -- a recent, in-process, columnar DBMS that implements many modern techniques for query optimisation and execution, and
(3) Spark SQL, which is part of a distributed computing framework for a wide range of 
data analytics tasks.
For all three systems, we show that our rewritings drastically reduce (and in some cases 
even completely eliminate) the number of timeouts on over 300 
challenging queries of 
a recent benchmark 
from~\cite{DBLP:conf/sigmod/ManciniKCMA22} over the 
MusicBrainz dataset~\cite{musicbrainz}.
\end{itemize}

\paragraph{Related Work} 
Yannakakis' algorithm has received a lot of 
attention in the Database Theory community in the context of identifying classes of queries that allow for particularly efficient enumeration of query result, namely linear-time pre-processing and constant delay. 
This line of research was initiated by Bagan et al.~\cite{DBLP:conf/csl/BaganDG07} 
and has triggered a lot of follow-up work 
such as \cite{DBLP:journals/mst/CarmeliK20,DBLP:journals/tods/CarmeliK21,DBLP:conf/pods/CarmeliTGKR21,DBLP:journals/tods/CarmeliZBCKS22,DBLP:conf/icdt/GeckKSS22,DBLP:conf/pods/LutzP22} since then. 

On the Database Systems side, 
a combination of traditional query optimisation with 
Yannakakis-style query evaluation was first presented in \cite{DBLP:conf/icde/GhionnaGGS07}, 
building upon results 
from~\cite{DBLP:conf/pods/ScarcelloGL04}. In \cite{DBLP:conf/icde/GhionnaGGS07}, 
the authors 
present the integration of structure-guided query processing, 
based on hypertree decompositions, into a traditional 
query engine. The integration works via rewriting to not further specified
``nested SQL subqueries''.
The approach, which is tightly coupled with a concrete DBMS,  (namely PostgreSQL) aims at improving the performance on 
queries from the TPC-H benchmark. The performance gain reported in the
paper reaches up to 20\% for two concrete TPC-H queries (queries $Q_5$ and $Q_8$). 
Large join queries or the 
identification of particularly well-suited classes of queries for a structure-guided approach 
were not in the scope of the paper. 

As mentioned above, several successful research
prototypes based 
on Yannakakis-style query evaluation have been presented recently: 
The DunceCap query compiler presented in \cite{DBLP:conf/sigmod/PerelmanR15,DBLP:conf/sigmod/TuR15, DBLP:journals/tods/AbergerLTNOR17} 
combines Yannakakis-style query evaluation 
with worst-case optimal join techniques -- 
primarily targeting small, cyclic graph queries.
Similarly, \citet{DBLP:journals/tods/AbergerLTNOR17} use generalized hypertree decompositions as a form of query plans 
in combination with multi-way joins and further advanced techniques to obtain highly promising results in a graph database setting.
Further applications and extensions of 
Yannakakis' algorithm comprise dynamic query evaluation 
\cite{DBLP:conf/sigmod/IdrisUV17,%
DBLP:journals/vldb/IdrisUVVL20,DBLP:journals/corr/abs-2301-04003}, 
comparisons spanning several 
relations~\cite{DBLP:conf/sigmod/0001022},
queries involving theta-joins~\cite{DBLP:journals/vldb/IdrisUVVL20},
and privacy protecting query 
processing~\cite{DBLP:conf/sigmod/Wang021}.
Finally, we note that Yannakakis-style query evaluation is also well suited for distributed data processing. The theoretical foundation was already laid in \cite{DBLP:journals/jacm/GottlobLS01} by
showing that acyclic conjunctive query evaluation falls into the highly parallelisable complexity 
class LogCFL. 
This result was later generalised to hypertree decompositions in \cite{DBLP:journals/jcss/GottlobLS02}.
An actual prototype system implementing Yannakakis' algorithm in MapReduce was 
presented in \cite{DBLP:conf/icdt/AfratiJRSU17}.

To conclude, there are many theoretical studies and also concrete
implementations that underline the great potential of Yannakakis-style 
query evaluation. However, all these successful implementations were 
achieved by standalone research systems. None of them  
studied the viability of Yannakakis' algorithm in standard DBMSs.  
Concurrently and independently of this work, 
\citet{DBLP:journals/corr/abs-2302-13140}
actually did query rewriting on top of several standard DBMSs (including
PostgreSQL, DuckDB and Spark SQL) in case
of acyclic conjunctive queries. However, their work specifically aims at 
 efficient computation of the {\em difference}
 between (small) conjunctive queries   
 rather than the basic query evaluation (in particular, of large queries) 
 considered here.

\paragraph{Structure of the paper}
In Section~\ref{sect:Preliminaries}, we recall some basic definitions and results relevant to 
our work. In Section \ref{sect:theory},
we introduce the class of {\em zero-materialisation answerable} queries 
(0MA queries, for short), 
which can be evaluated by semi-joins only. More general queries will be 
discussed in Section \ref{sect:generalqueries}.
The general idea of our rewriting-based approach for 
combining structure-guided query evaluation with traditional DBMS technology 
and the experimental results thus obtained 
are presented in Section~\ref{sec:expeval}. 
A conclusion and a detailed discussion of directions for future work are given in Section~\ref{sec:conclusion}.
\ifArxivVersion

In the appendix, we provide further details on various aspects of our experiments. 
More specifically, in Appendix \ref{sect:Implementation},
we provide details of the system designed for our experiments. Further experimental results
(in particular, on memory and communication cost) are given 
in Appendix~\ref{sect:spark:appendix}. And, finally, some more details on 
cyclic queries are presented in Appendix~\ref{sec:cycles}. 
Moreover, the whole collection of results produced in our experiments (query rewritings, hypergraphs, output data, performance measurements) as well as instructions for reproducing our experiments are available on 
Figshare:~\figARXIV.
\else
Further implementation details of the system designed for our experiments are provided in 
Appendix~\ref{sect:Implementation}.

Due to space limitations, some details of our experimental results had to be left out here. They are 
given in 
\cite{DBLP:journals/corr/abs-2303-02723}.
Moreover, the whole collection of results produced in our experiments (query rewritings, hypergraphs, output data, performance measurements) as well as instructions for reproducing our experiments are available on Figshare:~\figEDBT.
\fi

\section{Preliminaries}
\label{sect:Preliminaries}

\noindent
\paragraph{Conjunctive Queries}
The basic form of queries studied here are Conjunctive Queries (CQs). 
We will later introduce also more general forms of queries. 
It is convenient to 
consider CQs as Relational Algebra expressions of the form 
$\pi_U (R_1 \bowtie  \dots \bowtie R_n ),$
where $R_1, \dots, R_n$ are pairwise distinct relations and 
the projection list $U$ consists of attributes occurring in the $R_i$'s.
This restriction of CQs is without loss of generality. 
Indeed, by applying appropriate renaming operations, 
we may always assume that the $R_i$'s are pairwise distinct 
and that equi-joins are replaced by natural joins. 
Moreover, we do not need to make selections explicit since equality conditions on attributes from 
different relations are taken care of by the natural joins and 
equality conditions on attributes of a single relation
can be pushed immediately in front of the corresponding relation
and carried out when the relation is first accessed. By slight abuse of notation, 
we shall use the same symbol $R_i$ to refer also to the relational schema 
(i.e., the set of attributes) of 
a relation $R_i$.

\paragraph{Acyclicity}
Several forms of acyclicity have been studied in the
literature \cite{DBLP:journals/jacm/Fagin83,DBLP:journals/csur/Brault-Baron16}. 
Our notion of acyclicity is the so-called $\alpha$-acyclicity. 
An {\em acyclic conjunctive query} (an ACQ, for short) is a 
CQ 
$Q = \pi_U (R_1 \bowtie  \dots \bowtie R_n )$ 
that has a {\em join tree}, i.e., 
a rooted, 
labelled tree $\langle T,r,\lambda\rangle$ with root $r$, 
such that 
(1) $\lambda$ is a bijection that assigns to each node of $T$ one of the relations
    in $\{R_1,   \dots,  R_n\}$ and
(2) $\lambda$  satisfies the so-called {\em connectedness condition\/}, i.e., 
if some attribute $A$ occurs in both relations $\lambda(u_i)$ and $\lambda(u_j)$
for two nodes $u_i$ and $u_j$, then 
$A$ occurs in the  relation $\lambda(u)$ for every node $u$ along the path between $u_i$ and $u_j$.
Deciding if a CQ is acyclic and, in the positive case, constructing a join tree 
can be done very efficiently by the GYO-algorithm
(named after the authors 
of  
\cite{report/toronto/Gra79,DBLP:conf/compsac/YuO79}).

It is convenient to introduce the following notation: for a node $u$ of $T$, 
we write $T_u$ to denote the subtree of $T$ rooted at $u$. 
Moreover, for every node $u$ of $T$ and every subtree $T'$ of $T$, 
we write $\Att(u)$ for the attributes of the relation $\lambda(u)$ 
and $\Att(T')$  for the attributes occurring in any of the relations $\lambda(u')$ 
for all nodes $u'$ in  $T'$. 

\paragraph{Yannakakis' algorithm.}
In  \cite{DBLP:conf/vldb/Yannakakis81},
Yannakakis showed that ACQs can be evaluated 
in time $O( (||D|| + ||Q(D)||) \cdot ||Q||)$, i.e., 
linear w.r.t.\ the size of the input and output data and 
w.r.t.\ the size of the query. 
This bound applies to both, set and bag semantics.
Let us ignore the projection for a while and consider 
an ACQ $Q$ of the form  $R_1 \bowtie \dots \bowtie R_n$ with join tree 
$\langle T,r, \lambda \rangle$. 
Yannakakis' algorithm  (no matter whether 
we consider set or bag semantics) 
consists of a preparatory step followed by 
3 traversals~of~$T$: 

In the {\em preparatory step} (also referred to as {\em Setup stage} in the sequel), 
we associate with each node $u$ in the join tree $T$ the relation 
$\lambda(u)$. If the CQ originally contained selection conditions on attributes of 
relation $\lambda(u)$, then we can now apply this selection. 
The 3 traversals of $T$ consist of 
(1) a bottom-up traversal of semi-joins, (2) a top-down traversal of semi-joins,
and (3) a bottom-up traversal of joins. 
Formally, let $u$ be a node in $T$ with  
child nodes $u_1, \dots, u_k$  of $u$ and let 
relations $R$, $R_{i_1}, \dots, R_{i_k}$ be associated with 
the nodes  $u$, $u_1, \dots, u_k$ at some stage of the computation. 
Then we set 
(1) $R = (((R  \ltimes R_{i_1}) \ltimes R_{i_2}) \dots)  \ltimes  R_{i_k}$,
(2) $R_{i_j} = R_{i_j}  \ltimes R$ for every $j \in \{1, \dots, k \}$, and 
(3) $R = (((R  \bowtie R_{i_1}) \bowtie R_{i_2}) \dots)  \bowtie  R_{i_k}$ in the 3 traversals (1), (2), and (3), respectively.
The final result of the query is the resulting relation associated with 
the root node $r$ of $T$.
\nop{********************************
If this relation is empty after the first bottom-up traversal, then so is the final result 
and the traversals (2) and (3) can be omitted.

If the first bottom-up traversal produces the empty relation 
at the root node $r$ of $T$, then 
so is the final result,
and the traversals (2) and (3) can be omitted.

In words, the first bottom-up traversal restricts the relation associated with each node $u$ in $T$ 
to those tuples that join with the relations at the nodes in the subtree below $u$. 
The top-down traversal further restricts the relation at each node $u$ to those tuples that 
join with {\em all} relations of $Q$, i.e., these tuples can be extended to answer tuples of $Q$. 
The second bottom-up traversal computes for each node $u$ the join of all relations in the subtree rooted at $u$.
********************************}

We can now easily integrate the  projection $\pi_U$ into
this algorithm by projecting out in the second bottom-up traversal all attributes that neither occur
in $U$ nor further up in $T$. Of course, attributes neither occurring in $U$ nor in any join condition can 
already be projected out as part of the 
preparatory  step.

\nop{********************************
The correctness of Yannakakis' algorithm is seen by 
a closer look at the relations resulting from each 
traversal of $T$. For a node 
$u$ of $T$, let $R$ denote the original relation associated with $u$, i.e., 
$\lambda(u) = R$, and let 
$R_{i_1}, \dots, R_{i_\ell}$ denote the relations
labelling the nodes in the subtree $T_u$.
Moreover, let $R'$ denote the relation resulting from each traversal of the join tree. 
We 
write (1), (2), (3) to denote the 3 traversals of the join tree.  
Then it holds: 

after (1), we have  $R' = \pi_{\Att(u)} (R_{i_1} \bowtie  \dots \bowtie R_{i_\ell} )$,

after (2), we have  $R' = \pi_{\Att(u)} (R_1 \bowtie  \dots \bowtie R_n )$,

after (3), we have $R' = \pi_{\Att(T_u)} (R_1 \bowtie  \dots \bowtie R_n )$.
********************************}

\nop{******************************
\begin{itemize}
    \item after the first bottom-up traversal: \\
    $R' = \pi_{\Att(u)} (R_{i_1} \bowtie  \dots \bowtie R_{i_\ell} )$.
    \item after the top-down traversal: \\
    $R' = \pi_{\Att(u)} (R_1 \bowtie  \dots \bowtie R_n )$.
    \item after the second bottom-up traversal: \\
    $R' = \pi_{\Att(T_u)} (R_1 \bowtie  \dots \bowtie R_n )$.
\end{itemize}
******************************}

\nop{******************************
\paragraph{Query decompositions}
The generalisation of query acyclicity has been studied extensively for the last two decades, 
resulting in a rich theory of more advanced decompositions of queries in tree-like structures, 
such as Hypertree Decompositions (HDs)~\cite{DBLP:journals/jcss/GottlobLS02}, 
Generalised Hypertree Decompositions (GHDs)~\cite{DBLP:journals/ejc/AdlerGG07}, and 
Fractional Hypertree Decompositions (FHDs)~\cite{2014grohemarx}.

One way of defining the various forms of decompositions is to allow, in addition to the 
given CQ $Q = \pi_U (R_1 \bowtie  \dots \bowtie R_n )$
the introduction of auxiliary views. 
Each such auxiliary view $V$ is a relation with  attribute set $\Att(V) \subseteq \bigcup_{i=1}^n \Att(R_i)$ 
such that $\pi_{\Att(V)} (R_1 \bowtie  \dots \bowtie R_n ) \subseteq V$ holds. Suppose that 
we introduce auxiliary views $V_1, \dots, V_m$ with these properties. It is easy to verify that 
$Q = \pi_U (R_1 \bowtie  \dots \bowtie R_n )$ and 
$Q' = \pi_U (R_1 \bowtie  \dots \bowtie R_n \bowtie V_1 \bowtie  \dots \bowtie V_m$ are equivalent,
i.e., in order to get the answer to query $Q$, we may as well evaluate query $Q'$.
The idea of these auxiliary views is to make the query acyclic. 
Then a {\em decomposition of $Q$} is 
a join tree of a query $Q'$ obtained from $Q$ by adding auxiliary views. 

We restrict ourselves here to GHDs. In this case, only auxiliary views are allowed that can be 
obtained from the relations $R_1, \dots, R_n$ via joins and projection, i.e., each 
auxiliary view $V$ is of the form $V = \pi_{S} (R_{i_1} \bowtie  \dots \bowtie R_{i_k})$
with $S \subseteq \bigcup_{j=1}^k \Att(R_{i_j})$.
The width of a GHD is the maximum $k$ of the joins needed to define the auxiliary views. 
The {\em generalised hypertree-width} of a CQ $Q$ (denoted $\ghw(Q)$) 
is the minimum width over all its GHDs.
ACQs are precisely the CQs with $\ghw = 1$, i.e, no auxiliary views are needed.
Recent  empirical studies of millions of queries from query logs 
\cite{DBLP:journals/vldb/BonifatiMT20}
and 
benchmarks
\cite{DBLP:journals/jea/FischlGLP21}
have shown that the vast majority of CQs has $\ghw \leq 2$,
i.e., they are either acyclic or ``almost'' acyclic.
******************************}

\paragraph{Beyond Conjunctive Queries}
For queries beyond CQs, we shall use the (extended) Relational 
Algebra notation from \cite{DBLP:books/daglib/0010423}. 
To be consistent with SQL, 
we consider Relational Algebra with bag semantics
throughout this paper.
In addition to the operators $\pi$ (projection), $\sigma$ (selection), 
and $\bowtie$ (join), we also allow 
$\gamma_U$ (group-by) and $\delta$
(duplicate elimination). In case of the group-by operator, the subscript $U$ 
is a list of attributes $A$ and aggregate expressions of the form $g(A)$  with 
$g \in \{$\,MIN, MAX, COUNT, SUM, AVG\,$\}$. 
Moreover, 
we tacitly assume
that relations and attributes may be renamed.

\section{0MA  Queries}
\label{sect:theory}

It is well known \cite{DBLP:journals/jacm/GottlobLS01} that for Boolean ACQs (i.e., queries where we are only interested if the answer is non-empty), 
Yannakakis' algorithm can be stopped after the first bottom-up traversal. 
Indeed, if at that stage the relation associated with the root node of the join tree
is non-empty, then so is the query result. Most importantly, for queries of this type, the most expensive part of the evaluation (i.e., the joins in the second bottom-up traversal) can be completely omitted. %
The next example illustrates that such favourable behaviour is by no
means restricted to Boolean queries.

\begin{example}
  \label{ex:sca}
  Consider an excerpt of a university schema with 
  relations 
  $\mathsf{exams}(\mathrm{cid}$, $\mathrm{student}$, $\mathrm{grade})$
  and
  $\mathsf{courses}(\mathrm{cid}, \mathrm{faculty})$. Querying each student's lowest grade in courses of the Biology faculty is naturally stated in SQL as follows.
  {\small
  \begin{lstlisting}[language=SQL]
    SELECT   exams.student, MIN(exams.grade) 
    FROM     exams,courses
    WHERE    exams.cid=courses.cid
             AND courses.faculty='Biology'
    GROUP BY exams.student;
  \end{lstlisting}
  }

\noindent
Ignoring the GROUP BY clause for a while, the query 
involving only two relations is
trivially acyclic. In the join tree consisting of 2 nodes, 
we choose as root the node labelled by the \textsf{exams}-relation. 
After the first bottom-up traversal, this relation contains 
all \textsf{exams}-tuples that join with the \textsf{courses}-relation restricted to those tuples with faculty = 'Biology'. Hence, if we now also take the GROUP BY clause 
into account, answering the query is 
possible by only looking at the \textsf{exams}-relation 
-- without the need for the remaining two traversals of Yannakakis' algorithm. 
\hfill $\diamond$
\end{example}

In this section,  we want to identify a whole family of 
queries whose evaluation only requires the first bottom-up traversal 
of Yannakakis' algorithm. 
To this end, we introduce the class of \emph{\queryname (\qnameshort)} queries and we will illustrate the usefulness of this class 
by various examples. In particular, Boolean queries and the 
query from Example~\ref{ex:sca} are contained in this class. 
The performance gain attainable when answering \qnameshort queries
will be demonstrated experimentally in Section~\ref{sec:expeval}.

\begin{definition}
\label{def:SCA}
\mbox{}
\begin{itemize}
    \item A query $Q$ is in \emph{aggregation normal form}\footnote{Note that the $\gamma$ operator also implicitly projects to some subset of attributes. The projection $\pi_S$ is thus not strictly necessary and is only added for clarity.}  if it is of the form 
    $\gamma_U(\pi_S(Q'))$, where   
    $Q'$ is a query consisting only of natural joins and selection. 
    \item For a query $Q$ in aggregation normal form, we say that $Q$ is \emph{guarded},
    if $Q'$ mentions a relation $R$ with 
    $\Att(S) \subseteq \Att(R)$, i.e., $R$ contains all attributes 
    occurring in the GROUP BY clause (aggregate or not). If this is the case, we say that $R$ \emph{guards} 
    query $Q$ or, equivalently, 
    $R$ is a \emph{guard} of $Q$.
    \item We say that a query $Q = \gamma_U(\pi_S(Q'))$ is \emph{set-safe} if it is equivalent to $\gamma_U(\delta(\pi_S(Q')))$, i.e., duplicate elimination before the GROUP BY does not change the 
    meaning of the query.
    \item A query $Q$ in aggregation normal form is called 
    \emph{zero-mate\-rialisation answerable} (\emph{0MA}) if it is guarded and set-safe.
\end{itemize}
\end{definition}

\noindent
As far as the notation is concerned, 
recall from Section~\ref{sect:Preliminaries}
that the restriction to natural joins and top-level projection is without loss of generality and it only serves to simplify 
the notation. This is also the case in the above
definition. In our examples, we may 
freely lift this restriction if it is convenient. 
In contrast to the restricted notation of CQs 
in Section~\ref{sect:Preliminaries}, we now prefer to make selection explicit in the 
``inner'' query $Q'$ -- 
in addition to the natural joins. 

Despite the various technical constraints, 
\qnameshort queries still cover many common query patterns. 
Clearly, the restriction to aggregation normal form 
matches the standard use of aggregates
in SELECT-FROM-WHERE-GROUP BY statements in SQL. 
Also the further restrictions imposed by 0MA queries 
are met by many common query patterns observed in practice. 
Boolean ACQs mentioned above (e.g., realised by a query of the form  SELECT 1 FROM ....) 
are a special case of 0MA queries, where we simply leave out the grouping, and the projection is to the empty set of attributes.

We next verify that also the query from Example~\ref{ex:sca}
is 0MA. By slightly simplifying the subscripts
(in particular, abbreviating attribute names), the query  
translates to the following Relational Algebra query $Q$:
  \[
    \gamma_{\mathrm{stud}, \mathrm{MIN(grad)}}(\pi_{\mathrm{stud},\mathrm{grad}}
    (\mathsf{exams} \bowtie \sigma_{\mathrm{faculty}=\mathtt{'Biology'}}(\mathsf{courses})))
  \]
Clearly, query $Q$ is \queryname, since relation $\mathsf{exams}$ (containing both attributes 
$\mathrm{student}$ 
and 
$\mathrm{grade}$) is a guard of $Q$ and  aggregation via MIN (or MAX) is always set-safe.

We now formally prove that acyclic 0MA queries may 
indeed be evaluated without the 
join-phase of Yannakakis' algorithm. That is,
these queries can be evaluated 
via aggregate/group processing over a single relation of the database that has been reduced by the semi-joins of the first bottom-up traversal.
\nop{***************
This avoidance of all joins can lead to significant improvements in situations where large join results are unavoidable or finding a good join ordering is difficult by current query planning methods. 
***************}

  \begin{theorem}
    \label{thm:sca}
  Let $Q=\gamma_{U}(\pi_S(Q'))$ be a \qnameshort query in aggregation normal form  such that $Q'$ is an ACQ, and let $D$ be an arbitrary database. 
  Let $\langle T,r, \lambda \rangle$ be a  join tree of $Q'$ such that the root $r$ of $T$ is
  labelled by relation $R$ that guards $Q$. Let $R'$ be the relation associated with node $r$ 
  after the first bottom-up traversal of Yannakakis' algorithm. Then
  the equality $Q(D) = \gamma_{U}(\delta(\pi_S(R')))$ holds.
\end{theorem}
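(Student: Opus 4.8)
\emph{Step 1: reduce to the inner query.}
Since $Q$ is 0MA it is set-safe, so $Q(D) = \gamma_{U}(\delta(\pi_S(Q'(D))))$; as the claimed identity is $Q(D) = \gamma_{U}(\delta(\pi_S(R')))$, it suffices to prove $\delta(\pi_S(Q'(D))) = \delta(\pi_S(R'))$, and applying $\gamma_U$ to both sides then finishes. Both sides of this reduced goal are duplicate-free, so we are really comparing the underlying \emph{sets} of tuples; in particular bag multiplicities play no role from here on, which lets us sidestep the (mild) bag-semantics subtleties of Yannakakis' algorithm. Using the elementary identities $\delta \circ \pi_S = \delta \circ \pi_S \circ \delta$ and $\pi_S \circ \pi_{\Att(R)} = \pi_S$ (the latter valid because $R$ guards $Q$, hence $\Att(S) \subseteq \Att(R)$), it is enough to establish $\delta(R') = \delta(\pi_{\Att(R)}(Q'(D)))$.

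\emph{Step 2: the semi-join phase.}
Here I would invoke the classical invariant of the first bottom-up traversal of Yannakakis' algorithm~\cite{DBLP:conf/vldb/Yannakakis81}: after the Setup stage has pushed the selections of $Q'$ into the corresponding nodes (legitimate, since $Q'$ consists only of joins and selections, and a join tree of $Q'$ remains a join tree once the selections are applied at the nodes), the relation obtained at a node $u$ after the bottom-up pass has the same underlying set as $\pi_{\Att(u)}(J_u)$, where $J_u$ denotes the natural join of all relations $\lambda(u')$ over the nodes $u'$ of the subtree $T_u$. Instantiating this at the root, where $T_r = T$ (so $J_r = Q'(D)$) and $\Att(r) = \Att(R)$, yields $\delta(R') = \delta(\pi_{\Att(R)}(Q'(D)))$. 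If one prefers not to cite this, it follows by a straightforward induction on $T$ whose only substantive ingredient is the connectedness condition of the join tree, which guarantees that a tuple of $R$ survives the entire bottom-up pass precisely when it extends to a tuple of the full join.

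\emph{Step 3: assemble.}
Combining Steps 1 and 2,
\[ \delta(\pi_S(R')) \;=\; \delta(\pi_S(\delta(R'))) \;=\; \delta\big(\pi_S(\pi_{\Att(R)}(Q'(D)))\big) \;=\; \delta(\pi_S(Q'(D))), \]
using $\delta\circ\pi_S = \delta\circ\pi_S\circ\delta$ for the first and third equality and $\pi_S\circ\pi_{\Att(R)} = \pi_S$ for the second. Plugging this into Step 1 gives $Q(D) = \gamma_{U}(\delta(\pi_S(R')))$, as required.

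\emph{Main obstacle.}
There is no real obstacle beyond correctly threading $\delta$ through the projections; the only non-bookkeeping ingredient is the Step~2 invariant, and since we immediately apply $\delta$ we need only its set-level version, which is exactly the classical correctness of the Boolean/semi-join part of Yannakakis' algorithm. It is worth noting that the two defining conditions of 0MA are each used exactly once -- set-safety in Step~1 (to legally insert the $\delta$) and guardedness in Step~3 (to discard, before projecting onto $S$, all attributes not present in the root relation $R$) -- which is precisely why both conditions appear in Definition~\ref{def:SCA}.
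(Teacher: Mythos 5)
Your proposal is correct and follows essentially the same route as the paper's proof sketch: both use set-safety to insert the duplicate elimination, guardedness ($\Att(S)\subseteq\Att(R)$) to project through the root relation, and the standard set-level property of the first bottom-up semi-join pass at the root, namely $\delta(R')=\delta(\pi_{\Att(R)}(Q'(D)))$. The only difference is presentational -- the paper argues the two inclusions of this equality directly, whereas you cite it as the classical semi-join-reducer invariant (or note it follows by induction via the connectedness condition) -- so the substance is the same.
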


\begin{proof}[Proof Sketch]
After the first bottom-up traversal, all tuples 
in a relation associated with a node in $T$ actually join with all relations in the subtree below. 
Since $R'$ is the relation at the root, every tuple $r \in R'$ extends to a result of $Q'$. 
Since $Q$ is guarded, we have that $S$ is a subset of attributes in $R'$ and thus 
$\pi_{S}(Q'(D)) \supseteq \delta(\pi_S(R'))$ and, therefore,
also 
$\delta(\pi_S(Q'(D))) \supseteq \delta(\pi_S(R'))$. 

On the other hand, since $R$ is part of $Q'$, which consists only of natural joins and selection, any tuple in $Q'(D)$ must be consistent with $R$. Since every tuple in $Q'(D)$ must also be consistent with all other relations mentioned in $Q'$, it must also be consistent with $R'$ and,
therefore,  
$\delta(\pi_S(Q'(D))) \subseteq \delta(\pi_S(R'))$ holds. 
Moreover, as $Q$ is set-safe, we also
have 
$\gamma_{U}(\pi_S(Q'(D)))
=
\gamma_{U}(\delta(\pi_S(Q'(D))))$ and, hence, 
$\gamma_{U}(\delta(\pi_S(Q'(D))))=\gamma_{U}(\delta(\pi_S(R')))$.
\end{proof}

Note that the requirement in Theorem~\ref{thm:sca} 
that the guard $R$ must be the label of the root node of a join tree of $Q'$ does not impose any additional restrictions apart from the conditions that $R$ must be a guard and $Q'$ must
be an ACQ. Some node in the join is guaranteed to be labelled by $R$,
and we can always choose this 
particular node as the root of the join tree.

It is important to note that 
set-safety is not required
due to any technical issues with bag semantics. The restriction to set-safety 
is only needed to identify queries whose answer can be determined without
knowing the 
exact multiplicity of a tuple in the answer.
As far as standard aggregate functions are concerned, this
always holds for MAX and MIN
as we have seen in Example~\ref{ex:sca}.
In contrast, other standard aggregates, such as SUM or COUNT are, in general, not set-safe.
They nevertheless can be answered efficiently when 
knowing the multiplicity of each tuple in the result of the join query $Q'$. In cases where the query is guarded, 
it may indeed be possible to compute these multiplicities without materialising joins by adapting dynamic programming algorithms for counting homomorphisms, see e.g.,~\cite{DBLP:journals/siamcomp/FlumG04,DBLP:journals/jcss/PichlerS13}. 
Moreover, these aggregates may actually be used in patterns that are set-safe, e.g., COUNT(DISTINCT~$\dots$) constructs in SQL. Indeed, it is easy to see that the combination with 
DISTINCT can make any aggregate set-safe. Furthermore, trivial use of $\gamma$ as projection (all attributes are grouping attributes) also covers the enumeration of distinct tuples as a set-safe operation.
In practice, even more cases may be set-safe due to constraints on the data such as, for instance, counting the different values of an attribute with a UNIQUE constraint. In light of the promising 
results for \qnameshort queries in Section \ref{sec:expeval}, we consider the study of ways to relax this restriction -- 
e.g., 
in the presence of common database constraints
as in Example~\ref{ex:tpch:one}
-- as a worthwhile area for future work.

\paragraph{Deciding the 0MA Property}
It is natural to consider the question of deciding whether a given query is 0MA. We give a brief informal discussion of why this is not of particular interest in our case. First, it is clear that deciding whether a query is guarded is trivial, and only deciding the set-safe condition is of any real concern. 
For the inner query $Q'$, the set-safety status boils down to the question if it can return duplicate results or not, which is well understood and easy to check (recall that we are restricting ourselves to CQs with some extensions 
such as GROUP BY, HAVING, aggregates; so undecidability results for FO queries such as non-emptiness do not apply here):
if multiset input relations are allowed, then every query may possibly return duplicate tuples; otherwise any query where some attribute is projected out can return duplicates.

Consequently, only the semantics of the aggregate functions themselves are the important factor for set-safety.
In general, set-safety is a non-trivial property of the aggregate functions and thus expected to be undecidable if we allow {\em arbitrary} 
computable functions as
aggregates. 
However, we are interested in the concrete behaviour of 
current DBMSs, which typically only offer a small fixed vocabulary of aggregation functions. For instance, the ANSI SQL standard specifies 28 possible aggregation functions and they are easy to check for set-safety case by case, 
without the need for a general procedure to check set-safety of arbitrary 
functions. 
\section{More General Queries}
\label{sect:generalqueries}
In this section, 
we inspect several situations in which we are not dealing with 
acyclic CQs and/or not zero-ma\-teria\-lisation answerable
queries, and where the performance gain achieved by a short-cut
in Yannakakis' algorithm is nevertheless attainable.

Recall that we have omitted a HAVING clause from our 
aggre\-gation normal form in Definition \ref{def:SCA}. 
If we have a 0MA query
with a HAVING clause on top of it
(see, e.g., Example~\ref{ex:tpch:one} below),
then we can still evaluate the 0MA query without materialising 
any joins and simply filter the result by the HAVING condition  afterwards.

More generally, the optimisation from 
Theorem~\ref{thm:sca} is applicable whenever 
some part of a query satisfies the 0MA condition. 
For instance, subqueries with the EXISTS operator are actually Boolean queries and, as such,
0MA -- provided that they are ACQs. 

\nop{*********************************
Also subqueries with the ALL or ANY operator are 
guaranteed to be set-safe. Hence, in case they are also guarded, they are 0MA.  
We illustrate this by the following example.

\begin{example}
  \label{ex:unischema:all}
In the university mentioned in Example~\ref{ex:sca}, a student can tutor other students for some courses. This information is represented in the relation $\mathsf{tutors}(\mathrm{stud1}$, $ \mathrm{stud2}$, $\mathrm{cid})$. The SQL query retrieving all tutors supervising only students who achieved worse grades than their own in the same subject is stated as follows.
  {\small
  \begin{lstlisting}[language=SQL]
    SELECT main.student 
    FROM exams AS main
    WHERE main.grade > ALL
      (SELECT sub.grade
       FROM tutors, exams AS sub
       WHERE tutors.cid = sub.cid
         AND tutors.stud1 = main.student
         AND tutors.stud2 = sub.student);
  \end{lstlisting}
  }

\end{example}

\noindent
RP: nice example for the ADBS lecture: first we can replace the ``> ALL'' condition by MAX; 
then we add tutors.stud1 to the SELECT statement of the query (attention: then the subquery is no longer 0MA!). 
etc.?!?

\begin{example}
  \label{ex:unischema:one}
Reconsider the university schema of Example~\ref{ex:sca}, 
now enriched by two relations $\mathsf{enrolled}(\mathrm{student},\mathrm{program})$ and
\linebreak
$\mathsf{mandatory}(\mathrm{program}$, $\mathrm{cid})$. The query asking for the students who failed any mandatory course in the program they are enrolled in can be expressed in SQL as follows.

{\small
  \begin{lstlisting}[language=SQL]
    SELECT DISTINCT ex.student 
    FROM exams AS ex
    WHERE ex.grade = 0 AND ex.cid = ANY
      (SELECT mandatory.cid
       FROM enrolled, mandatory
       WHERE enrolled.program = mandatory.program
         AND enrolled.student = ex.student);
  \end{lstlisting}
}

\noindent
Omitting the outer query for a while and looking only at the subquery,
we get the following Relational Algebra expression:
\[
    \pi_{\mathrm{cid}}(\mathrm{enrolled} \bowtie \mathrm{mandatory})
\]
This subquery is clearly 0MA since duplicates are irrelevant in case of 
the ANY keyword.
In principle, the performance gain by 
eliminating the join-phase of Yannakakis' algorithm is particularly significant if it
applies to a correlated subquery (as in the above case). 
But, of course, the above query has an obvious 
decorrelation. Actually, we can even completely avoid the use of a subquery. 
We thus get the following, obviously equivalent query:

{\small
  \begin{lstlisting}[language=SQL]
    SELECT DISTINCT ex.student 
    FROM exams AS ex, enrolled, mandatory
    WHERE ex.grade = 0 
      AND ex.cid = mandatory.cid
      AND enrolled.program = mandatory.program
      AND enrolled.student = ex.student;
  \end{lstlisting}
}

\noindent
Due to the DISTINCT keyword, now the entire query is 0MA.
However, the query is not acyclic. We will come back 
to generalisations of acyclicity via GHDs at the end of 
Section~\ref{sect:generalqueries}.
\hfill $\diamond$
\end{example}
**************************************}

\noindent
The following example involving a 0MA subquery 
is taken from the TPC-H benchmark: 

\begin{example}
  \label{ex:tpch:two}
TPC-H Query 2 contains the following subquery:
  {\small
    \begin{lstlisting}[language=SQL]
      SELECT MIN(ps_supplycost) 
      FROM partsupp, supplier, nation, region
      WHERE p_partkey = ps_partkey AND ...
    \end{lstlisting}
  }
  
\noindent  
where \texttt{p\_partkey} is an attribute coming from the outer query and 
the rest of the 
WHERE clause are equi-joins and selections. 
This subquery is a standard example of a 0MA query, since
aggregation by MIN is always set-safe and the query is clearly guarded by \texttt{partsupp}.
The subquery is correlated inside  the TPC-H Query 2
due to the attribute  \texttt{p\_partkey} from the outer query, 
but it allows for effective decorrelation.
Notably, 
if we consider magic decorrelation~\cite{DBLP:conf/icde/SeshadriPL96}, then 
we would change the select clause of the subquery to \texttt{ps\_partkey, min(ps\_supplycost)}, add a grouping over \texttt{ps\_partkey}, and remove the correlated join with \texttt{p\_partkey}. This transformation preserves guardedness and set-safety 
and we could, in this case, 
combine decorrelation with the efficient evaluation of the decorrelated 0MA subquery according to Theorem~\ref{thm:sca}.
\hfill $\diamond$
\end{example}

Below we see a more complex TPC-H query, 
where optimised evaluation based on 0MA-parts is even possible twice -- once for the subquery and once
for the outer query.

\begin{example}
 \label{ex:tpch:one}
  TPC-H Query 11 is of the following form
  {\small
  \begin{lstlisting}[language=SQL]
    SELECT ps_partkey, 
           SUM(ps_supplycost*ps_availqty)
    FROM partsupp, supplier, nation
    WHERE ps_suppkey = s_suppkey 
      AND s_nationkey = n_nationkey
      AND n_name = 'GERMANY'
    GROUP BY ps_partkey
    HAVING SUM(ps_supplycost*ps_availqty) >
      (SELECT SUM(ps_supplycost*ps_availqty) 
                  * 0.0001
       FROM ...)
\end{lstlisting}
}

\noindent
where the omitted FROM clause of the subquery 
is the same as the FROM clause of the outer query. 
That is, the subquery is almost the same as the 
outer query: we just leave out the grouping by 
$\mathtt{ps\_partkey}$, and the sum over 
$\mathtt{ps\_supplycost*ps\_availqty}$
is now taken over all 
$\mathtt{ps\_partkey}$'s 
and is multiplied by 0.0001.

At its core, this SQL query can be evaluated via a 0MA query of the
form $Q=\gamma_{U}(\pi_S(Q'))$, where $Q'$ represents the
join query on the three relations, and
$$U =\mathtt{ps\_partkey},\mathtt{ps\_suppkey},\mathtt{ps\_supplycost*ps\_availqty}.$$ Note that keeping \texttt{ps\_suppkey} in the grouping at this step is important to observe that the essence of this query is set-safe.
The result of both, the outer query and the subquery in the HAVING clause, can be directly obtained from $Q$, 
leaving only a final filtering step. 

We  now analyse why $Q$ is 0MA.  While \texttt{partsupp} clearly guards the
query, observing set-safety requires a small but natural step beyond
the technical definition above. In TPC-H, there are constraints on the
database that require that \texttt{s\_suppkey} and \texttt{n\_nationkey}
be keys of \texttt{supplier} and \texttt{nation},
respectively. Therefore, every tuple in \texttt{partsupp} can have
only one join partner in \texttt{supplier}, and the result has only
one join partner in \texttt{nation}. Furthermore, the projection on
$Q'$ retains the key (\texttt{ps\_partkey}, \texttt{ps\_suppkey}) letting us observe overall
that every tuple in the result of
$Q'$ is in fact distinct and, as a consequence, $Q$ is also
set-safe.
\hfill $\diamond$
\end{example}

We conclude this section by briefly discussing CQs to which 
Theorem~\ref{thm:sca} is not applicable. That is, either acyclicity or 
the 0MA property is violated. In case of cyclic queries, we may apply 
decomposition methods~\cite{DBLP:journals/jcss/GottlobLS02,2014grohemarx} to turn 
a given CQ into an acyclic one. Since CQs in practice tend to be 
acyclic or almost acyclic~\cite{DBLP:journals/vldb/BonifatiMT20,DBLP:journals/jea/FischlGLP21}, this transformation into an ACQ is 
feasible at the expense of a polynomial blow-up (where the degree of 
the polynomial is bounded by the corresponding notion of width). 
Actually, also the queries from the benchmark 
of~\cite{DBLP:conf/sigmod/ManciniKCMA22}, which we use for our 
experimental evaluation, follow this pattern: 
the vast majority of the queries is acyclic and the rest have
low generalized hypertree width (ghw). 
First preliminary 
experiments with queries of low ghw  (see Section \ref{sec:expeval}) 
suggest that 
the extension 
of structure-guided evaluation 
to cyclic queries is a worthwhile target for 
future research. 

If the 0MA property (in particular, the guardedness) is violated,
then the materialisation of {\em some} joins is usually unavoidable. 
However, this does not mean that {\em all} joins have to be materialised. 
Instead, 
for a query of the form $\gamma_U(\pi_S(Q'))$,
the joins in the second bottom-up traversal of Yannakakis' algorithm 
(and also the semi-joins in the top-down traversal) 
can still be
restricted to a subtree whose relations contain all the attributes in $U$.

\section{Experimental Evaluation}
\label{sec:expeval}

In this section, we  detail the results of our experiments, which demonstrate that 
structure-guided query evaluation can indeed greatly improve performance on challenging join queries.

\subsection{Methodology}
Our goal is to shed light on the benefit of realizing structure-guided query evaluation by common database systems. 
We thus do not want to restrict ourselves to a single system nor to a single 
architecture or a single query planning and execution strategy. We have therefore chosen three DBMSs based on 
different technologies: 
PostgreSQL 13.4~\cite{DBLP:journals/cacm/StonebrakerK91} as a ``classical'' row-oriented relational DBMS, 
DuckDB 0.4~\cite{DBLP:conf/sigmod/RaasveldtM19} as
a column-oriented, embedded database, and 
Spark SQL~3.3~\cite{DBLP:journals/cacm/ZahariaXWDADMRV16} 
as a database engine specifically designed for distributed data processing in a cluster.  

We have implemented a proof-of-concept system, referred to as \ourSystem in the sequel, 
that works by rewriting a query into a sequence of SQL statements which 
express %
Yannakakis' algorithm. 
This makes our approach easily portable and we can apply it to the three chosen DBMSs 
with almost no change to our rewriting method (apart from some minor differences in SQL syntax). 
The huge effort of a  full integration into any of the three systems (let alone, into all of them) 
does not seem to be justified before gathering further information on the potential benefit
of such an integration. 
Moreover, our rewriting-based approach is also applicable to commercial DBMSs, 
where large internal modifications without convincing justification 
are inconceivable.
In our experiments, we compare the performance of join queries in each DBMS 
with the performance of the \ourSystem rewriting, executed by the same system.

The \ourSystem system proceeds in several steps: we first extract the CQ from the given SQL query  and transform it into a hypergraph. From this we compute a join tree by applying a variant of the 
GYO-algorithm~\cite{report/toronto/Gra79,DBLP:conf/compsac/YuO79}. We then generate the SQL statements that correspond to the 
semi-joins and joins of \YA. 
These SQL statements involve the creation of a couple of temporary tables. 
If the original query contains GROUP BY and HAVING clauses or more general selections (beyond equalities), then these can be integrated into the SQL-statement referring to the root node in the final traversal of the join tree. 
The whole rewriting is rather straightforward. 
Further details on the implementation of
\ourSystem 
are provided 
in Appendix~\ref{sect:Implementation}.

\subsection{Experimental Setup}
We perform experiments using a recent benchmark by 
Mancini et al.~\cite{DBLP:conf/sigmod/ManciniKCMA22}, which consists of 435 challenging synthetic join queries over the MusicBrainz dataset~\cite{musicbrainz}. Classic benchmark datasets, such as TPC-H or TPC-DS, are less interesting for our purposes since their focus is not on the complexity of evaluating queries with a large number of potentially expensive joins. The join-order-benchmark (JOB) 
\cite{DBLP:journals/vldb/LeisRGMBKN18}
focuses on the effectiveness of cardinality estimations to produce optimal query plans, but even the worst query plans still require only a single digit
number of minutes for query evaluation on standard systems. In contrast, the benchmark from~\cite{DBLP:conf/sigmod/ManciniKCMA22} that we consider here contains queries with as many as 30 relations and, in many cases, the join processing (as well as planning, see~\cite{DBLP:conf/sigmod/ManciniKCMA22}) is very challenging for modern DBMSs. The queries in this benchmark were created over the MusicBrainz dataset~\cite{musicbrainz} by randomly joining tables along foreign key relationships. This makes the generated queries similar to real-world queries and particularly interesting for our experiments since one would normally expect classical DBMSs to perform particularly well on this kind of queries. Moreover, the large number of generated queries protects against the very significant variance in the evaluation of large queries. Another important reason for choosing the queries from this benchmark is that they 
operate on a {\em publicly available}
dataset, which makes our results fully reproducible. 
This is in sharp contrast to big join queries mentioned in 
other works such as 
\cite{DBLP:journals/pvldb/DieuDFLS09,DBLP:conf/sigmod/NeumannR18}.

We will report on two types of experiments. One set of tests will be referred to as \emph{full enumeration} queries. 
For these, we essentially use the original queries of~\cite{DBLP:conf/sigmod/ManciniKCMA22}. 
However, since these queries contain no projection, we adapt the queries to project to only the join attributes (one attribute per equi-join equivalence class, i.e., no redundant columns)
in order to lessen the role of unimportant I/O. 
In a second set of experiments, we explore the effectiveness of computing aggregate queries with the 0MA property from
Definition~\ref{def:SCA}. For this purpose, we transform each query to compute a "MIN" aggregate for an attribute that we randomly choose from those attributes that already occur in the original query. In the following, we refer to these aggregation variants as the \emph{0MA aggregation} queries. In both cases, the queries are always executed on the standard MusicBrainz dataset.  For all experiments in this section, we use a timeout of 20 minutes for the execution of each query. The experiments on DuckDB and PostgreSQL are performed on a machine with an Intel Xeon Bronze 3104 with 6 cores clocked at 1.7 GHz, and 128 GB of RAM and running Debian 11, using the Linux kernel 5.10.0 with all data stored on an SSD. The default settings of PostgreSQL proved unsatisfactory in our system environment. We therefore explicitly configured PostgreSQL to use at most 8 concurrent working threads and 200 concurrent I/O requests, which turned out to be the most suitable configuration
for our system. For DuckDB, we use all default parameters (leading to full utilisation of all cores and concurrent disk I/O). Our experiments with Spark SQL are performed in a cluster environment with two namenodes and 18 datanodes, with each node having two  XeonE5-2650v4 CPUs with 24 cores (48 per node) and 256 GB RAM. %

\ifArxivVersion
In addition to reporting the results in this section, we also provide all raw data of our experiments and instructions for reproducing them on 
Figshare~\figARXIV.
We include there only the rewritten queries, as were produced by \ourSystem{} and detailed logs of their execution. We omit the original queries from~\cite{DBLP:conf/sigmod/ManciniKCMA22} and we hope to make the full data publicly available in the future.
\else
In addition to reporting the results in this section, we also provide all raw data of our experiments and instructions for reproducing them on Figshare~\figEDBT.
There included are  all of the tested queries together with their hypergraph representations and join trees as well as the corresponding rewritten queries produced by \ourSystem{} and detailed logs of their execution.
\fi
\setlength{\tabcolsep}{2pt}
\newcommand{\p}[0]{\phantom{0}}

\begin{table}[tb]
	\caption{Overview of the queries from~\cite{DBLP:conf/sigmod/ManciniKCMA22}.}
	\label{tab:musicbrainzQueries}
	\centering
    \vspace{-2mm}
	\begin{tabular}{l|c}
	\toprule
	{Statistic} & {Count} \\
	\midrule
	 Number of queries	 & 435 \\
	 Number of joins & Between 1 and 29 \\ 
	 Number of involved tables & Between 2 and 30 \\ 
	 Number of cyclic queries & 84 \\
	\bottomrule
	\end{tabular}
\end{table}

\begin{table}[tb]
	\caption{DuckDB, PostgreSQL, and Spark SQL with or without
	\ourSystem for ACQs over the MusicBrainz dataset. All times are reported in seconds.}
	\small
	\label{tab:comparisonNoCutoff}
	\centering

	\textbf{0MA Aggregation Queries}  \\

	\begin{threeparttable}
	\begin{tabular}{l|c|rrrr}
	\toprule
	{Method} & {Timeouts} & {Max\tnote{1}} & {Mean\tnote{2}} & {Med.\tnote{2}} & Std.Dev.\tnote{2} \\
	\midrule

DuckDB&
58&1169.38&217.9\p&0.44&447.94 \\
DuckDB+\ourSystem&
\textbf{\p0}&15.57&2.31&1.44&2.38\\

	\midrule
PostgreSQL& 
91&1131.08&342.78&2.82&524.16\\
PostgreSQL+\ourSystem& 
\textbf{\p2}&236.75&24.74&5.83&93.73 \\

	\midrule
SparkSQL& 
91&1082.58&365.76&25.35&518.7\p\\
SparkSQL+\ourSystem& 
\textbf{\p3}&214.04&41.12&16.14&113.24 \\

	\bottomrule
	\end{tabular}
	\medskip
		\centering
	\textbf{Full Enumeration Queries}\\
	\begin{tabular}{l|c|rrrr}
	\toprule
	{Method} & {Timeouts} & {Max\tnote{1}} & {Mean\tnote{2}} & {Med.\tnote{2}} & Std.Dev.\tnote{2} \\
	\midrule

DuckDB &
69&770.55&252.27&0.67&473.87\\
DuckDB+\ourSystem& 
\textbf{29}&801.79&121.39&2.34&335.75\\
\midrule
PostgreSQL&
97&1107.66&364.32&4.02&533.47 \\
PostgreSQL+\ourSystem&
\textbf{70}&786.31&283.2\p&25.71&470.2\p\\
\midrule
SparkSQL& 
87&1164.06&358.28&23.91&513.67 \\
SparkSQL+\ourSystem& 
\textbf{29}&876.74&204.11&59.45&335.47\\

	\bottomrule
	\end{tabular}
	\begin{tablenotes}
\item[1] Excludes timeout values. \item[2] Timeout treated as 1200 seconds.
\end{tablenotes}
\end{threeparttable}
\vspace{-5mm}
\end{table}

\nop{**************************

\begin{table}[tb]
	\caption{Showing the same data as in Table~\ref{tab:comparisonNoCutoff}, but only considering queries which take more than 0.5  seconds in the baseline case (i.e. when not using \ourSystem{})}
	\small
	\label{tab:comparisonWithCutoff}
	\centering
	\vspace{-3mm}
	\textbf{Full Enumeration Queries}\\
	\begin{tabular}{l|rrrrr}
	\toprule
	{Method} &  {Max  (s)} & {Mean  (s)} &  {Med. (s)} & Std.Dev. (s) & {Timeouts}  \\
	\midrule
          DuckDB&770.55&46.14&2.85&111.22&69 \\
DuckDB+\ourSystem&801.79&45.35&5.45&126.59&29\\
        \midrule

          PostgreSQL&1107.66&79.53&6.15&202.21&96\\ 
PostgreSQL+\ourSystem&786.31&77.55&26.22&143.32&69 \\

	\bottomrule
	\end{tabular}

	\vspace{2mm}
	\textbf{0MA Aggregation Queries}  \\
	\begin{tabular}{l|rrrrr}
	\toprule
	{Method} &  {Max  (s)} & {Mean  (s)} &  {Med. (s)} & Std.Dev. (s) & {Timeouts}  \\
	\midrule

DuckDB&1192.98&60.12&2.79&168.92&59\\
DuckDB+\ourSystem&14.85&3.56&2.92&2.80&0 \\

	\midrule

PostgreSQL&1131.08&73.95&4.08&196.94&91 \\
PostgreSQL+\ourSystem&236.75&22.27&11.11&30.18&2 \\

	\bottomrule
	\end{tabular}
\end{table}
}

\usetikzlibrary{calc}
\begin{figure}[tb]

\begin{adjustwidth*}{}{-0.5em}

\begin{tikzpicture} [transform shape, scale=0.7]
\begin{axis}[
        xbar,  
        every tick label/.append style={scale=2},
        xmax=230,
        xmin=0,
        width=6cm,
        height=4cm,
        xlabel={DuckDB, Full Enum.},
        symbolic y coords={1,10,100,1000,TO},
        ytick=data,         
        nodes near coords  align={vertical},
        nodes near coords = \rotatebox{0}{{\pgfmathprintnumber[fixed zerofill, precision=0]
        {\pgfplotspointmeta}}},
        point meta=rawx,
    visualization depends on={meta < 25 \as \valueissmall},
    every node near coord/.append style={
            anchor={\ifdim\valueissmall  pt = 1 pt west\else east\fi},
            color={\ifdim\valueissmall  pt = 1 pt black\else black\fi}
    }
    ]
    \addplot + [black,    fill=DuckDB]  coordinates {
                            (189,1)
                            (58,10)
                            (17,100)
                            (18,1000)
                            (69,TO)
    };

\end{axis}
\end{tikzpicture}\begin{tikzpicture} [transform shape, scale=0.7]
\begin{axis}[
        xbar,  
        every tick label/.append style={scale=2},
        xmax=230,
        xmin=0,
        width=6cm,
        height=4cm,
        xlabel={DuckDB + \ourSystem{},  Full Enum.},
        symbolic y coords={1,10,100,1000,TO},
        ytick=data,         
        nodes near coords  align={vertical},
        nodes near coords = \rotatebox{0}{{\pgfmathprintnumber[fixed zerofill, precision=0]
        {\pgfplotspointmeta}}},
        point meta=rawx,
    visualization depends on={meta < 25 \as \valueissmall},
    every node near coord/.append style={
            anchor={\ifdim\valueissmall  pt = 1 pt west\else east\fi},
            color={\ifdim\valueissmall  pt = 1 pt black\else black\fi}
    }
    ]
    \addplot +[black, fill=DuckDB,    postaction={pattern=north east lines, pattern color=white}] coordinates {
                            (89,1)
                            (180,10)
                            (37,100)
                            (16,1000)
                            (29,TO)
    };

\end{axis}
\end{tikzpicture}

\begin{tikzpicture} [transform shape, scale=0.7]
\begin{axis}[
        xbar,  
        every tick label/.append style={scale=2},
        xmax=230,
        xmin=0,
        width=6cm,
        height=4cm,
        xlabel={DuckDB, 0MA Agg.},
        symbolic y coords={1,10,100,1000,TO},
        ytick=data,         
        nodes near coords  align={vertical},
        nodes near coords = \rotatebox{0}{{\pgfmathprintnumber[fixed zerofill, precision=0]
        {\pgfplotspointmeta}}},
        point meta=rawx,
    visualization depends on={meta < 25 \as \valueissmall},
    every node near coord/.append style={
            anchor={\ifdim\valueissmall  pt = 1 pt west\else east\fi},
            color={\ifdim\valueissmall  pt = 1 pt black\else black\fi}
    }
    ]
    \addplot + [black, fill=DuckDB]  coordinates {
                            (217,1)
                            (43,10)
                            (17,100)
                            (15,1000)
                            (58,TO)
    };

\end{axis}
\end{tikzpicture}\begin{tikzpicture} [transform shape, scale=0.7]
\begin{axis}[
        xbar,  
        every tick label/.append style={scale=2},
        xmax=230,
        xmin=0,
        width=6cm,
        height=4cm,
        xlabel={DuckDB + \ourSystem{}, 0MA Agg.},
        symbolic y coords={1,10,100,1000,TO},
        ytick=data,         
        nodes near coords  align={vertical},
        nodes near coords = \rotatebox{0}{{\pgfmathprintnumber[fixed zerofill, precision=0]
        {\pgfplotspointmeta}}},
        point meta=rawx,
    visualization depends on={meta < 25 \as \valueissmall},
    every node near coord/.append style={
            anchor={\ifdim\valueissmall  pt = 1 pt west\else east\fi},
            color={\ifdim\valueissmall  pt = 1 pt black\else black\fi}
    }
    ]
    \addplot + [black, fill=DuckDB,    postaction={pattern=north east lines, pattern color=white}]  coordinates {
                            (119,1)
                            (226,10)
                            (6,100)
                            (0,1000)
                            (0,TO)
    };

\end{axis}
\end{tikzpicture}

\begin{tikzpicture} [transform shape, scale=0.7]
\begin{axis}[
        xbar,  
        every tick label/.append style={scale=2},
        xmax=230,
        xmin=0,
        width=6cm,
        height=4cm,
        xlabel={PostgreSQL, Full Enum.},
        symbolic y coords={1,10,100,1000,TO},
        ytick=data,         
        nodes near coords  align={vertical},
        nodes near coords = \rotatebox{0}{{\pgfmathprintnumber[fixed zerofill, precision=0]
        {\pgfplotspointmeta}}},
        point meta=rawx,
    visualization depends on={meta < 25 \as \valueissmall},
    every node near coord/.append style={
            anchor={\ifdim\valueissmall  pt = 1 pt west\else east\fi},
            color={\ifdim\valueissmall  pt = 1 pt black\else black\fi}
    }
    ]
    \addplot + [black, fill=Postgres]  coordinates {
                            (135,1)
                            (56,10)
                            (41,100)
                            (20,1000)
                            (97,TO)
    };

\end{axis}
\end{tikzpicture}\begin{tikzpicture} [transform shape, scale=0.7]
\begin{axis}[
        xbar,  
        every tick label/.append style={scale=2},
        xmax=230,
        xmin=0,
        width=6cm,
        height=4cm,
        xlabel={PostgreSQL + \ourSystem{}, Full Enum.},
        symbolic y coords={1,10,100,1000,TO},
        ytick=data,         
        nodes near coords  align={vertical},
        nodes near coords = \rotatebox{0}{{\pgfmathprintnumber[fixed zerofill, precision=0]
        {\pgfplotspointmeta}}},
        point meta=rawx,
    visualization depends on={meta < 25 \as \valueissmall},
    every node near coord/.append style={
            anchor={\ifdim\valueissmall  pt = 1 pt west\else east\fi},
            color={\ifdim\valueissmall  pt = 1 pt black\else black\fi}
    }
    ]
    \addplot + [black, fill=Postgres,    postaction={pattern=north east lines, pattern color=Postgres!40}]  coordinates {
                            (56,1)
                            (89,10)
                            (96,100)
                            (40,1000)
                            (70,TO)
    };

\end{axis}
\end{tikzpicture}

\begin{tikzpicture} [transform shape, scale=0.7]
\begin{axis}[
        xbar,  
        every tick label/.append style={scale=2},
        xmax=230,
        xmin=0,
        width=6cm,
        height=4cm,
        xlabel={PostgreSQL, 0MA Agg.},
        symbolic y coords={1,10,100,1000,TO},
        ytick=data,         
        nodes near coords  align={vertical},
        nodes near coords = \rotatebox{0}{{\pgfmathprintnumber[fixed zerofill, precision=0]
        {\pgfplotspointmeta}}},
        point meta=rawx,
    visualization depends on={meta < 25 \as \valueissmall},
    every node near coord/.append style={
            anchor={\ifdim\valueissmall  pt = 1 pt west\else east\fi},
            color={\ifdim\valueissmall  pt = 1 pt black\else black\fi}
    }
    ]
    \addplot + [black, fill=Postgres]  coordinates {
                            (143,1)
                            (62,10)
                            (34,100)
                            (19,1000)
                            (91,TO)
    };

\end{axis}
\end{tikzpicture}\begin{tikzpicture} [transform shape, scale=0.7]
\begin{axis}[
        xbar,  
        every tick label/.append style={scale=2},
        xmax=230,
        xmin=0,
        width=6cm,
        height=4cm,
        xlabel={PostgreSQL + \ourSystem{}, 0MA Agg.},
        symbolic y coords={1,10,100,1000,TO},
        ytick=data,         
        nodes near coords  align={vertical},
        nodes near coords = \rotatebox{0}{{\pgfmathprintnumber[fixed zerofill, precision=0]
        {\pgfplotspointmeta}}},
        point meta=rawx,
    visualization depends on={meta < 25 \as \valueissmall},
    every node near coord/.append style={
            anchor={\ifdim\valueissmall  pt = 1 pt west\else east\fi},
            color={\ifdim\valueissmall  pt = 1 pt black\else black\fi}
    }
    ]
    \addplot + [black, fill=Postgres,    postaction={pattern=north east lines, pattern color=Postgres!40}]  coordinates {
                            (109,1)
                            (95,10)
                            (136,100)
                            (9,1000)
                            (2,TO)
    };

\end{axis}
\end{tikzpicture}

\begin{tikzpicture} [transform shape, scale=0.7]
\begin{axis}[
        xbar,  
        every tick label/.append style={scale=2},
        xmax=230,
        xmin=0,
        width=6cm,
        height=4cm,
        xlabel={SparkSQL,  Full Enum.},
        symbolic y coords={1,10,100,1000,TO},
        ytick=data,         
        nodes near coords  align={vertical},
        nodes near coords = \rotatebox{0}{{\pgfmathprintnumber[fixed zerofill, precision=0]
        {\pgfplotspointmeta}}},
        point meta=rawx,
    visualization depends on={meta < 25 \as \valueissmall},
    every node near coord/.append style={
            anchor={\ifdim\valueissmall  pt = 1 pt west\else east\fi},
            color={\ifdim\valueissmall  pt = 1 pt black\else black\fi}
    }
    ]
    \addplot + [black, fill=Spark]  coordinates {
                            (12,1)
                            (114,10)
                            (101,100)
                            (33,1000)
                            (87,TO)
    };

\end{axis}
\end{tikzpicture}\begin{tikzpicture} [transform shape, scale=0.7]
\begin{axis}[
        xbar,  
        every tick label/.append style={scale=2},
        xmax=230,
        xmin=0,
        width=6cm,
        height=4cm,
        xlabel={SparkSQL + \ourSystem{},  Full Enum.},
        symbolic y coords={1,10,100,1000,TO},
        ytick=data,         
        nodes near coords  align={vertical},
        nodes near coords = \rotatebox{0}{{\pgfmathprintnumber[fixed zerofill, precision=0]
        {\pgfplotspointmeta}}},
        point meta=rawx,
    visualization depends on={meta < 25 \as \valueissmall},
    every node near coord/.append style={
            anchor={\ifdim\valueissmall  pt = 1 pt west\else east\fi},
            color={\ifdim\valueissmall  pt = 1 pt black\else black\fi}
    }
    ]
    \addplot + [black, fill=Spark,    postaction={pattern=north east lines, pattern color=white}]  coordinates {
                            (0,1)
                            (57,10)
                            (163,100)
                            (102,1000)
                            (29,TO)
    };

\end{axis}
\end{tikzpicture}

\begin{tikzpicture} [transform shape, scale=0.7]
\begin{axis}[
        xbar,  
        every tick label/.append style={scale=2},
        xmax=230,
        xmin=0,
        width=6cm,
        height=4cm,
        xlabel={SparkSQL, 0MA Agg.},
        symbolic y coords={1,10,100,1000,TO},
        ytick=data,         
        nodes near coords  align={vertical},
        nodes near coords = \rotatebox{0}{{\pgfmathprintnumber[fixed zerofill, precision=0]
        {\pgfplotspointmeta}}},
        point meta=rawx,
    visualization depends on={meta < 25 \as \valueissmall},
    every node near coord/.append style={
            anchor={\ifdim\valueissmall  pt = 1 pt west\else east\fi},
            color={\ifdim\valueissmall  pt = 1 pt black\else black\fi}
    }
    ]
    \addplot + [black, fill=Spark]  coordinates {
                            (11,1)
                            (117,10)
                            (98,100)
                            (33,1000)
                            (91,TO)
    };

\end{axis}
\end{tikzpicture}\begin{tikzpicture} [transform shape, scale=0.7]
\begin{axis}[
        xbar,  
        every tick label/.append style={scale=2},
        xmax=230,
        xmin=0,
        width=6cm,
        height=4cm,
        xlabel={SparkSQL + \ourSystem{}, 0MA Agg.},
        symbolic y coords={1,10,100,1000,TO},
        ytick=data,         
        nodes near coords  align={vertical},
        nodes near coords = \rotatebox{0}{{\pgfmathprintnumber[fixed zerofill, precision=0]
        {\pgfplotspointmeta}}},
        point meta=rawx,
    visualization depends on={meta < 25 \as \valueissmall},
    every node near coord/.append style={
            anchor={\ifdim\valueissmall  pt = 1 pt west\else east\fi},
            color={\ifdim\valueissmall  pt = 1 pt black\else black\fi}
    }
    ]
    \addplot + [black, fill=Spark,    postaction={pattern=north east lines, pattern color=white}]  coordinates {
                            (10,1)
                            (116,10)
                            (200,100)
                            (22,1000)
                            (3,TO)
    };

\end{axis}
\end{tikzpicture}

\end{adjustwidth*}
\caption{Histograms showing how many instances were solved in each time range, with or without \ourSystem{}, for the three systems studied. }
\label{fig:histogram}

\end{figure}
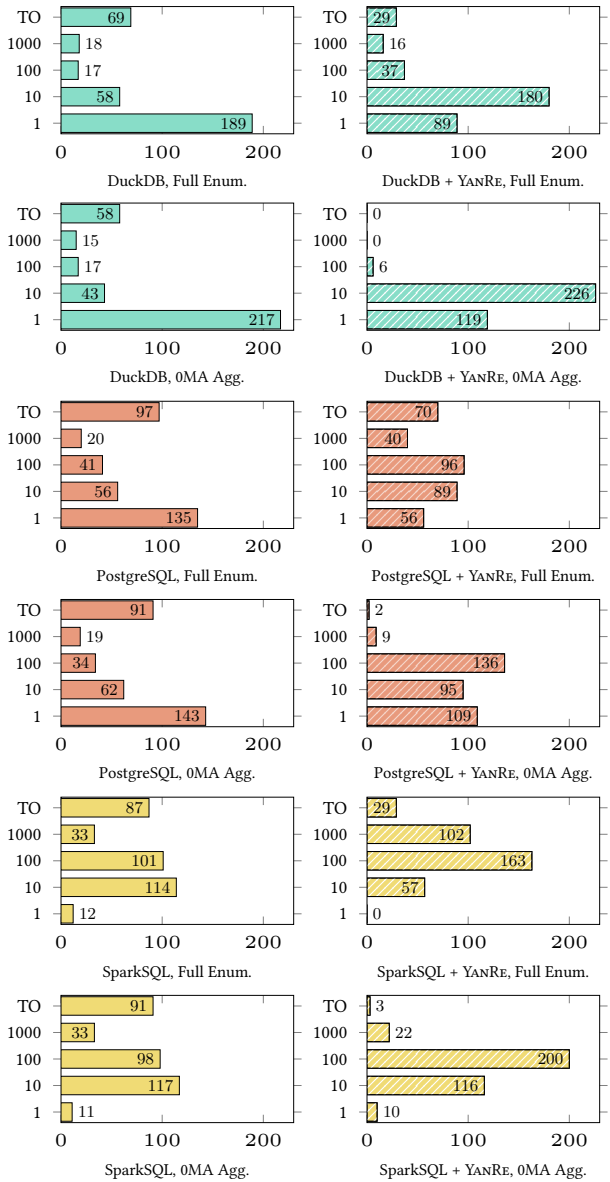

\subsection{Experimental Results}

We primarily concentrate on acyclic queries from the benchmark 
of~\cite{DBLP:conf/sigmod/ManciniKCMA22}. As can be seen in Table~\ref{tab:musicbrainzQueries}, 
the ACQs form the majority of the benchmark, namely 351 out of 435.
Cyclic queries will be briefly discussed separately below. 
\ifArxivVersion
Further details on our experiments are provided in Appendix~\ref{app:AdditionalExp}, 
while further details on cyclic queries are given in Appendix~\ref{sec:cycles}.
\else
Extensive further details on our experiments are provided in the extended version of this paper~\cite{DBLP:journals/corr/abs-2303-02723}.
\fi

Table~\ref{tab:comparisonNoCutoff} summarises our results for the 
ACQs in the benchmark. The Mean, Med. (Median), and Std. Dev. columns report statistical information for the running times of the benchmark queries. Queries timed out (i.e., which did not terminate within 20 minutes), are counted as having running time 20 minutes. The Max column reports the maximum running time of the queries that did not time out. The number of queries that did not terminate within the time limit is stated in the Timeouts column. Recall, that the Spark SQL experiments were performed on a significantly more powerful system and our experiments are not intended or suited for direct comparison of times between different baseline systems.

We see that the number of queries that execute within the time limit 
of 20 minutes 
is significantly higher when using \ourSystem{}: in DuckDB, the use of \ourSystem reduces the number of timeouts from 69 to 29 for full enumeration queries, and from 58 to no timeouts at all for 0MA aggregation queries. 
In Spark SQL, this reduction is from 
87 to 29 and from 91 to 3, respectively. Consequently, we also see an improvement of up to factor 2 in the mean running times\footnote{Please note that assuming 20 minutes running time here generally benefits the baseline systems as they produce significantly more timeouts and the actual time to execute those queries is often magnitudes beyond the 20 minute limit.}. In PostgreSQL, we see a reduction from 97 to 70 timeouts in the case of full enumeration queries, and from 91 timeouts to just 2
in the 0MA aggregation case. Furthermore, those additional queries that terminate within the time limit do so with a very clear margin as can be seen from the maximum times.

The low median, contrasting the much higher means, shows that at least half of the queries are reasonably easy to solve for the baseline systems. This is expected, as the number of relations is uniformly distributed in the queries, meaning that a fair amount of queries are small enough for typical query planning strategies to work well. This observation is discussed in further detail below.
The split into multiple SQL statements as well as the creation of various temporary tables as performed by \ourSystem naturally leads to some overhead. This is clearly visible in the the higher median execution time with \ourSystem. 
\nop{*******************
\footnote{One must however keep in mind the significant distorting effect of the smaller number of timeouts
with \ourSystem:  high (but still $\leq 20$ min) execution times 
of some hard queries show up in the statistics of \ourSystem but are hidden
by timeouts in the baseline case.}
*******************}
This comes as little surprise, since the structure-guided approach is 
most effective for hard cases. 

Beyond the general improvement, we observe a particularly large improvement for 0MA queries.
For DuckDB + \ourSystem, not only are all queries solved within the 20 minute time limit, but all are solved within only 16 \emph{seconds}. In case of Spark SQL  + \ourSystem and PostgreSQL + \ourSystem, even though the mean speed-up is smaller, we still observe an improvement by an order of magnitude as well as the elimination of almost all timeout cases (see below for further discussion of the remaining timeouts).
While we performed minimum aggregation in our experiments, any natural 0MA version of the queries (e.g., counting or enumerating distinct values of some attribute) would result in essentially the same running times in the \ourSystem cases. 
Note however, that the median times are still slightly lower without \ourSystem, again demonstrating that the structure-guided approach is particularly well suited for complementing traditional query processing strategies in difficult cases. Ultimately, we see that the use of \ourSystem makes these types of queries feasible, whereas our experiments show that none of the baseline systems tested 
here can be relied upon to produce answers for such queries within a reasonable
time limit.

While our results for all three systems follow the same trends, 
we see that PostgreSQL performs significantly worse. 
In particular, the 2 timeouts for 0MA aggregation queries 
with \ourSystem on top of PostgreSQL
are surprising and merit discussion. In these two cases, the join trees contain nodes with a large number of children. The resulting SQL statement generated by \ourSystem therefore expresses many semi-joins at once. While this is not an issue in principle, and generally works as expected, the query planner of PostgreSQL runs into the usual problem with large queries and fails to recognize that semi-joins are possible here. Instead, PostgreSQL chooses join operations, which leads to a blow-up of intermediate results, same as with the original query, and consequently PostgreSQL runs out of time. 
In fact, the problem that the query planner decides against semi-joins and uses joins instead 
also appears in 
the full enumeration case and occasionally also 
with Spark SQL. 
As a mitigation, 
one could adapt the rewriting to perform the semi-joins one after the other in such cases. However, we refrained from doing so since we wanted to 
provide a portable system that allows us to compare 
in a uniform way the general feasibility of the structure-guided methods over a wider variety of existing systems. Naturally, deeper integration of structure-guided methods into a DBMS would immediately 
eliminate such problems.

Especially in the context of Spark SQL, communication cost is another important factor where \ourSystem has notable impact.  A detailed report on the communication cost (and peak memory usage) in Spark SQL
\ifArxivVersion
is given in Appendix~\ref{sect:spark:appendix}.
\else
is not possible here due to space constraints. A detailed analysis of these measurements is presented in the extended version of this paper~\cite{DBLP:journals/corr/abs-2303-02723}.
\fi

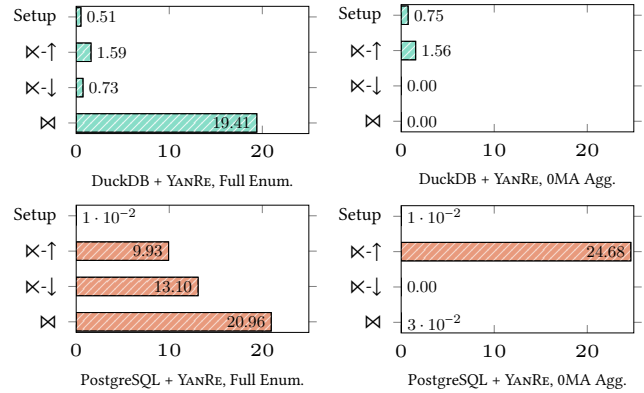
\begin{figure}[tb]

\begin{adjustwidth*}{}{-0.5em}

\begin{tikzpicture} [transform shape, scale=0.7]
\begin{axis}[
        xbar,  
        every tick label/.append style={scale=2},
        xmax=25,
        xmin=0,
        width=6cm,
        height=4cm,
        xlabel={DuckDB + \ourSystem{}, Full Enum.},
        symbolic y coords={$\bbowtie$,$\lltimes$-$\downarrow$, $\lltimes$-$\uparrow$,Setup},
        ytick=data,         
        nodes near coords  align={vertical},
        nodes near coords = \rotatebox{0}{{\pgfmathprintnumber[fixed zerofill, precision=2]
        {\pgfplotspointmeta}}},
        point meta=rawx,
    visualization depends on={meta < 5 \as \valueissmall},
    every node near coord/.append style={
            anchor={\ifdim\valueissmall  pt = 1 pt west\else east\fi},
            color={\ifdim\valueissmall  pt = 1 pt black\else black\fi}
    }
    ]
    \addplot +[black, fill=DuckDB,    postaction={pattern=north east lines, pattern color=DuckDB!20}]  coordinates {
                            (0.51,Setup)
                            (1.59,$\lltimes$-$\uparrow$)
                            (0.73,$\lltimes$-$\downarrow$)
                            (19.41,$\bbowtie$)
    };

\end{axis}
\end{tikzpicture}\phantom{a}\begin{tikzpicture} [transform shape, scale=0.7]
\begin{axis}[
        xbar,  
        every tick label/.append style={scale=2},
        xmax=25,
        xmin=0,
        width=6cm,
        height=4cm,
        xlabel={DuckDB + \ourSystem{}, 0MA Agg.},
        symbolic y coords={$\bbowtie$,$\lltimes$-$\downarrow$, $\lltimes$-$\uparrow$,Setup},
        ytick=data,         
        nodes near coords  align={vertical},
        nodes near coords = \rotatebox{0}{{\pgfmathprintnumber[fixed zerofill, precision=2]
        {\pgfplotspointmeta}}},
        point meta=rawx,
    visualization depends on={meta < 2  \as \valueissmall},
    every node near coord/.append style={
            anchor={\ifdim\valueissmall  pt = 1 pt west\else east\fi},
            color={\ifdim\valueissmall  pt = 1 pt black\else black\fi}
    }
    ]
    \addplot +[black, fill=DuckDB,    postaction={pattern=north east lines, pattern color=DuckDB!20}]  coordinates {
                          (0.75,Setup)
                          (1.56,$\lltimes$-$\uparrow$)
                          (0.0,$\lltimes$-$\downarrow$)
                          (0.0,$\bbowtie$)
    };

\end{axis}
\end{tikzpicture}

\begin{tikzpicture} [transform shape, scale=0.7]
\begin{axis}[
        xbar,
        every tick label/.append style={scale=2},
        xmax=25,
        xmin=0,
        width=6cm,
        height=4cm,
        xlabel={PostgreSQL + \ourSystem{}, Full Enum.},
        symbolic y coords={$\bbowtie$,$\lltimes$-$\downarrow$, $\lltimes$-$\uparrow$,Setup},
        ytick=data,         
        nodes near coords  align={vertical},
        nodes near coords = \rotatebox{0}{{\pgfmathprintnumber[fixed zerofill, precision=2]
        {\pgfplotspointmeta}}},
        point meta=rawx,
    visualization depends on={meta < 5 \as \valueissmall},
    every node near coord/.append style={
            anchor={\ifdim\valueissmall  pt = 1 pt west\else east\fi},
            color={\ifdim\valueissmall  pt = 1 pt black\else black\fi}
    }
    ]
    \addplot +[black, fill=Postgres,    postaction={pattern=north east lines, pattern color=Postgres!40}]  coordinates {
                            (0.01,Setup)
                            (9.93,$\lltimes$-$\uparrow$)
                            (13.10,$\lltimes$-$\downarrow$)
                            (20.96,$\bbowtie$)
    };

\end{axis}
\end{tikzpicture}\phantom{a}\begin{tikzpicture} [transform shape, scale=0.7]
\begin{axis}[
        xbar,  
        every tick label/.append style={scale=2},
        xmax=25,
        xmin=0,
        width=6cm,
        height=4cm,
        xlabel={PostgreSQL + \ourSystem{}, 0MA Agg.},
        symbolic y coords={$\bbowtie$,$\lltimes$-$\downarrow$, $\lltimes$-$\uparrow$,Setup},
        ytick=data,         
        nodes near coords  align={vertical},
        nodes near coords = \rotatebox{0}{{\pgfmathprintnumber[fixed zerofill, precision=2]
        {\pgfplotspointmeta}}},
        point meta=rawx,
    visualization depends on={meta < 5 \as \valueissmall},
    every node near coord/.append style={
            anchor={\ifdim\valueissmall  pt = 1 pt west\else east\fi},
            color={\ifdim\valueissmall  pt = 1 pt black\else black\fi}
    }
    ]
    \addplot +[black, fill=Postgres,    postaction={pattern=north east lines, pattern color=Postgres!40}]  coordinates {
                            (0.01,Setup)
                            (24.68,$\lltimes$-$\uparrow$)
                            (0.0,$\lltimes$-$\downarrow$)
                            (0.03,$\bbowtie$)
    };

\end{axis}
\end{tikzpicture}

\end{adjustwidth*}

\caption{A breakdown of the average time in seconds spent by DuckDB and PostgreSQL 
in each of the four stages of \ourSystem{}, 
corresponding to the 4 phases of \YA. }
\label{fig:breakdownPhases}
\end{figure}

In Figure~\ref{fig:histogram}, we provide histograms of how many queries could be executed within certain time brackets, with brackets of $t \leq 1$, $1 < t \leq 10$, $10< t \leq 100$,  $100 < t\leq 1000$ seconds (represented by their upper bounds in the figure). 
Additionally, we  also list the number of queries that timed out (TO).
In order to simplify the presentation, we thus ignore in total 5 runs that took between 1000 and 1200 seconds.
The left column of histograms represents the baseline systems. We see a trend of queries being either easy or very difficult for a system, with especially the large bracket of times between $100$ and $1000$ seconds being the least common (in particular, for DuckDB and PostgreSQL).
The histograms also give a better insight into the improvements achieved through the use of \ourSystem{}. With DuckDB, we see that many of the queries causing a timeout with the baseline system can be solved far below the timeout threshold with \ourSystem{}, even in the full enumeration case. At the same time, due to the overhead of  \ourSystem{}, the number of queries that are solved in under a second is significantly lower. For PostgreSQL and Spark SQL we see that the overhead and the aforementioned issues around planners avoiding semi-join operations cause a general trend towards slower evaluation in full enumeration queries, despite significant reduction in timeouts.

Figure~\ref{fig:breakdownPhases} provides a breakdown of the average time (in seconds) spent in each of the four stages of the \ourSystem rewriting (excluding timeouts). 
We have omitted Spark SQL in this figure, 
since we have applied there a slightly different approach of executing all stages as one query plan (for details,
\ifArxivVersion
see Appendix~\ref{sect:spark:appendix}.
\else
see the extended version~\cite{DBLP:journals/corr/abs-2303-02723}). 
\fi

The \emph{Setup} phase consists of the creation of various views that represent the initial relations for each node in the join tree (for details, see Appendix~\ref{sect:Implementation}). 
Surprisingly, this takes up a noticeable amount of time in some cases and we expect that these times can be significantly reduced by a 
full integration of structure-guided query evaluation into these systems.
In the case of full enumeration queries, we see for both, DuckDB and PostgreSQL,
that \ourSystem{} spends the most time in the join phase. It is interesting to note that PostgreSQL also spends a lot of time in the two semi-join phases, 
whereas for DuckDB, the time spent there is insignificant relative to the Join phase. As we discussed earlier, we have seen cases where the query planner of PostgeSQL eschews the use of semi-joins, which  explains parts of this marked difference in the time distribution. Additionally, the handling of internal tables and possible bottlenecks in their creation are another potential factor for this discrepancy. In the 0MA aggregation case, we see that both systems fare very similarly, with DuckDB again requiring more time for the Setup stage. The increase in Setup time over the full enumeration case here is due to the larger number of instances that could be solved without timeout for 0MA queries. Note that in the 0MA case, the "Join" phase consists only of the final aggregation in the root node, which 
explains the (almost) 0 time consumption.

\paragraph{\ourSystem Planning Time}
The time required by \ourSystem to create the rewriting is negligible even in our unoptimised proof-of-concept implementation. Even for the largest queries (30 relations), the computation of hypergraphs and join trees as well as the subsequent rewriting
requires only a few milliseconds. This is magnitudes faster than usual planning times by the host systems for complex queries and we therefore do not provide a more detailed analysis of our planning time here. Detailed records of the times spent by \ourSystem in the various phases of query execution 
are available in the aforementioned repository of data and code artifacts.

\nop{***************************
\subsection{Comparison of Host Systems}
\label{sec:hosts}

We have seen that for acyclic queries, structure-guided query evaluation can. This discrepancy mostly stems from the fact that the query planner of PostgreSQL  seems to strongly favour full joins over the use of semi-joins. As was mentioned, PostgreSQL is a more classical DBMS compared with DuckDB, which uses a columnar storage of data. Our experiments focus only on these two systems, though it might be interesting to explore in a next step if this strong preference of query plans with full joins is also seen in other DBMSs.  If other DBMSs behave similarly, it might be an argument for seeking a more closely integrated implementation of Yannakakis's algorithm, since simply rewriting query plans is not enough to get these systems to allow actual semi-joins to be used. 
On the other hand, for more modern database systems, such as DuckDB, the use of queries produced by \ourSystem{} is clearly already a very good idea when dealing with difficult join queries.
********************}

\paragraph{A Glimpse Beyond}
To get a feeling of how a structure-guided approach to query processing generalises beyond ACQs, 
\ifArxivVersion
we have carried out some very preliminary experiments with a few cyclic
queries from the benchmark of \cite{DBLP:conf/sigmod/ManciniKCMA22}, 
which we briefly discuss next.
Further details on cyclic queries are given in 
Appendix~\ref{sec:cycles}.

\else
we have carried out some very preliminary experiments with a few cyclic
queries from the benchmark of \cite{DBLP:conf/sigmod/ManciniKCMA22}.
\fi
In Table~\ref{tab:ghdqueries}, we show some of these results: we have 
chosen 3 of the smallest cyclic queries from the benchmark (called 09ac, 11ag, and 
11al). As is indicated by their names, these queries involve the join of 
9 resp.\ 11 relations. 
For each of these queries, we have computed 8 different generalized hypertree decompositions (GHDs) of width 2, which is optimal in these cases. Actually, for 09ac, we were only able to find 7 distinct GHDs.
Turning the GHDs into join trees by carrying out the local joins at each node of the GHD and applying our \ourSystem system on DuckDB, we obtained the run times (sorted in ascending order) reported in Table~\ref{tab:ghdqueries}.  Without \ourSystem, the corresponding run times of  DuckDB are 
timeout (query 09ac), 22.22s (query 11ag), and  263.87s (query 11al), respectively. 
For all queries, we notice a striking discrepancy 
in execution times of DuckDB + \ourSystem depending on the chosen GHD: 
in the best case, DuckDB + \ourSystem may be way faster
than plain DuckDB, in the worst case, DuckDB + \ourSystem times out.

\begin{table}[t]
    \caption{Run times of cyclic queries with different GHDs}
    \label{tab:ghdqueries}
    \centering

    \begin{tabular}{l|cccccccc}
    \toprule

    {Query} & \multicolumn{8}{c}{Ordered Eval. Time by GHD (s)}  \\
    \midrule
     09ac    & 10.3 & 16.5 & 18.4 & t/o & t/o & t/o & t/o & --- \\
          11ag & 11.2 & 26.8 & t/o & t/o & t/o & t/o & t/o & t/o\\
          11al & 6.2 & 6.3 & 8.3 & 258 & t/o & t/o & t/o & t/o\\
    \bottomrule
    \end{tabular}
\end{table}

\nop{**************************
The generalisation of query acyclicity via various forms of decompositions and associated widths has been studied extensively for the last two decades -- with deep theoretical studies of 
the properties of more advanced decompositions and also with the development of powerful decomposition algorithms and their implementations, see e.g.,~\cite{DBLP:conf/pods/Lanzinger22,DBLP:journals/jacm/GottlobLPR21,DBLP:conf/pods/GottlobLS99,DBLP:journals/siamcomp/AtseriasGM13}.
Recent analyses of benchmarks and query logs have revealed that many 
queries in practice actually are acyclic or have low width
\cite{DBLP:journals/vldb/BonifatiMT20,DBLP:journals/jea/FischlGLP21}. 
**************************}

To summarise, our preliminary experiments with cyclic CQs show that 
there is clear potential for structure-guided
query answering beyond acyclic queries. But they also show that
this requires new methods for finding
the ``right'' decompositions. 
Indeed, the key observation 
is that a good choice of decomposition is absolutely crucial for the performance of query evaluation.
Previously mentioned related work by Ghionna et al.~\cite{DBLP:conf/icde/GhionnaGGS07}, 
and Scarcello et al.~\cite{DBLP:conf/pods/ScarcelloGL04} 
may provide a good starting point for this research direction.

\subsection{Deeper Insight into Improvements}
\label{sect:deeperInsights}

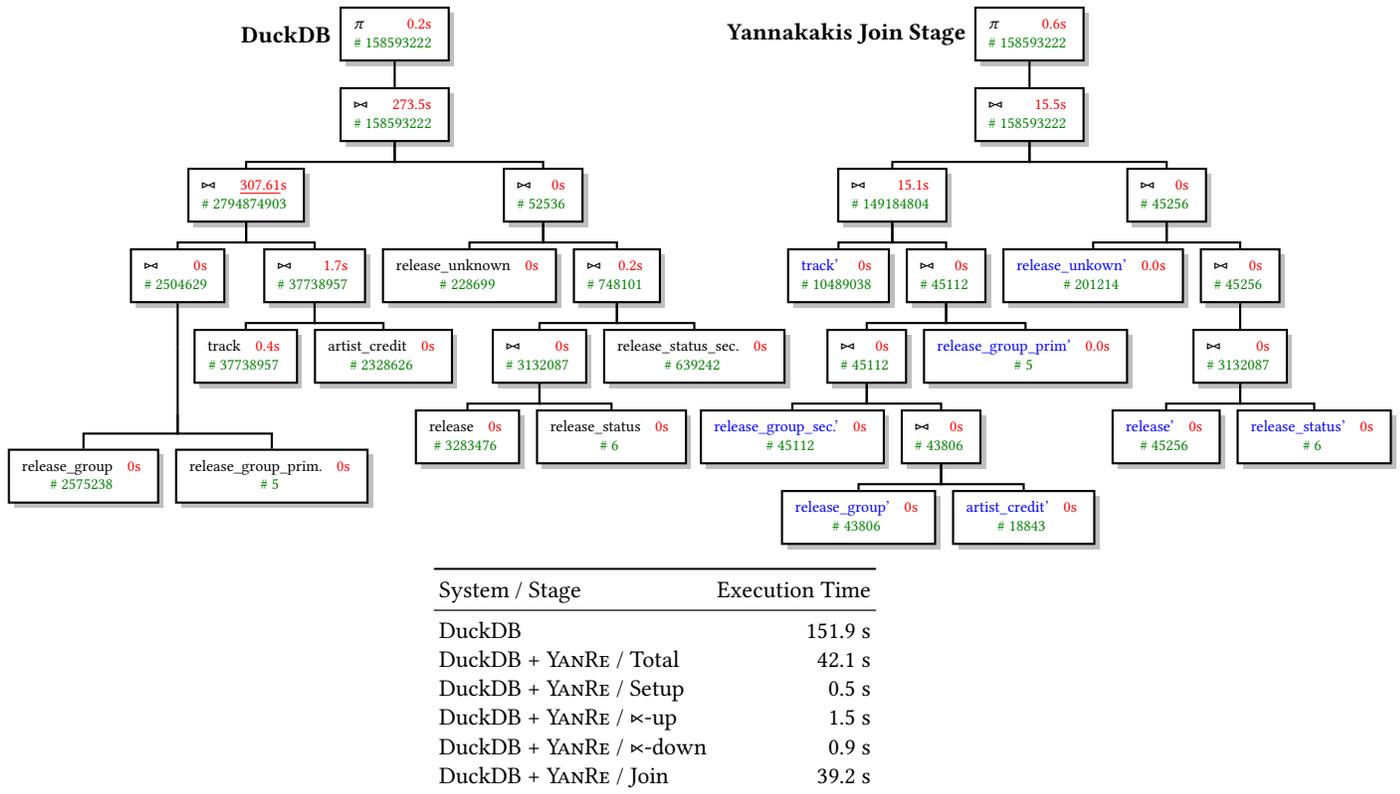
\begin{figure*}[t]
\centering

\begin{minipage}[t]{0.5\textwidth}
        \vspace{0pt}
        \begin{forest}
          my tree style
          [\QueryPlanNode{$\pi$}{0.2}{158593222},label=left:{\textbf{DuckDB}}
            [\QueryPlanNode{$\bowtie$}{273.5}{158593222}
                [\QueryPlanNode{$\bowtie$}{\underline{307.61}}{2794874903}
                    [\QueryPlanNode{$\bowtie$}{0}{2504629}
                        [[
                            [\QueryPlanNode{release\_group}{0}{2575238}]
                            [\QueryPlanNode{release\_group\_prim.}{0}{5}]
                        ]]
                    ]
                    [\QueryPlanNode{$\bowtie$}{1.7}{37738957}
                        [\QueryPlanNode{track}{0.4}{37738957}]
                        [\QueryPlanNode{artist\_credit}{0}{2328626}]
                    ]
                ]
                [\QueryPlanNode{$\bowtie$}{0}{52536}
                    [\QueryPlanNode{release\_unknown}{0}{228699}]
                    [\QueryPlanNode{$\bowtie$}{0.2}{748101}
                        [\QueryPlanNode{$\bowtie$}{0}{3132087}
                            [\QueryPlanNode{release}{0}{3283476}]
                            [\QueryPlanNode{release\_status}{0}{6}]
                        ]
                        [\QueryPlanNode{release\_status\_sec.}{0}{639242}]
                    ]
                ]
            ]
          ]
        \end{forest}
\end{minipage}~\phantom{a}~\begin{minipage}[t]{0.45\textwidth}
        \vspace{0pt}
        \begin{forest}
          my tree style
          [\QueryPlanNode{$\pi$}{0.6}{158593222},label=left:{\textbf{Yannakakis Join Stage}}
            [\QueryPlanNode{$\bowtie$}{15.5}{158593222}
                [\QueryPlanNode{$\bowtie$}{15.1}{149184804}
                    [\QueryPlanNode{\color{blue}track'}{0}{10489038}]
                    [\QueryPlanNode{$\bowtie$}{0}{45112}
                        [\QueryPlanNode{$\bowtie$}{0}{45112}
                            [\QueryPlanNode{\color{blue}release\_group\_sec.'}{0}{45112}]
                            [\QueryPlanNode{$\bowtie$}{0}{43806}
                                [\QueryPlanNode{\color{blue}release\_group'}{0}{43806}]
                                [\QueryPlanNode{\color{blue}artist\_credit'}{0}{18843}]
                            ]
                        ]
                        [\QueryPlanNode{\color{blue}release\_group\_prim'}{0.0}{5}]
                    ]
                ]
                [\QueryPlanNode{$\bowtie$}{0}{45256}
                    [\QueryPlanNode{\color{blue}release\_unkown'}{0.0}{201214}]
                    [\QueryPlanNode{$\bowtie$}{0}{45256}
                        [\QueryPlanNode{$\bowtie$}{0}{3132087}
                            [\QueryPlanNode{\color{blue}release'}{0}{45256}]
                            [\QueryPlanNode{\color{blue}release\_status'}{0}{6}]
                        ]
                    ]
                ]
            ]
          ]
        \end{forest}
\end{minipage}           

\bigskip 

\begin{tabular}{lr}
\toprule
System / Stage & Execution Time \\
\midrule 
DuckDB & 151.9 s \\
DuckDB + \ourSystem{} / Total & 42.1  s \\ 
DuckDB + \ourSystem{} / Setup & 0.5  s \\ 
DuckDB + \ourSystem{} / $\ltimes$-up & 1.5 s \\ 
DuckDB + \ourSystem{} / $\ltimes$-down & 0.9 s \\ 
DuckDB + \ourSystem{} / Join & 39.2 s \\ 
\bottomrule
\end{tabular}

\caption{Details of performance difference in query plans of query 08ad. Execution times of operations are in seconds, rounded to one decimal point.}
  \label{fig:q08ad}
\end{figure*}

We see that structure-guided query evaluation can significantly improve the performance of widely used DBMSs on difficult queries, even if all joins are along foreign key relationships. In this section, we further illustrate the reasons for these improvements in detail. 

We consider the evaluation of benchmark
query 08ad (for the full enumeration case), which is illustrated in Figure~\ref{fig:q08ad}. 
On the left-hand side, we show the query plan (projections at leaf nodes
are omitted in the figure) as produced by DuckDB on the input query. 
On the right-hand side, we show the query plan that was produced by DuckDB for the final Join stage query in the \ourSystem{} rewriting. That is, all relations at this point have been reduced by the two 
semi-join passes.
To emphasise this, we refer to the reduced version of each relation $R$ as $R'$ in the right tree and mark it in blue. The size of each relation is given in green after a \#, and the times in the nodes represent total CPU time (note that this differs strongly from wall clock time due to heavy parallelisation) spent on this operation. The query produces a large number of output tuples ($\approx$ 158 million). However, while our rewriting still has to materialise all of these tuples (at significant computational cost), the baseline query plan produces an even larger and more costly intermediate result with $\approx$ 2.8 billion tuples on the way to the final output. 
Actually, the huge discrepancy between the original vs.\ reduced relations is already seen at the leaf nodes of the two query plans: for instance, 
when we look at the 
relations artist\_credit, release, and release\_group, 
the reduction in size is by a factor of 123, 72, and 58, respectively.

The table at the bottom of the figure provides the wall clock times for evaluation of the baseline using only DuckDB, as well as DuckDB+\ourSystem{}. The baseline plan on the left required 151.9 seconds, while our approach took 42.1 seconds to execute. Notably, we see that the significant improvement in the join phase comes at a very cheap cost: the two semi-join phases that allowed us to avoid the blow-up required only a total of 2.4 seconds. Thus, while the query is still solvable in reasonable time in the baseline case, we see that even such cases can be significantly improved 
by a structure-guided approach. 

We note that this query has only 8 relations and the planning phase is therefore still manageable in the baseline case. Specifically, PostgreSQL manages to answer the query in 64 seconds, while only Spark SQL times out. Importantly, even if all joins follow foreign key relationships, there can still be an enormous blow-up of intermediate results if an 
evaluation strategy based solely on joins (without using semi-joins to 
remove dangling tuples first) is applied.  Advancements in cardinality estimation, which aim at the computation of good join plans, are therefore inherently insufficient on these types of challenging queries.

\paragraph{Indexes}
Indexes have traditionally been an important factor in fast join evaluation in 
DBMSs. However, when the time to evaluate a query is dominated by efforts related to large intermediate results, indexes are of little to no help as they cannot decrease the size of a join. This observation is also confirmed by our experiments with 
three different DBMSs, which apply significantly different indexing strategies and yet yield  comparable experimental results. 
In PostgreSQL, it is common to maintain a large number of explicitly specified and materialised indexes for all attributes that are deemed important. In our experiments for PostgreSQL we use all indexes that are set in the Musicbrainz dataset, which are effectively on all attributes over which joins are made in our queries. In contrast, Spark SQL supports no indexes at all and DuckDB does not allow persistent indexes (every new session requires a new creation of indexes), but internally maintains ad-hoc index structures for commonly accessed values and attributes. Our experiments therefore run without explicitly declared indexes on both systems\footnote{Creating all indexes in DuckDB takes over 30 minutes on our test system and it was infeasible to add this overhead to every tested query. Additional experiments showed that explicitly creating the same indexes in DuckDB as in PostgreSQL makes no significant difference to our measured times.}. 
Despite these differences, we see consistent improvements using \ourSystem{} over all systems. Furthermore, PostgreSQL performs worst in every measure despite the most elaborate support of indexes among the 
3 systems tested here.

\balance

\section{Conclusion and Future Work}
\label{sec:conclusion}

In this work, we have studied the effectiveness of 
Yannakakis-style query evaluation by common, widely used, 
relational DBMSs
on simply structured yet large queries. We observe that 
these kinds of queries can be highly challenging.
On the other hand, structure-guided query evaluation -- executed by the same DBMSs -- greatly improves on the number of such queries that are answerable in reasonable time (the majority of the remaining timeouts being  due to "unavoidable" materialisation of an infeasibly large number of output tuples). To the best of our knowledge, this is the first extensive study (based on 
over 300 benchmark queries from \cite{DBLP:conf/sigmod/ManciniKCMA22}) that confirms these long-standing theoretical ideas as also being useful in combination with standard database technology.

We have formally introduced a relevant class of queries
which are particularly well suited for structure-guided 
query processing -- with a potential speed-up by several orders of magnitude. 
However, our experiments show that also 
large join queries outside this class 
may significantly profit from such an approach. 
Our experiments 
were based on a novel rewriting technique, which enforces a Yannakakis-style query evaluation by state-of-the-art DBMSs without touching the internals of the DBMSs themselves. This opens the door for extending the experiments reported here  
also to closed source commercial DBMSs.

We conclude from our study that a systematic, deep integration of structure-guided query processing into existing database technology is a worthwhile goal for future research. This is an ambitious, highly non-trivial goal, which involves the reconciliation of two seemingly contradicting query processing paradigms. However, the prospect of providing a solution or, at least, an alleviation to two of the most 
pressing problems in query optimization and evaluation 
seems to justify the effort of such an endeavour, namely how to find a good join order for big join queries and how to avoid the explosion of intermediate results.

In addition to the challenging task of a full integration
of one query processing paradigm into the other, we 
envisage two main directions in which our work should be further extended:
first, Yannakakis-style query evaluation has to be extended from ACQs to queries or subqueries of low generalised hypertree-width (which, only in rare cases, is more than 2). The choice of an  {\em optimal\/} decomposition (from many possible decompositions that may not even be required to have 
minimum width) is a highly non-trivial problem, yet crucial as our 
very preliminary experiments with cyclic queries 
(see Table~\ref{tab:ghdqueries}) illustrate. Above all, this will 
require to re-think the computation of hypergraph decompositions and
to take statistics on the data as well as schema-related information (such as foreign keys and functional dependencies) into account.

Finally, we also want to study extensions of the class of 0MA queries and identify further classes of queries that can be evaluated without materialising the joins involved.
Actually, in Example~\ref{ex:tpch:one}
we have encountered a TPC-H query which falls into this category, i.e.: 
it is not 0MA according to Definition \ref{def:SCA} but it behaves like 
a 0MA query due to properties of the schema. We want to identify 
further conditions (on the schema and/or on the queries themselves) 
that allow for 
such a favourable, ``join-free'' evaluation strategy.

\nop{******************************
extending the ideas for 0MA queries to SUM/AVG/COUNT aggregates as outlined in Section~\ref{sect:theory} has the potential to speed up a variety of challenging data analysis tasks. 
******************************}

\ifArxivVersion
\begin{acks}
Georg Gottlob is a Royal Society Research Professor and acknowledges support by the Royal Society in this role through the  “RAISON DATA” project  (Reference No. RP\textbackslash{}R1\textbackslash{}201074). Matthias Lanzinger acknowledges support by the Royal Society  “RAISON DATA” project  (Reference No. RP\textbackslash{}R1\textbackslash{}201074).
The work of Cem Okulmus is supported by the Wallenberg AI, Autonomous Systems and Software
Program (WASP) funded by the Knut and Alice Wallenberg Foundation.
The work of Reinhard Pichler and Alexander Selzer has been 
supported by the Vienna Science and Technology Fund (WWTF) [10.47379/ICT2201, 10.47379/VRG18013, 10.47379/NXT22018]; and
the Christian Doppler Research Association (CDG) JRC LIVE.
\end{acks}
 
\else
\appendix

\section{Implementation Details of \ourSystem}
\label{sect:Implementation}

\ifArxivVersion
In this section, we provide some implementation details of \ourSystem.
\else
We round off this paper by providing some implementation details of \ourSystem.
\fi
As was mentioned in Section~\ref{sec:expeval}, the
rewriting-based approach of our \ourSystem system proceeds in several 
steps: 

\begin{itemize}
    \item extraction of the CQ from the SQL query
    \item transformation into a hypergraph
    \item join tree computation
    \item SQL statement generation
\end{itemize}

The queries in the benchmark of 
\cite{DBLP:conf/sigmod/ManciniKCMA22}
are all straightforward SELECT-PROJECT-JOIN queries
(in particular, no GROUP BY and HAVING clauses, no subqueries).
We process these queries via a simplified version of 
the SQL-to-CQ translation from~\cite{DBLP:journals/jea/FischlGLP21}, 
which also provides the further translation of the CQ into a hypergraph.
Recall that the hypergraph $H= (V,E)$ of 
a CQ $Q$ is obtained 
by identifying the vertices in $V$ with the variables in $Q$ and
defining as edges in $E$ those sets of vertices where the corresponding
variables occur jointly in an atom of $Q$.
The join tree computation and the generation of SQL statements 
are discussed below in~more~detail.

\subsection{Join Tree Computation}
\label{sect:JoinTree}

The  GYO algorithm
\cite{report/toronto/Gra79,DBLP:conf/compsac/YuO79} 
for deciding whether a hypergraph (and thus the corresponding query) is acyclic works by non-deter\-mi\-nistic
application of the following steps: 
i)
deleting a vertex with degree 1 (i.e., a vertex occurring in a single edge),
ii) deleting an empty edge, or iii)
deleting an edge that is a subset of another edge.
In Algorithm~\ref{alg}, we choose a particular order 
in which the elimination steps of the GYO-algorithm are executed. 
Technically, deletion of degree 1 vertices from an edge $e$ of $H$ may produce a new edge that is not part of the join tree. We thus use $\lbl(e)$ in Algorithm~\ref{alg} to always refer to the name of the original edge before
vertex removals.
The algorithm produces join trees  with a particular property expressed in the following theorem: 

\begin{algorithm}[t]
  \SetKwInOut{Input}{input}\SetKwInOut{Output}{output}
  \SetKw{Reject}{Reject}
  \SetKw{Continue}{Continue}

  \Input{A connected $\alpha$-acyclic hypergraph $H$}
  \Output{A join tree of $H$}

   $J \leftarrow $ empty  tree\;
  \While{$H$ contains more than 1 edge}{
    Delete all degree 1 vertices from $H$\;
    \For{$e \in E(H)$ s.t. there is no $f \in E(H)$ with $e \subset f$}{ \label{condmax}
      $C_e \leftarrow \{ c \in E(H) \mid c \subseteq e \}$\;
      \For{$c \in C_e$}{
          Set $\lbl(c)$ as child of $\lbl(e)$ in $J$\;
          Remove $c$ from $H$\; \label{line:deledge}
        }
    }
  }
  \Return $J$\;
  \caption{The Flat-GYO algorithm}
  \label{alg}
\end{algorithm}

\begin{theorem}
Let $H = (V(H),E(H))$ be an acyclic hypergraph and let 
$T$ denote the join tree resulting from applying Algorithm~\ref{alg} 
to $H$. Then $T$ has minimal depth among all join trees of $H$.
\end{theorem}

\begin{proof}%
The proof proceeds in three steps:
(1) First, 
we observe that there is still some non-determinism left in 
Algorithm~\ref{alg}, that depends on the order in 
which the edges in the for-loop on line 4 are processed. It may happen (i) that $e = e'$ holds for two edges with 
$\lbl(e) \neq \lbl(e')$ and that (ii) 
for two distinct maximal edges $e, e'$, an edge $c \in E(H)$
satisfies both $c \subseteq e$ and $c \subseteq e'$ on line 5. 
Nevertheless, the number of iterations of the while-loop is
independent of the order in which the maximal edges are processed 
in the for-loop.
This property follows from the easily verifiable fact that the 
set of edges $\{ e_{i_1}, \dots, e_{i_m} \}$ resulting 
from an iteration of the while-loop is independent
of this non-determinism, even though (due to (i)) there 
may be an alternative set of edges 
with different labels and (due to (ii)) also an alternative collection of parent/child relationships may be  possible.

(2) Second, if a run of Algorithm~\ref{alg} has $k$ iterations of the while-loop, then the join tree constructed by this run has at most depth $k$ (max.~distance from root to leaf). This is due to the fact that, on line 7, existing partially constructed trees may be appended below a new root node but no further nesting may happen here. Hence, the depth of the partially constructed trees grows by at most 1.

(3) Finally, if there exists a join tree $T$ of depth $k$, then 
there exists a run of Algorithm~\ref{alg} with at most $k$ iterations of the while loop. This property is proved by a simple induction argument: there exists an order in which the maximal edges are processed in the for-loop, so that 
all leaf nodes of $T$ get removed on line 8 -- thus decreasing the depth of $T$ by at least 1.

The theorem can then be proved as follows: suppose that, for a given hypergraph $H$, the 
minimum depth of any join tree of $H$ is $k$. Then there exists a join tree $T$ of depth $k$. Hence, 
by (3), Algorithm~\ref{alg} has a run with at most $k$ iterations of the while-loop and, therefore, by (1), any run of Algorithm~\ref{alg} has a run with at most $k$ iterations of the while-loop.
Thus, by (2), 
any run of Algorithm~\ref{alg} 
produces a 
join tree of depth at most $k$. 
\end{proof}

\subsection{Query Plan Generation and Execution}
\label{sect:QueryPlan}

In a final step, we create a sequence of SQL statements that express the execution of \YA over the join tree and reintroduce final projection and aggregation if applicable.
The overall evaluation of the query is thus split into four stages, 
which we briefly describe below. We will illustrate these steps
by means of the SQL query given in the following example.

\begin{example}
\label{ex:unischema:two}
Recall the university schema of 
Example~\ref{ex:sca}
with relations 
$\mathsf{exams}(\mathrm{cid}$, 
$\mathrm{student}$, $\mathrm{grade})$
and
$\mathsf{courses}(\mathrm{cid}, \mathrm{faculty})$. 
We now 
add the two relations
$\mathsf{tutors}(\mathrm{student},\mathrm{cid},\mathrm{num\_semesters})$
and
$\mathsf{enrolled}(\mathrm{student},\mathrm{program})$.
The following query retrieves, for each fixed pair of program and course, 
the lowest grade obtained in exams of the CS faculty by any student 
enrolled in that program and who has been tutored for more than 1 semester in that course.
  {\small
  \begin{lstlisting}[language=SQL]
SELECT enrolled.program, exams.cid, 
       MIN(exams.grade)
FROM exams, courses, enrolled, tutors
WHERE exams.cid = courses.cid
  AND exams.student = enrolled.student
  AND exams.cid = tutors.cid
  AND courses.faculty = 'ComputerScience'
  AND exams.student = tutors.student
  AND tutors.num_semesters > 1
GROUP BY enrolled.program, exams.cid;
\end{lstlisting}
}

\noindent
The query is acyclic but not 0MA (it is not guarded).
\nop{******************************
This is due to the fact that  attributes from two relations $\mathsf{exams}$ and $\mathsf{enrolled}$
appear in the SELECT clause.
******************************}
Its hypergraph and a 
possible join tree are depicted in Figure~\ref{fig:joinTreeExampleUni}, 
where, 
for the sake of readability,
the names of 
vertices are abbreviated to the first character.
\nop{**************************************
We thus have to carry out also the top-down traversal and the second bottom-up traversal. 
But they can be significantly restricted -- namely to the root node and its child node, which together 
cover all attributes in the SELECT clause. That is, 
in the top-down traversal, it suffices to semi-join the $\mathsf{enrolled}$ relation into $\mathsf{exams}$; 
and also the joins of the second bottom-up traversal can be restricted to these two relations.
The relations at the two leaf nodes of the join tree can be completely ignored in these two traversals.
**************************************}
\hfill $\diamond$
\end{example}

\begin{figure}[t]

\begin{tabular}{cc}
\begin{tikzpicture}
	\node	(p)	at	(1,2)	{p};
	\node	(c)	at	(0,1)	{c};
	\node	(s)	at	(1,1)	{s};
	\node	(g)	at	(2,1)	{g};
	\node	(f)	at	(0,0)	{f};
	\node	(n)	at	(1,0)	{n};
	\draw[rotate around={90:(1,1.5)}] (1,1.5) ellipse [x radius=0.8, y radius=0.3]; %
	\node[scale=.9]   (enr)   at  (2,2.3)  {enrolled(s,p)};
	\draw (s) ellipse [x radius=1.5, y radius=0.4]; %
	\node[scale=.9]   (ex)   at  (3.2,1.3)  {exams(c,s,g)};
	\draw[rotate around={90:(0,.5)}] (0,.5) ellipse [x radius=0.85, y radius=0.3]; %
	\node[scale=.9]   (crs)   at  (-1,-.3)  {courses(c,f)};
	\draw[rounded corners] ($(c.north west) + (0,.05)$)
	    -- ($(s.north east) + (.05,.05)$)
	    -- ($(n.south east) + (.05,0)$)
	    -- ($(n) + (0,-.37)$)
	    -- ($(n.south west) + (-.03,0)$)
	    -- ($(s.south west) + (-.05,-.05)$)
	    -- ($(c.south west) + (0,-.03)$)
	    -- ($(c) + (-.37,0)$)
		-- cycle; %
	\node[scale=.9]   (tut)   at  (2,-.3)  {tutors(c,s,n)};
\end{tikzpicture} & \hspace{-3em}
\tikzset{
  my node style/.style={
    font=\small,
    top color=white,
    bottom color=white,
    rectangle,
    minimum size=10mm,
    draw=black,
    thick,
    drop shadow,
    align=center,
  }
}
\forestset{
  my tree style/.style={
    for tree={
      parent anchor=south,
      child anchor=north,
      l sep-=0pt,
      my node style,
      edge={draw=black, thick},
      edge path={
        \noexpand\path [draw, \forestoption{edge}] (!u.parent anchor) -- +(0,-7.5pt) -| (.child anchor)\forestoption{edge label};
      },
      if n children=3{
        for children={
          if n=2{calign with current}{}
        }
      }{},
      delay={if content={}{shape=coordinate}{}}
    }
  }
}

\centering
\begin{forest}
          my tree style
          [\QueryPlanNodePlain{\color{black}enrolled}
                [\QueryPlanNodePlain{\color{black}exams}
                     [\QueryPlanNodePlain{courses}]
                            [\QueryPlanNodePlain{tutors}]
          ]]
\end{forest}
\end{tabular}

\caption{Hypergraph and join tree for Example \ref{ex:unischema:two}}        
\label{fig:joinTreeExampleUni}
\end{figure}
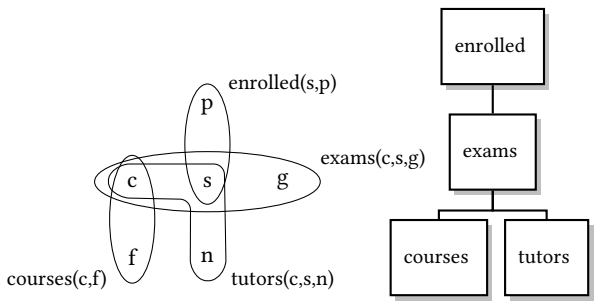

\paragraph{The Setup Stage}
We first rename the attributes in such a way that all equi-joins are replaced by natural joins throughout the rest of the process. Then, from the join tree perspective, we create one view per node, representing the relation in the join tree before the execution of \YA.
Early projection to the attributes which are actually used in the query 
(either as a join attribute or as part of the final result) as well as applicable selections are also incorporated directly into these views. 
For instance, for the query and join tree from Example~\ref{ex:unischema:two}, 
the leaf node for relation \textsf{courses} induces 
the following view \texttt{courses\_setup}:
{\small
\begin{lstlisting}[language=SQL]
  CREATE VIEW courses_setup AS SELECT cid
  FROM courses WHERE faculty='ComputerScience';
\end{lstlisting}
}

\nop{*************************
Overall, this step is only a matter of convenience and the views could be instead rolled into the following steps. However, the clean separation of stages here makes the later extension to cyclic queries and GHDs clearer and simpler.
*************************}

\paragraph{The Semi-Join Stages}
The views from the setup stage are used to generate
SQL statements for the semi-joins of the first bottom-up traversal and,
if the query does not satisfy the 0MA-property, 
also for the top-down traversal of the join tree. 
\nop{**************************
We refer to the respective generated statements as the $\ltimes$-up and $\ltimes$-down stage, respectively. 
**************************}
The result of each semi-join is stored in 
an auxiliary temporary table.
Semi-joins are expressed in the standard manner via the 
IN operator of SQL.

To illustrate the semi-join stages,  we continue our example from above. Assuming that all views from the  setup stage are named with the \texttt{\_setup} suffix, the first semi-joins of the bottom-up traversal are 
realised in SQL as follows (for clarity, the previously mentioned renaming of attributes is not performed here):
{\small
\begin{lstlisting}[language=SQL]
  CREATE TEMP TABLE exams_sjup AS
  SELECT * FROM exams_setup WHERE
  cid IN (SELECT cid from courses_setup) AND
  cid, student IN (SELECT cid, student 
                   FROM tutors_setup);
\end{lstlisting}
}
We thus create a new intermediate relation for the \textsf{exams} node. Importantly, the analogous statement expressing the semi-join from the \texttt{exams} node into the \texttt{enrolled} node will now make use of \texttt{exams\_sjup} rather than the setup view for the \texttt{exams} node.

\nop{*************
From the join tree in Figure \ref{fig:join_tree_example_reduced}, we gain the stages and layers shown in Figure \ref{fig:query_execution_yannakakis_example}. Note that this specific join tree was chosen for its simplicity and its execution cannot be parallelised. 

\begin{figure}
    \centering
    \begin{tabular}{c|c|c}
            \textbf{stage} & \textbf{layer} & \textbf{joins} \\
            \hline
            \multirow{ 2}{*}{1} & 1 & $\{ss_1 := (\text{store\_sales} \ltimes \text{store}) \ltimes \text{d1}\}$ \\
              & 2 & $\{sr_1 :=  (\text{store\_returns} \ltimes ss_1) \ltimes \text{d2}\}$ \\
            \hline
            \multirow{ 2}{*}{2} & 1 & $\{ss_2 := ss_1 \ltimes sr_2\}$ \\
              & 2 & $\{d2_2 := \text{d2} \ltimes sr_1\}$ \\
              & 3 & $\{\text{store}_2 := \text{store} \ltimes ss_2\}$ \\
              & 4 & $\{d1_2 := \text{d1} \ltimes ss_2\}$ \\
            \hline
            \multirow{ 2}{*}{3} & 1 & $\{sr_1 \bowtie ss_2 \bowtie d2_2 \bowtie store_2 \bowtie d1_2\}$
    \end{tabular}
    \caption{}
    \label{fig:query_execution_yannakakis_example}
\end{figure}
*************} %

\paragraph{The Join Stage}
Finally, the temporary tables representing the relations after the semi-join stages are combined by natural joins. The straightforward way to do this is either via step-wise joins along the join tree in a bottom-up manner or, alternatively, all relations can be joined in one large statement. The latter option seems to introduce less overhead, but for large original queries, it reintroduces the problem of planning queries with many joins. 
\nop{**************************
Recall that, at this point, there are no more dangling tuples; so join order is less critical. But for a very large number of joins,
systems may still plan them badly. 
**************************}
We therefore take a middle ground and group (via a straightforward greedy procedure) the join tree into subtrees of at most 12 nodes each and materialise the final joins with one join query per subtree, plus a final query joining the subtrees.
Of course, for 0MA queries, no computation of joins is necessary. In this case, the join phase simply refers to the final aggregation over the root node.

\smallskip
Finally, note that these stages are also amenable to parallelisation: as we follow a tree structure, 
we know that the semi-joins and joins for nodes in different subtrees can be computed independently of each other. 
	This thread is not further followed in this paper as the host systems considered here already parallelise query execution to an extent where further parallelisation ``from the outside'' does not seem particularly helpful. However, the additional potential of parallelisation  clearly deserves 
further study in case of full integration of 
Yannakakis-style query execution into these DBMSs.

 \fi

\bibliographystyle{ACM-Reference-Format}
\bibliography{main}

\ifArxivVersion
\clearpage
\appendix

\section{Implementation Details of \ourSystem}
\label{sect:Implementation}

\ifArxivVersion
In this section, we provide some implementation details of \ourSystem.
\else
We round off this paper by providing some implementation details of \ourSystem.
\fi
As was mentioned in Section~\ref{sec:expeval}, the
rewriting-based approach of our \ourSystem system proceeds in several 
steps: 

\begin{itemize}
    \item extraction of the CQ from the SQL query
    \item transformation into a hypergraph
    \item join tree computation
    \item SQL statement generation
\end{itemize}

The queries in the benchmark of 
\cite{DBLP:conf/sigmod/ManciniKCMA22}
are all straightforward SELECT-PROJECT-JOIN queries
(in particular, no GROUP BY and HAVING clauses, no subqueries).
We process these queries via a simplified version of 
the SQL-to-CQ translation from~\cite{DBLP:journals/jea/FischlGLP21}, 
which also provides the further translation of the CQ into a hypergraph.
Recall that the hypergraph $H= (V,E)$ of 
a CQ $Q$ is obtained 
by identifying the vertices in $V$ with the variables in $Q$ and
defining as edges in $E$ those sets of vertices where the corresponding
variables occur jointly in an atom of $Q$.
The join tree computation and the generation of SQL statements 
are discussed below in~more~detail.

\subsection{Join Tree Computation}
\label{sect:JoinTree}

The  GYO algorithm
\cite{report/toronto/Gra79,DBLP:conf/compsac/YuO79} 
for deciding whether a hypergraph (and thus the corresponding query) is acyclic works by non-deter\-mi\-nistic
application of the following steps: 
i)
deleting a vertex with degree 1 (i.e., a vertex occurring in a single edge),
ii) deleting an empty edge, or iii)
deleting an edge that is a subset of another edge.
In Algorithm~\ref{alg}, we choose a particular order 
in which the elimination steps of the GYO-algorithm are executed. 
Technically, deletion of degree 1 vertices from an edge $e$ of $H$ may produce a new edge that is not part of the join tree. We thus use $\lbl(e)$ in Algorithm~\ref{alg} to always refer to the name of the original edge before
vertex removals.
The algorithm produces join trees  with a particular property expressed in the following theorem: 

\begin{algorithm}[t]
  \SetKwInOut{Input}{input}\SetKwInOut{Output}{output}
  \SetKw{Reject}{Reject}
  \SetKw{Continue}{Continue}

  \Input{A connected $\alpha$-acyclic hypergraph $H$}
  \Output{A join tree of $H$}

   $J \leftarrow $ empty  tree\;
  \While{$H$ contains more than 1 edge}{
    Delete all degree 1 vertices from $H$\;
    \For{$e \in E(H)$ s.t. there is no $f \in E(H)$ with $e \subset f$}{ \label{condmax}
      $C_e \leftarrow \{ c \in E(H) \mid c \subseteq e \}$\;
      \For{$c \in C_e$}{
          Set $\lbl(c)$ as child of $\lbl(e)$ in $J$\;
          Remove $c$ from $H$\; \label{line:deledge}
        }
    }
  }
  \Return $J$\;
  \caption{The Flat-GYO algorithm}
  \label{alg}
\end{algorithm}

\begin{theorem}
Let $H = (V(H),E(H))$ be an acyclic hypergraph and let 
$T$ denote the join tree resulting from applying Algorithm~\ref{alg} 
to $H$. Then $T$ has minimal depth among all join trees of $H$.
\end{theorem}

\begin{proof}%
The proof proceeds in three steps:
(1) First, 
we observe that there is still some non-determinism left in 
Algorithm~\ref{alg}, that depends on the order in 
which the edges in the for-loop on line 4 are processed. It may happen (i) that $e = e'$ holds for two edges with 
$\lbl(e) \neq \lbl(e')$ and that (ii) 
for two distinct maximal edges $e, e'$, an edge $c \in E(H)$
satisfies both $c \subseteq e$ and $c \subseteq e'$ on line 5. 
Nevertheless, the number of iterations of the while-loop is
independent of the order in which the maximal edges are processed 
in the for-loop.
This property follows from the easily verifiable fact that the 
set of edges $\{ e_{i_1}, \dots, e_{i_m} \}$ resulting 
from an iteration of the while-loop is independent
of this non-determinism, even though (due to (i)) there 
may be an alternative set of edges 
with different labels and (due to (ii)) also an alternative collection of parent/child relationships may be  possible.

(2) Second, if a run of Algorithm~\ref{alg} has $k$ iterations of the while-loop, then the join tree constructed by this run has at most depth $k$ (max.~distance from root to leaf). This is due to the fact that, on line 7, existing partially constructed trees may be appended below a new root node but no further nesting may happen here. Hence, the depth of the partially constructed trees grows by at most 1.

(3) Finally, if there exists a join tree $T$ of depth $k$, then 
there exists a run of Algorithm~\ref{alg} with at most $k$ iterations of the while loop. This property is proved by a simple induction argument: there exists an order in which the maximal edges are processed in the for-loop, so that 
all leaf nodes of $T$ get removed on line 8 -- thus decreasing the depth of $T$ by at least 1.

The theorem can then be proved as follows: suppose that, for a given hypergraph $H$, the 
minimum depth of any join tree of $H$ is $k$. Then there exists a join tree $T$ of depth $k$. Hence, 
by (3), Algorithm~\ref{alg} has a run with at most $k$ iterations of the while-loop and, therefore, by (1), any run of Algorithm~\ref{alg} has a run with at most $k$ iterations of the while-loop.
Thus, by (2), 
any run of Algorithm~\ref{alg} 
produces a 
join tree of depth at most $k$. 
\end{proof}

\subsection{Query Plan Generation and Execution}
\label{sect:QueryPlan}

In a final step, we create a sequence of SQL statements that express the execution of \YA over the join tree and reintroduce final projection and aggregation if applicable.
The overall evaluation of the query is thus split into four stages, 
which we briefly describe below. We will illustrate these steps
by means of the SQL query given in the following example.

\begin{example}
\label{ex:unischema:two}
Recall the university schema of 
Example~\ref{ex:sca}
with relations 
$\mathsf{exams}(\mathrm{cid}$, 
$\mathrm{student}$, $\mathrm{grade})$
and
$\mathsf{courses}(\mathrm{cid}, \mathrm{faculty})$. 
We now 
add the two relations
$\mathsf{tutors}(\mathrm{student},\mathrm{cid},\mathrm{num\_semesters})$
and
$\mathsf{enrolled}(\mathrm{student},\mathrm{program})$.
The following query retrieves, for each fixed pair of program and course, 
the lowest grade obtained in exams of the CS faculty by any student 
enrolled in that program and who has been tutored for more than 1 semester in that course.
  {\small
  \begin{lstlisting}[language=SQL]
SELECT enrolled.program, exams.cid, 
       MIN(exams.grade)
FROM exams, courses, enrolled, tutors
WHERE exams.cid = courses.cid
  AND exams.student = enrolled.student
  AND exams.cid = tutors.cid
  AND courses.faculty = 'ComputerScience'
  AND exams.student = tutors.student
  AND tutors.num_semesters > 1
GROUP BY enrolled.program, exams.cid;
\end{lstlisting}
}

\noindent
The query is acyclic but not 0MA (it is not guarded).
\nop{******************************
This is due to the fact that  attributes from two relations $\mathsf{exams}$ and $\mathsf{enrolled}$
appear in the SELECT clause.
******************************}
Its hypergraph and a 
possible join tree are depicted in Figure~\ref{fig:joinTreeExampleUni}, 
where, 
for the sake of readability,
the names of 
vertices are abbreviated to the first character.
\nop{**************************************
We thus have to carry out also the top-down traversal and the second bottom-up traversal. 
But they can be significantly restricted -- namely to the root node and its child node, which together 
cover all attributes in the SELECT clause. That is, 
in the top-down traversal, it suffices to semi-join the $\mathsf{enrolled}$ relation into $\mathsf{exams}$; 
and also the joins of the second bottom-up traversal can be restricted to these two relations.
The relations at the two leaf nodes of the join tree can be completely ignored in these two traversals.
**************************************}
\hfill $\diamond$
\end{example}

\begin{figure}[t]

\begin{tabular}{cc}
\begin{tikzpicture}
	\node	(p)	at	(1,2)	{p};
	\node	(c)	at	(0,1)	{c};
	\node	(s)	at	(1,1)	{s};
	\node	(g)	at	(2,1)	{g};
	\node	(f)	at	(0,0)	{f};
	\node	(n)	at	(1,0)	{n};
	\draw[rotate around={90:(1,1.5)}] (1,1.5) ellipse [x radius=0.8, y radius=0.3]; %
	\node[scale=.9]   (enr)   at  (2,2.3)  {enrolled(s,p)};
	\draw (s) ellipse [x radius=1.5, y radius=0.4]; %
	\node[scale=.9]   (ex)   at  (3.2,1.3)  {exams(c,s,g)};
	\draw[rotate around={90:(0,.5)}] (0,.5) ellipse [x radius=0.85, y radius=0.3]; %
	\node[scale=.9]   (crs)   at  (-1,-.3)  {courses(c,f)};
	\draw[rounded corners] ($(c.north west) + (0,.05)$)
	    -- ($(s.north east) + (.05,.05)$)
	    -- ($(n.south east) + (.05,0)$)
	    -- ($(n) + (0,-.37)$)
	    -- ($(n.south west) + (-.03,0)$)
	    -- ($(s.south west) + (-.05,-.05)$)
	    -- ($(c.south west) + (0,-.03)$)
	    -- ($(c) + (-.37,0)$)
		-- cycle; %
	\node[scale=.9]   (tut)   at  (2,-.3)  {tutors(c,s,n)};
\end{tikzpicture} & \hspace{-3em}
\tikzset{
  my node style/.style={
    font=\small,
    top color=white,
    bottom color=white,
    rectangle,
    minimum size=10mm,
    draw=black,
    thick,
    drop shadow,
    align=center,
  }
}
\forestset{
  my tree style/.style={
    for tree={
      parent anchor=south,
      child anchor=north,
      l sep-=0pt,
      my node style,
      edge={draw=black, thick},
      edge path={
        \noexpand\path [draw, \forestoption{edge}] (!u.parent anchor) -- +(0,-7.5pt) -| (.child anchor)\forestoption{edge label};
      },
      if n children=3{
        for children={
          if n=2{calign with current}{}
        }
      }{},
      delay={if content={}{shape=coordinate}{}}
    }
  }
}

\centering
\begin{forest}
          my tree style
          [\QueryPlanNodePlain{\color{black}enrolled}
                [\QueryPlanNodePlain{\color{black}exams}
                     [\QueryPlanNodePlain{courses}]
                            [\QueryPlanNodePlain{tutors}]
          ]]
\end{forest}
\end{tabular}

\caption{Hypergraph and join tree for Example \ref{ex:unischema:two}}        
\label{fig:joinTreeExampleUni}
\end{figure}

\paragraph{The Setup Stage}
We first rename the attributes in such a way that all equi-joins are replaced by natural joins throughout the rest of the process. Then, from the join tree perspective, we create one view per node, representing the relation in the join tree before the execution of \YA.
Early projection to the attributes which are actually used in the query 
(either as a join attribute or as part of the final result) as well as applicable selections are also incorporated directly into these views. 
For instance, for the query and join tree from Example~\ref{ex:unischema:two}, 
the leaf node for relation \textsf{courses} induces 
the following view \texttt{courses\_setup}:
{\small
\begin{lstlisting}[language=SQL]
  CREATE VIEW courses_setup AS SELECT cid
  FROM courses WHERE faculty='ComputerScience';
\end{lstlisting}
}

\nop{*************************
Overall, this step is only a matter of convenience and the views could be instead rolled into the following steps. However, the clean separation of stages here makes the later extension to cyclic queries and GHDs clearer and simpler.
*************************}

\paragraph{The Semi-Join Stages}
The views from the setup stage are used to generate
SQL statements for the semi-joins of the first bottom-up traversal and,
if the query does not satisfy the 0MA-property, 
also for the top-down traversal of the join tree. 
\nop{**************************
We refer to the respective generated statements as the $\ltimes$-up and $\ltimes$-down stage, respectively. 
**************************}
The result of each semi-join is stored in 
an auxiliary temporary table.
Semi-joins are expressed in the standard manner via the 
IN operator of SQL.

To illustrate the semi-join stages,  we continue our example from above. Assuming that all views from the  setup stage are named with the \texttt{\_setup} suffix, the first semi-joins of the bottom-up traversal are 
realised in SQL as follows (for clarity, the previously mentioned renaming of attributes is not performed here):
{\small
\begin{lstlisting}[language=SQL]
  CREATE TEMP TABLE exams_sjup AS
  SELECT * FROM exams_setup WHERE
  cid IN (SELECT cid from courses_setup) AND
  cid, student IN (SELECT cid, student 
                   FROM tutors_setup);
\end{lstlisting}
}
We thus create a new intermediate relation for the \textsf{exams} node. Importantly, the analogous statement expressing the semi-join from the \texttt{exams} node into the \texttt{enrolled} node will now make use of \texttt{exams\_sjup} rather than the setup view for the \texttt{exams} node.

\nop{*************
From the join tree in Figure \ref{fig:join_tree_example_reduced}, we gain the stages and layers shown in Figure \ref{fig:query_execution_yannakakis_example}. Note that this specific join tree was chosen for its simplicity and its execution cannot be parallelised. 

\begin{figure}
    \centering
    \begin{tabular}{c|c|c}
            \textbf{stage} & \textbf{layer} & \textbf{joins} \\
            \hline
            \multirow{ 2}{*}{1} & 1 & $\{ss_1 := (\text{store\_sales} \ltimes \text{store}) \ltimes \text{d1}\}$ \\
              & 2 & $\{sr_1 :=  (\text{store\_returns} \ltimes ss_1) \ltimes \text{d2}\}$ \\
            \hline
            \multirow{ 2}{*}{2} & 1 & $\{ss_2 := ss_1 \ltimes sr_2\}$ \\
              & 2 & $\{d2_2 := \text{d2} \ltimes sr_1\}$ \\
              & 3 & $\{\text{store}_2 := \text{store} \ltimes ss_2\}$ \\
              & 4 & $\{d1_2 := \text{d1} \ltimes ss_2\}$ \\
            \hline
            \multirow{ 2}{*}{3} & 1 & $\{sr_1 \bowtie ss_2 \bowtie d2_2 \bowtie store_2 \bowtie d1_2\}$
    \end{tabular}
    \caption{}
    \label{fig:query_execution_yannakakis_example}
\end{figure}
*************} %

\paragraph{The Join Stage}
Finally, the temporary tables representing the relations after the semi-join stages are combined by natural joins. The straightforward way to do this is either via step-wise joins along the join tree in a bottom-up manner or, alternatively, all relations can be joined in one large statement. The latter option seems to introduce less overhead, but for large original queries, it reintroduces the problem of planning queries with many joins. 
\nop{**************************
Recall that, at this point, there are no more dangling tuples; so join order is less critical. But for a very large number of joins,
systems may still plan them badly. 
**************************}
We therefore take a middle ground and group (via a straightforward greedy procedure) the join tree into subtrees of at most 12 nodes each and materialise the final joins with one join query per subtree, plus a final query joining the subtrees.
Of course, for 0MA queries, no computation of joins is necessary. In this case, the join phase simply refers to the final aggregation over the root node.

\smallskip
Finally, note that these stages are also amenable to parallelisation: as we follow a tree structure, 
we know that the semi-joins and joins for nodes in different subtrees can be computed independently of each other. 
	This thread is not further followed in this paper as the host systems considered here already parallelise query execution to an extent where further parallelisation ``from the outside'' does not seem particularly helpful. However, the additional potential of parallelisation  clearly deserves 
further study in case of full integration of 
Yannakakis-style query execution into these DBMSs.

\section{Further Details on the Experimental Evaluation}
\label{app:AdditionalExp}

As mentioned in Section~\ref{sec:expeval}, 
all data produced by our experiments 
\ifArxivVersion
as well as all information needed for 
reproducing the experiments are available on Figshare:~\figARXIV. 
\else
as well as all information needed for 
reproducing the experiments are available on Figshare:~\figEDBT.
\fi
The most important insights gained with these experiments were summarised 
in Section~\ref{sec:expeval}. 
In this section, we provide some additional details. More specifically, we have a deeper look into 
two important effects of structure-guided query evaluation compared with the traditional approach, namely avoiding the blow-up of intermediate results and reducing the memory and communication cost. For the former, in Section \ref{sect:blowup}, we inspect further 
query plans and the costs of operations inside these plans for DuckDB. For the latter,
in Section~\ref{sect:spark:appendix},
we provide a detailed analysis of these cost components of Spark SQL. 
In both cases, we compare the behaviour of the plain system on the one hand, with \ourSystem on top of the corresponding system on the other hand.

\subsection{Blow-Up of Intermediate Results}
\label{sect:blowup}

\begin{figure*}[h]
  \centering

\begin{minipage}{0.33\textwidth}
        \vspace{0pt}

        \begin{forest}
          my tree style
          [\QueryPlanNode{min}{2.7}{1}, label=left:{\textbf{DuckDB}}
            [ \QueryPlanNode{$\bowtie$}{9}{1\,619\,562\,945}
                [\QueryPlanNode{$\bowtie$}{\underline{144.5}}{1\,699\,148\,884}
                    [\QueryPlanNode{$\bowtie$}{0.9}{27\,213\,862}
                        [\QueryPlanNode{ recording}{0.2}{27\,213\,862}]
                        [\QueryPlanNode{ artist\_credit}{0}{2\,328\,629}]
                    ]
                    [\QueryPlanNode{ release\_group}{0}{2\,375\,238}]
                ]
                [\QueryPlanNode{ release\_group\_prim.}{0}{5}]
            ]
          ]
        \end{forest}

    \end{minipage}
    \begin{minipage}{0.25\textwidth}
        \vspace{0pt}

        \begin{forest}
          my tree style
          [\QueryPlanNode{ \color{blue} recording'}{1.71}{27\,213\,862}, label=left:{\textbf{Yannakakis A}}
                [\QueryPlanNode{$\ltimes$}{0.6}{27\,213\,862}
                    [\QueryPlanNode{recording}{0.2}{27\,213\,862}]
                    [\QueryPlanNode{artist\_credit}{0}{2\,328\,626}]
                ]
          ]
        \end{forest}

\end{minipage}
\begin{minipage}{0.3\textwidth}
    \vspace{0pt}

        \begin{forest}
          my tree style
          [\QueryPlanNode{min}{0}{1}, label=left:{\textbf{Yannakakis B}}
            [\QueryPlanNode{$\ltimes$}{0}{2382938}
                [\QueryPlanNode{$\ltimes$}{1.2}{2450362}
                    [\QueryPlanNode{ release\_group}{0}{2575238}]
                    [\QueryPlanNode{ \color{blue} recording'}{0}{27213862}]
                ]
                [\QueryPlanNode{release\_group\_prim.}{0}{5}]
            ]
          ]
        \end{forest}

\end{minipage}

\begin{minipage}{0.3\textwidth}
\begin{tabular}{lr}
\toprule
System / Phase & Execution Time \\
\midrule 
DuckDB & 40.4 s \\
DuckDB + \ourSystem{} / Total & 1.0 s \\ 
DuckDB + \ourSystem{} / Setup & 0  s \\ 
DuckDB + \ourSystem{} / $\ltimes$-up &  1.0 s \\ 
DuckDB + \ourSystem{} / $\ltimes$-down & -- \\
DuckDB + \ourSystem{} / Join & 0 s \\ 
\bottomrule
\end{tabular}
\end{minipage}

\begin{minipage}{0.3\textwidth}
\centering
\begin{tikzpicture}
	\node	(a)	at	(0,1)	{a};
	\node	(b)	at	(1,1)	{b};
	\node	(c)	at	(2,1)	{c};
	\node	(d)	at	(0,0)	{d};
	\draw (b) ellipse [x radius=1.5, y radius=.3]; %
	\node[scale=.9]   (rg)   at  (3.6,1.3)  {release\_group(a,b,c)};
	\draw[rotate around={90:(0,.5)}] (0,.5) ellipse [x radius=0.85, y radius=.3]; %
	\node[scale=.9]   (rec)   at  (1,-.3)  {recording(a,d)};
	\node[scale=.8,align=left] (txt) at (4,0) {
	    Additional unary edges:\\
	    release\_group\_primary\_type(b)\\
	    artist\_credit(a)
	};
\end{tikzpicture}
\end{minipage}

  \caption{Details of performance difference in query plans of query 04aa. Execution times of operations are in seconds, rounded to one decimal point.}
  \label{fig:q04aa}
\end{figure*}
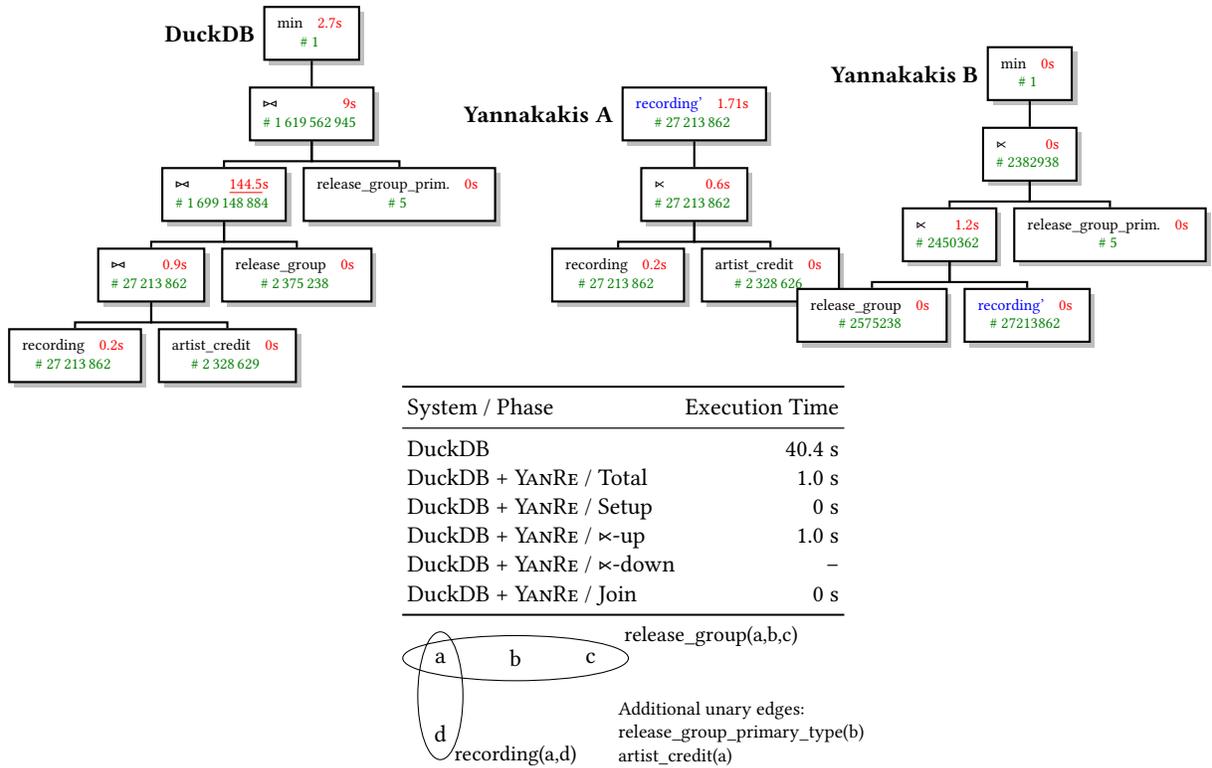

In  Section \ref{sect:deeperInsights},
we have already seen that Yannakakis-style query evaluation 
can be very effective in avoiding the explosion of intermediate results. 
More concretely, we compared in Figure~\ref{fig:q08ad} the query plan of 
plain DuckDB with the query plan of the join phase of \ourSystem{}
on top of DuckDB for the query \texttt{08ad} from the benchmark 
of~\cite{DBLP:conf/sigmod/ManciniKCMA22}. 
We thus 
inspected the full enumeration version of that query. 
We now also compare the query plan for the aggregation version 
of a query from~\cite{DBLP:conf/sigmod/ManciniKCMA22}. 
In Figure~\ref{fig:q04aa}, 
we thus look at query plans 
 together with execution times of each operation
for query \texttt{04aa}. 
  Note that the execution times inside the nodes represent total CPU time. Hence, for operations that run on multiple CPUs,
  they can be higher than the total wall clock time. 
To make the actual wall clock time clear in each case, we also provide a table beneath 
the query plan and join tree. This table shows the execution times for the baseline case as well as for the case when using \ourSystem{}. 
And we also provide a breakdown of the execution time for each stage of \ourSystem. 
Recall that for a 0MA query, the top-down traversal of \YA may be omitted. 
In principle, also the second bottom-up traversal with the joins may be omitted in this 
case. We nevertheless record the time for the ``Join'' phase in the table of Figure~\ref{fig:q04aa} (with 0 seconds, though) since, in \ourSystem,  
this phase takes care of the 
actual aggregate computation.

We can see the query plans for the query \texttt{04aa} in Figure~\ref{fig:q04aa}. On the left, we show the entire query plan for the original 0MA aggregation variant of \texttt{04aa} as produced by DuckDB. On the right, we show the query plans for the two  statements from the \ourSystem{} rewriting, split into two parts for better readability. The output of the query plan ``Yannakakis A'' is used in the query plan ``Yannakakis B'', as visually marked by a blue node named \texttt{recording'} (i.e., the right-most leaf node of the tree ``Yannakakis B''). 
We can see in the original query plan how the entire plan is dominated by a large intermediate result, requiring 144.5 seconds of CPU time for computing the respective join. On the other hand, the query plan for  \ourSystem{} naturally avoids this type of blow-up as it consists only of semi-joins. We want to emphasise here, that this increase in performance is not due to smarter heuristics or join orderings and cannot be mitigated by indexing or similar techniques. This type of blow-up is generally unavoidable if one relies on the splitting of a join query into a sequence of two-way joins -- without prior elimination 
of dangling tuples. Furthermore, while we illustrate only the 0MA case 
for query \texttt{04aa} here, we note that 
a similar effect can also be observed for 
the full enumeration queries. %

\tikzset{
  my node style2/.style={
    font=\small,
    top color=white,
    bottom color=white,
    circle,
    rectangle,
    draw=black,
    thick,
    align=center,
  }
}
\forestset{
  my tree style2/.style={
    for tree={
      parent anchor=south,
      child anchor=north,
      l sep=1pt,
      my node style2,
      edge={draw=black, thick},
      edge path={
        \noexpand\path [draw, \forestoption{edge}] (!u.parent anchor) -- +(0,-7.5pt) -| (.child anchor)\forestoption{edge label};
      },
      if n children=3{
        for children={
          if n=2{calign with current}{}
        }
      }{},
      delay={if content={}{shape=coordinate}{}}
    }
  }
}

\subsection{Memory and Communication Cost with Spark SQL}
\label{sect:spark:appendix}

Spark SQL, as it was designed to be primarily a distributed computation system, is fundamentally different from PostgreSQL and DuckDB. We cover some of the details in this section. We thus first explain some basic principles of query evaluation in Spark SQL and then present metrics covering memory consumption and communication costs, which are especially relevant in the distributed setting. 

\paragraph{Basic principles of query evaluation in Spark SQL}
As opposed to traditional database systems, Spark by itself is only considered a distributed computation framework, operating on an abstraction known as \textit{resilient distributed datasets (RDDs)}. Spark SQL extends this by introducing the \textit{DataSet} and \textit{DataFrame} APIs, the \textit{Catalyst} query optimizer, and SQL query execution, among other features. Therefore, Spark SQL is tailored towards in-memory distributed computation of large amounts of data and does not on its own feature a storage layer for long-term data. In our experiments, the data is therefore initially imported from a PostgreSQL database over the JDBC interface.
Due to its in-memory-first architecture (data is only spilled to disk when the memory is exhausted), Spark has no concept of \textit{tables}, only of \textit{temporary views}. Furthermore, due to Spark's role as (primarily) a batch processing system rather than a data management system, there is no native support of indexes, and only a basic query optimizer compared to a traditional DBMS. Spark performs rule-based optimizations, but only a very limited form of cost-based optimization in joins, e.g., by converting sort-merge joins to hash joins or broadcast joins and by coalescing post-shuffle partitions.

In our integration of \ourSystem{} into PostgreSQL and DuckDB, the execution was split into smaller steps and intermediate results saved in-between as temporary tables in order to prevent the query optimizer from re-ordering the execution. The performance drawbacks of this were not as significant as these systems run only on a single machine with a limited degree of parallelism. Spark SQL, however, was designed for highly parallelisable workloads and the same approach would have been much less effective. By using the fact that Spark SQL's query optimizer does not re-order joins, it is possible to pass all steps of \YA to Spark as 
a single query plan.

In order to illustrate how \ourSystem{} runs on Spark, we present the execution of a specific aggregation query on Spark SQL. Query \textit{07al} (Figure. \ref{fig:query_07al}) comprises an aggregation over 7 tables and 6 joins and turned out to be challenging for Spark SQL on its own.
\ourSystem{} rewrites the query into several \textit{CREATE TEMP VIEW} statements, implementing the bottom-up semi-joins of \YA stage 1.
When executing a query, Spark SQL starts by constructing a \textit{parsed logical plan}, which is next transformed into an \textit{analysed logical plan}, where attributes and relations are mapped to known objects. After running through the optimizer, the \textit{optimized logical plan} is translated into a \textit{physical plan}, which describes the lower-level details, such as sort and exchange steps.
In Figure \ref{fig:query07al_execution_baseline}, the optimized logical plan (with some details removed) resulting from query 07al is shown. Spark SQL executes the query as a left-deep binary tree realising a sequence of 
inner joins, leading to suboptimal performance in this case.
\ourSystem{} on top of Spark SQL, on the other hand, 
produces a rewriting which leads to the query plan seen in
Figure~\ref{fig:query07al_execution_yannakakis}. We make two crucial observations here: when fed the rewritten query, Spark SQL explicitly chooses a semi-join operator for the bottom-up traversal and the tree shape of the query
execution plan is not restricted to a left-deep tree.
Note that, on this query, plain Spark SQL times out after 20 minutes, 
while Spark SQL + \ourSystem{} successfully completes after $\approx$ 9.6 seconds.

\begin{figure}[h]
    \centering
    \begin{lstlisting}[language=SQL]
SELECT min(language.id)
FROM language, work_language, release, work, 
     work_alias, release_status, script
WHERE language.id = work_language.language
    AND language.id = release.language
    AND work_language.work = work.id
    AND work.id = work_alias.work
    AND release.status = release_status.id
    AND release.script = script.id;

    \end{lstlisting}
    \caption{Query 07al (aggregation)}
    \label{fig:query_07al}
\end{figure}

\begin{figure*}
    \centering
    \begin{Verbatim}[fontsize=\small]
Aggregate [min(id) AS min(id)]
+- Project [id]
   +- Join Inner, (script = id)
      :- Project [id, script]
      :  +- Join Inner, (status = id)
      :     :- Project [id, status, script]
      :     :  +- Join Inner, (id = work)
      :     :     :- Project [id, status, script, id]
      :     :     :  +- Join Inner, (work = id)
      :     :     :     :- Project [id, work, status, script]
      :     :     :     :  +- Join Inner, (id = language)
      :     :     :     :     :- Project [id, work]
      :     :     :     :     :  +- Join Inner, (id = language)
      :     :     :     :     :     :- Project [id]
      :     :     :     :     :     :  +- Filter isnotnull(id)
      :     :     :     :     :     :     +- Relation [id,iso_code_2t,iso_code_2b,iso_code_1,name,frequency,iso_code_3]
      :     :     :     :     :     +- Project [work, language]
      :     :     :     :     :        +- Filter (isnotnull(language) AND isnotnull(work))
      :     :     :     :     :           +- Relation [work,language,edits_pending,created]
      :     :     :     :     +- Project [status, language, script]
      :     :     :     :        +- Filter (isnotnull(language) AND (isnotnull(status) AND isnotnull(script)))
      :     :     :     :           +- Relation [id,gid,name,artist_credit,release_group,status,packaging,language, ... ]
      :     :     :     +- Project [id]
      :     :     :        +- Filter isnotnull(id)
      :     :     :           +- Relation [id,gid,name,type,comment,edits_pending,last_updated]
      :     :     +- Project [work]
      :     :        +- Filter isnotnull(work)
      :     :           +- Relation [id,work,name,locale,edits_pending,last_updated,type,sort_name,begin_date_year, ... ] 
      :     +- Project [id]
      :        +- Filter isnotnull(id)
      :           +- Relation [id,name,parent,child_order,description,gid]
      +- Project [id]
         +- Filter isnotnull(id)
            +- Relation [id,iso_code,iso_number,name,frequency]
    \end{Verbatim}
    \caption{Plain Spark SQL: optimized logical plan of query 07al (aggregation)}
    \label{fig:query07al_execution_baseline}
\end{figure*}

\begin{figure*}
    \centering
    \begin{Verbatim}[fontsize=\small]
Aggregate [min(v0) AS min(v0)]
+- Project [id AS v0]
   +- Join LeftSemi, (id = v0)
      :- Project [id]
      :  +- Relation [id,iso_code_2t,iso_code_2b,iso_code_1,name,frequency,iso_code_3]
      +- Project [language AS v0]
         +- Join LeftSemi, (language = v0)
            :- Project [language]
            :  +- Join LeftSemi, (script = v8)
            :     :- Project [language, script]
            :     :  +- Join LeftSemi, (status = v6)
            :     :     :- Project [status, language, script]
            :     :     :  +- Relation [id,gid,name,artist_credit,release_group,status,packaging,language,script, ...]
            :     :     +- Project [id AS v6]
            :     :        +- Relation [id,name,parent,child_order,description,gid]
            :     +- Project [id AS v8]
            :        +- Relation [id,iso_code,iso_number,name,frequency]
            +- Project [language AS v0]
               +- Join LeftSemi, (work = v3)
                  :- Project [work, language]
                  :  +- Relation [work,language,edits_pending,created]
                  +- Project [work AS v3]
                     +- Join LeftSemi, (work = v3)
                        :- Project [work]
                        :  +- Relation [id,work,name,locale,edits_pending,last_updated,type,sort_name, ...]
                        +- Project [id AS v3]
                           +- Relation [id,gid,name,type,comment,edits_pending,last_updated]

    \end{Verbatim}
    \caption{\ourSystem{} on Spark SQL: optimized logical plan of query 07al (aggregation)}
    \label{fig:query07al_execution_yannakakis}
\end{figure*}

\paragraph{Detailed runtime metrics of Spark SQL}
We now present the runtime metrics collected by Spark during the execution of the benchmarks: 
shuffle writes (i.e., data exchanged between nodes in the cluster) and memory consumption.
These results show that \ourSystem{} indeed addresses 
some of the underlying causes of long-running queries.
We mention that with all the measurements reported in this section,
we have made no distinction between acyclic and cyclic queries. That is, for the cyclic queries, we computed a generalized hypertree decomposition (GHD), turned it into a join tree and then applied \ourSystem. So, in principle, we report on 
measurements obtained with all queries from the benchmark of 
\cite{DBLP:conf/icde/MageirakosMKCA22}. Moreover, we present the results obtained 
with theses queries both for the {\em full enumeration} variant 
(Figures~\ref{fig:shuffle_bytes_enum} -- \ref{fig:memory_enum_hist})
and for 
the minimum {\em aggregation} variant
(Figures~\ref{fig:shuffle_bytes_agg} --  \ref{fig:memory_hist_agg_to}), 
which ensures the 0MA property.
However, as will be mentioned 
explicitly below, we sometimes only show the results for those queries which terminated both for plain Spark SQL and \ourSystem. Likewise, we will mention
explicitly below if the results 
include also the queries that timed out on one of the two systems.  

Spark is able to run in \textit{local mode} or in \textit{cluster mode}, where applications can be deployed on multiple cluster managers, for example Hadoop YARN, Kubernetes or the Spark \textit{standalone} cluster manager. In our experiments, the Spark applications were submitted on a YARN cluster. In Spark's local mode, which achieves parallelism only via multi-threading, \ourSystem{} performed similarly. However, we focus here on the cluster environment, 
as it tends to be the main target for real-world applications. Spark \textit{applications} set up multiple \textit{executors}, which persist as long as the applications do. \textit{Jobs} are (parallel) computations, which consist of multiple \textit{stages}, which are again collections of \textit{tasks} that depend on each other, resulting in a \textit{shuffle} operation, which re-distributes data among the partitions. Individual tasks are executed by the executors.

As a measure of communication cost in the Spark application, the total number of shuffle write records (i.e., rows) and shuffle write bytes were collected from the monitoring REST API\footnote{\url{https://spark.apache.org/docs/latest/monitoring.html}}.
Due to the shuffle boundaries at each of the stages, Spark records the shuffle bytes / records at each stage. We thus compute, as a measure of total communication cost, the sum of shuffle bytes / records. 
In Figure~\ref{fig:shuffle_bytes_enum}, the sum of shuffle bytes, for the baseline 
(plain Spark SQL) and for the \ourSystem rewriting is given, where each point represents these two measurements for a single query. Figure \ref{fig:shuffle_records_enum} presents a slightly different perspective, where only the count of records is considered, not their size. However,  the absolute numbers are very similar due to the generally low record size.
The orange points labelled as top 10\% represent the cases where the baseline measurement and/or the \ourSystem measurement belongs to the 10\% of the largest values, i.e., 
those queries which 
constitute the most challenging decile when considering both approaches.
The $45^\circ$ line indicates those points where the baseline and \ourSystem yield the
same result. Points below this line represent queries where \ourSystem caused higher 
communication cost and points above this line represent queries where  
plain Spark SQL performed worse. 
From this data, we can see that, although the overhead of \ourSystem{} is visible on the simpler instances, \ourSystem{} is increasingly competitive on the more challenging instances. 
It is to be noted that we have only considered queries here where both the baseline and \ourSystem{} terminate. We will later see that the results are even more in favour of 
\ourSystem when we also include the queries that timed out either for plain Spark SQL or for \ourSystem on top of Spark SQL. Both Figures~\ref{fig:shuffle_bytes_enum}
and~\ref{fig:shuffle_records_enum}  refer to the {\em full enumeration} 
variants of the 
benchmark queries.

For each stage, Spark also records the peak execution memory, i.e., the peak memory consumption of one of the executors involved. 
Figure~\ref{fig:memory_enum} shows the maximum over all peak execution memory values, 
i.e., the global maximum over all executors over all stages, of the baseline execution, in relation to the \ourSystem{} execution. We can observe here that the memory consumption is highly skewed towards very low numbers in the cases of the easy-to-solve queries, but becomes very large (when seen in relation to the 256 GB available on each node) in the hard cases. 
Figure~\ref{fig:memory_enum_hist} presents the skewed distribution even clearer.
Moreover, there is a significant difference between the measurements for plain Spark SQL and 
\ourSystem: while \ourSystem reaches a peak memory consumption of 50GB only in very rare cases, 
the memory consumption of plain Spark SQL surpasses this value in many cases and may even go as 
high as the total 256 GB memory available on each node.
Both Figures~~\ref{fig:memory_enum}
and~\ref{fig:memory_enum_hist}  
refer to the {\em full enumeration} 
variants of the 
benchmark queries.

We now have a closer look at the {\em aggregation} variants of the 
benchmark queries. 
Figures \ref{fig:shuffle_bytes_agg} and 
\ref{fig:shuffle_records_agg} 
show the sum of shuffle bytes and records, respectively, 
in the same way as previously shown in the full enumeration case. 
Additionally, Figures \ref{fig:shuffle_bytes_agg_to} and  \ref{fig:shuffle_records_agg_to}
also include the instances where timeouts occurred. 
Not surprisingly, it can be seen that the queries with timeouts tend to have a significantly increased communication cost. The effect of \ourSystem is even more pronounced in this case, 
especially for the shuffle write records shown in Figure~\ref{fig:shuffle_records_agg_to}: 
here the majority of the very large intermediate results, and even all of the top 10\%, have a higher value for the baseline execution than for \ourSystem.
Another interesting observation which we can make from these results, through the low deviation from the $45^\circ$ line, is that the overhead of \ourSystem is 
lower on the aggregation (i.e., 0MA) queries than on the full enumeration queries. 

The results of the peak memory consumption measurements 
presented in Figures~\ref{fig:memory_agg}  and \ref{fig:memory_hist_agg} 
(for the queries that terminate both with plain Spark SQL and \ourSystem)
are similar to the case of full enumeration. Actually, here \ourSystem{} performs even better, 
in that the memory consumption is now similar to the baseline on easier instances.
The measurements shown in Figures~\ref{fig:memory_agg_to} 
and~\ref{fig:memory_hist_agg_to} -- now including also those queries which 
caused a timeout with plain Spark SQL or \ourSystem -- draw a similar picture.

\clearpage

\begin{figure*}
\centering
\begin{minipage}{.5\textwidth}
  \centering
  \includegraphics[width=.95\linewidth]{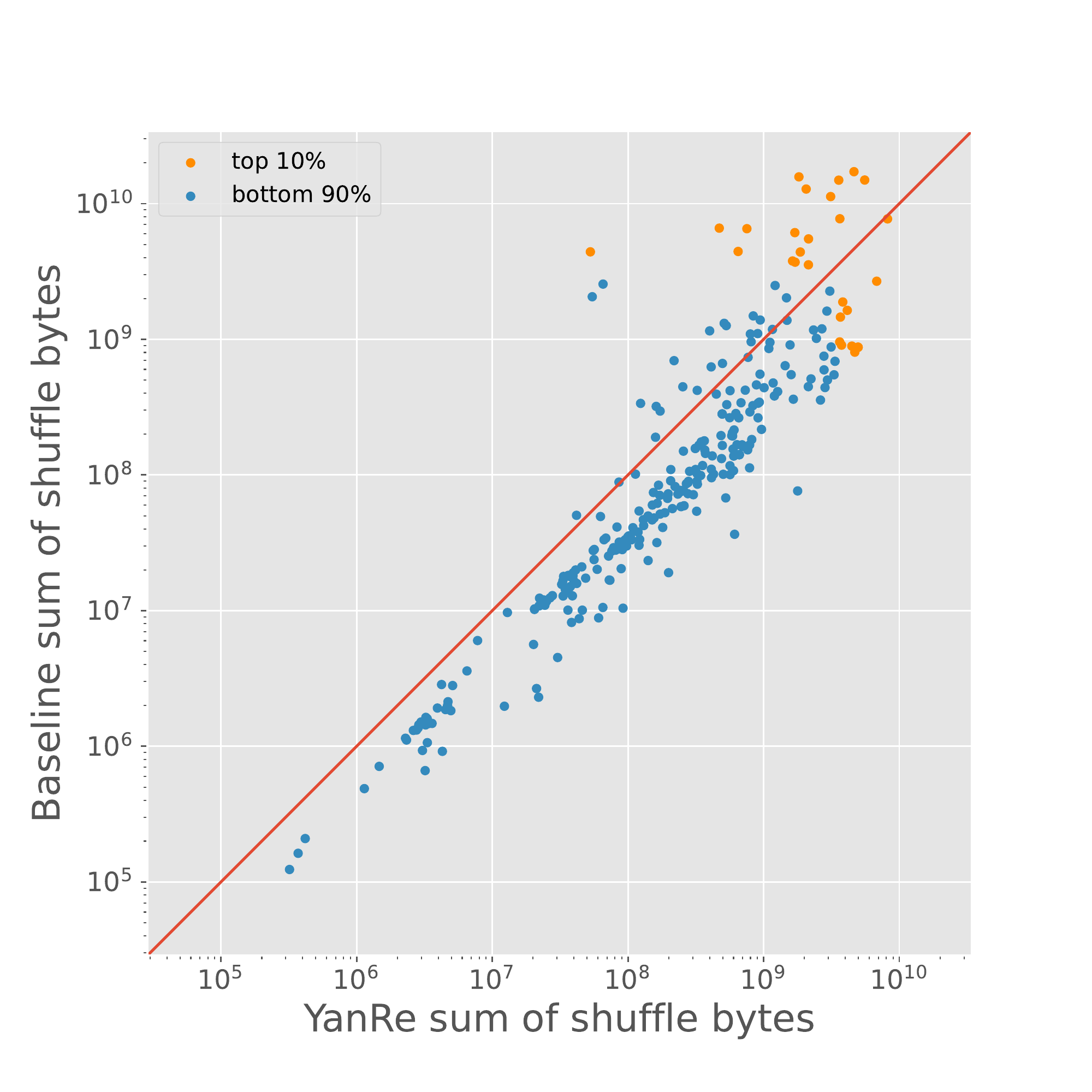}
  \caption{Sum of shuffle write bytes \newline (full enumeration, \ourSystem{} and plain Spark SQL; \newline
  only queries that terminated on both systems)}
  \label{fig:shuffle_bytes_enum}
\end{minipage}%
\begin{minipage}{.5\textwidth}
  \centering
  \includegraphics[width=.95\linewidth]{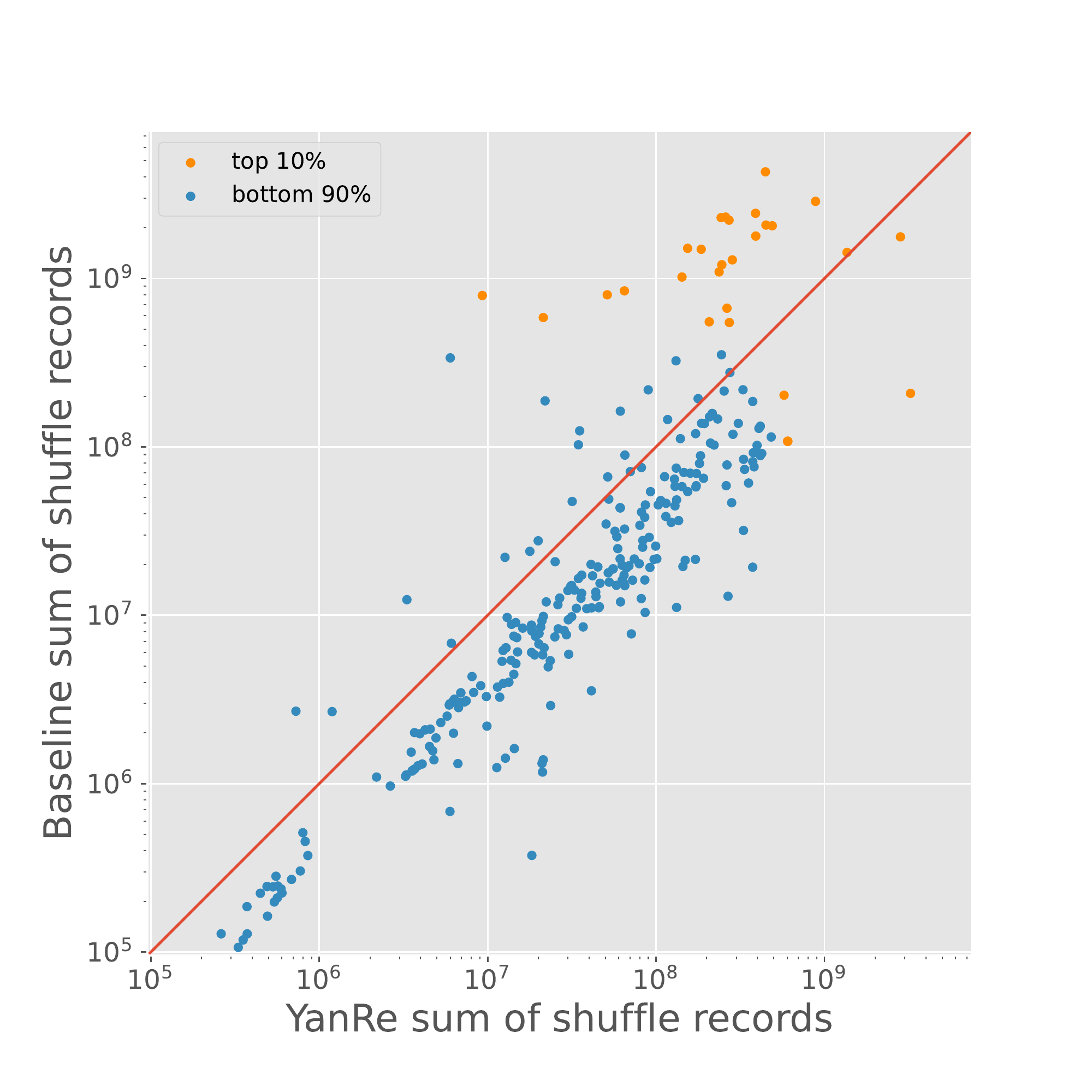}
  \caption{Sum of shuffle write records \newline 
  (full enumeration, \ourSystem{} and plain Spark SQL \newline
  only queries that terminated on both systems))}
  \label{fig:shuffle_records_enum}
\end{minipage}
\end{figure*}

\begin{figure*}
\centering
\begin{minipage}{.5\textwidth}
  \centering
  \includegraphics[width=.95\linewidth]{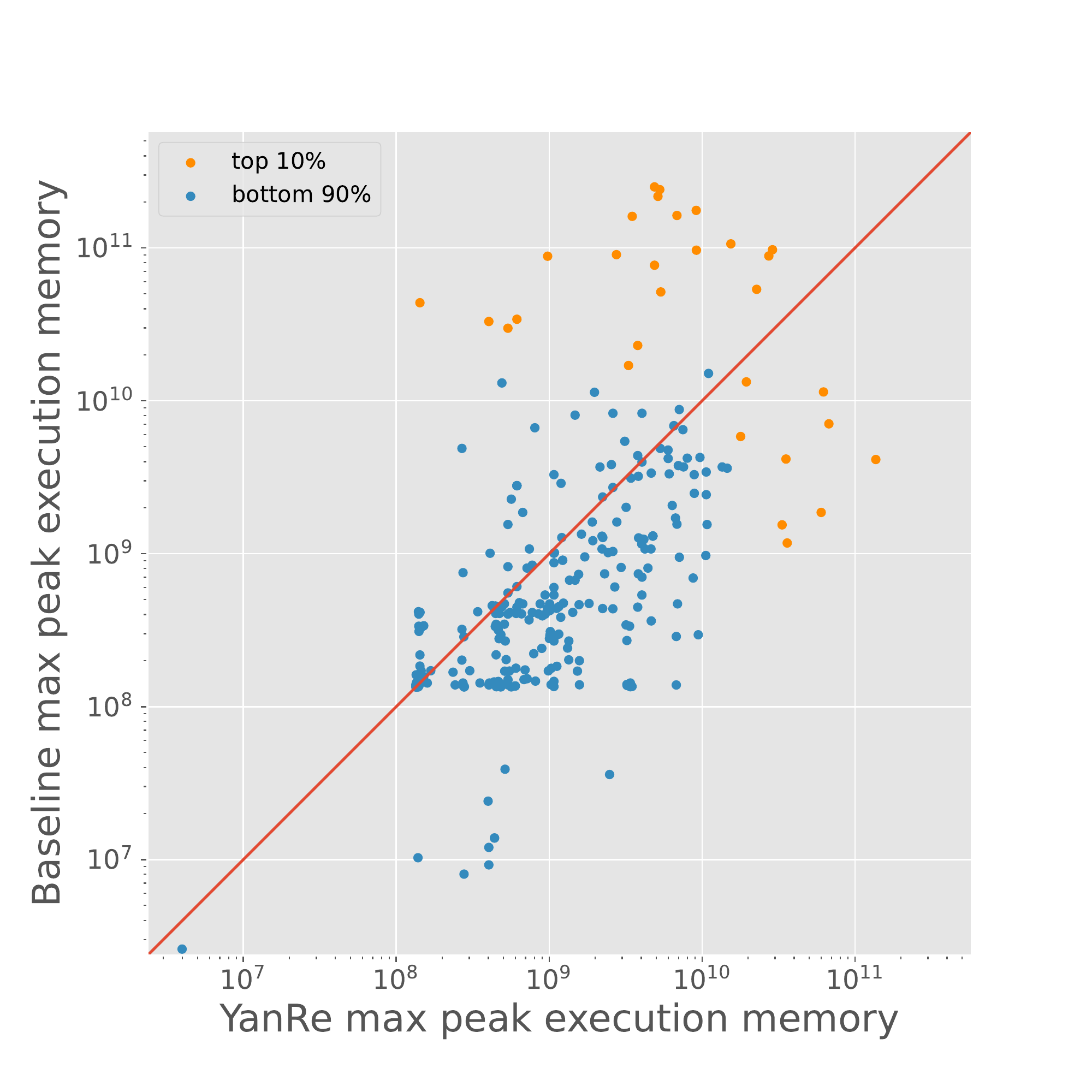}
  \caption{Peak memory consumption \newline 
  (full enumeration, \ourSystem{} and plain Spark SQL; \newline
  only queries that terminated on both systems)}
  \label{fig:memory_enum}
\end{minipage}%
\begin{minipage}{.5\textwidth}
  \centering
  \includegraphics[width=.95\linewidth]{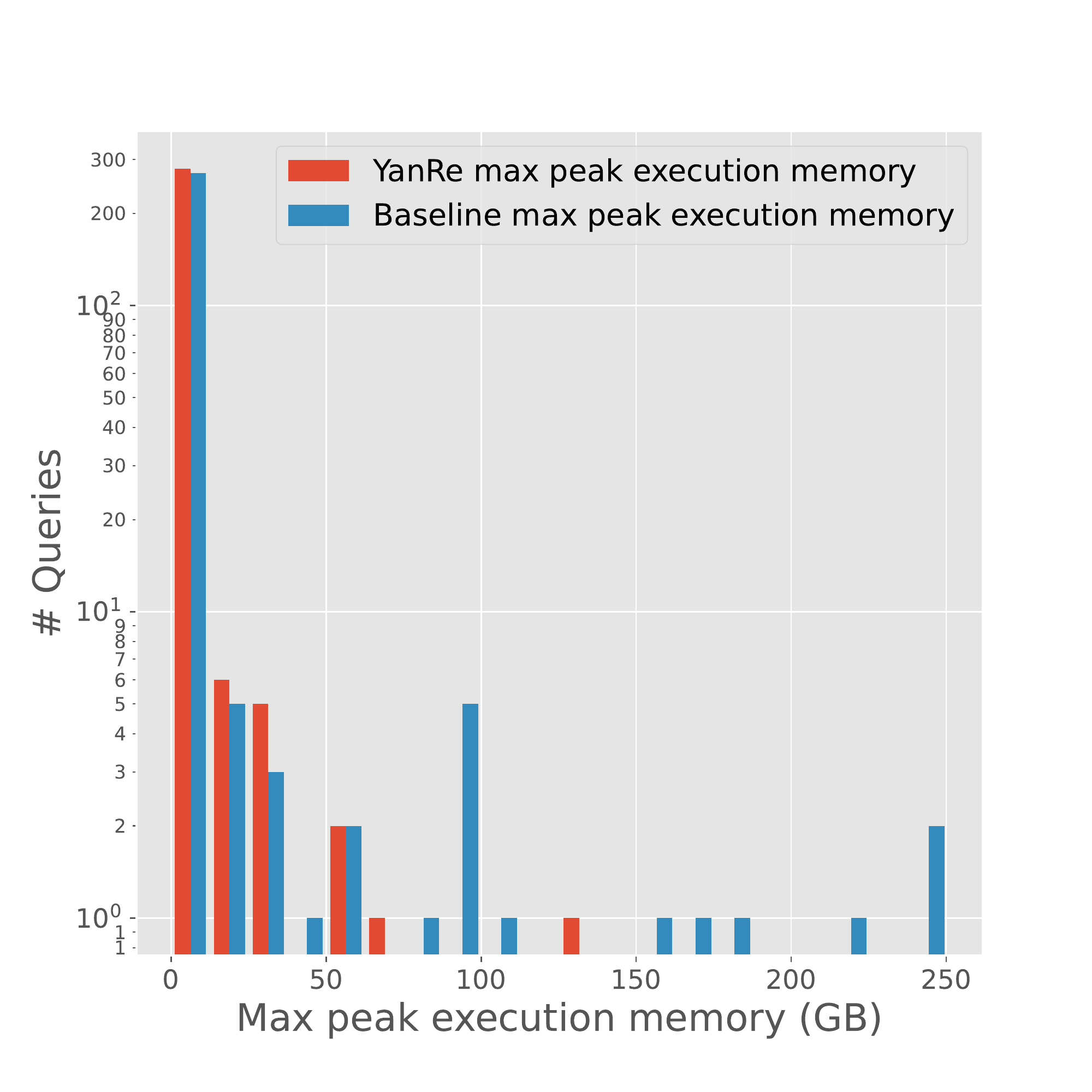}
  \caption{Peak memory consumption \newline
  (full enumeration, \ourSystem{} and plain Spark SQL,  \newline
  only queries that terminated on both systems)}
  \label{fig:memory_enum_hist}
\end{minipage}
\end{figure*}

\begin{figure*}
\centering
\begin{minipage}{.48\textwidth}
  \centering
  \includegraphics[width=.95\linewidth]{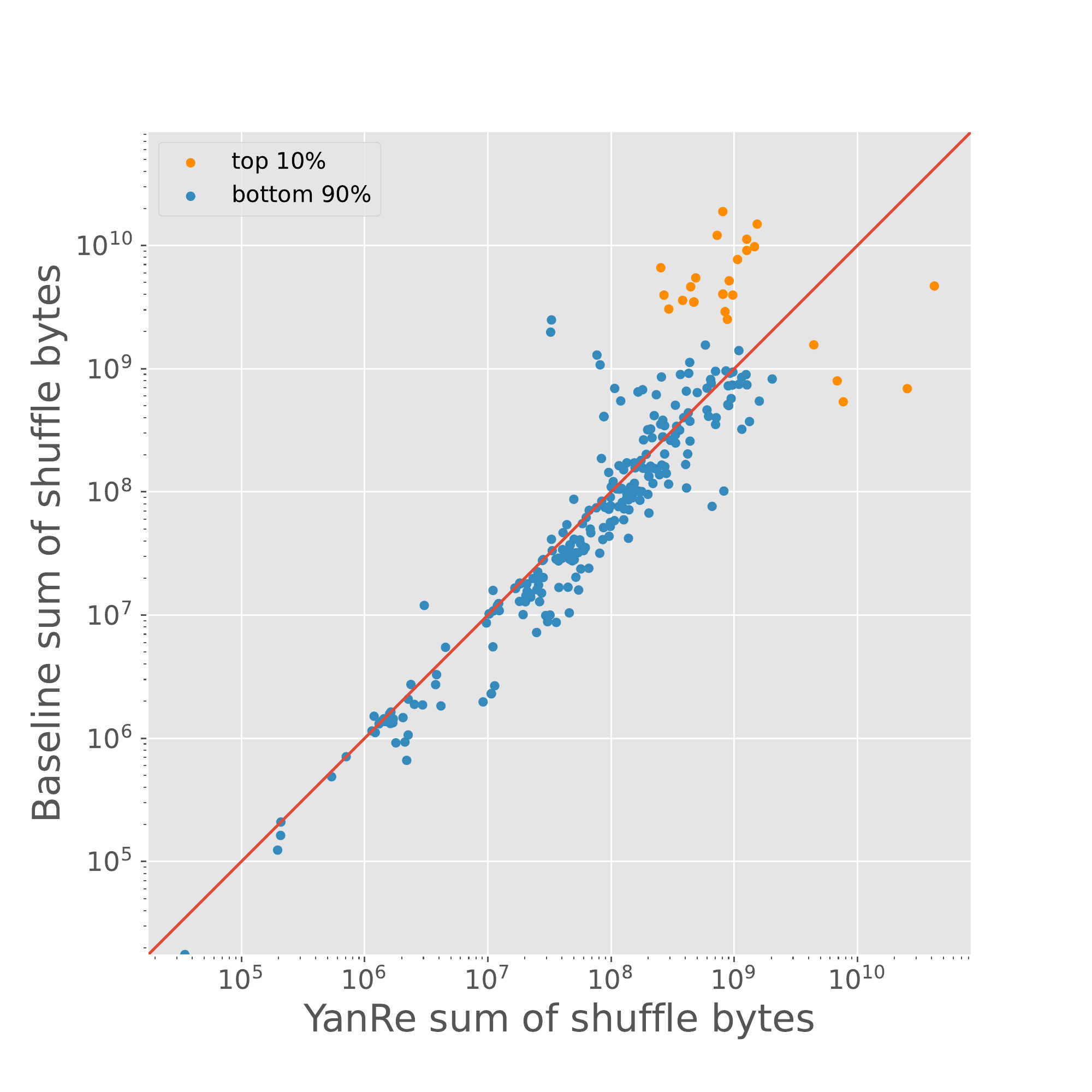}
  \caption{Sum of shuffle write bytes \newline 
  (aggregation, \ourSystem{} and plain Spark SQL; \newline
  only queries that terminated on both systems)}
  \label{fig:shuffle_bytes_agg}
\end{minipage}
\begin{minipage}{.5\textwidth}
  \centering
  \includegraphics[width=.95\linewidth]{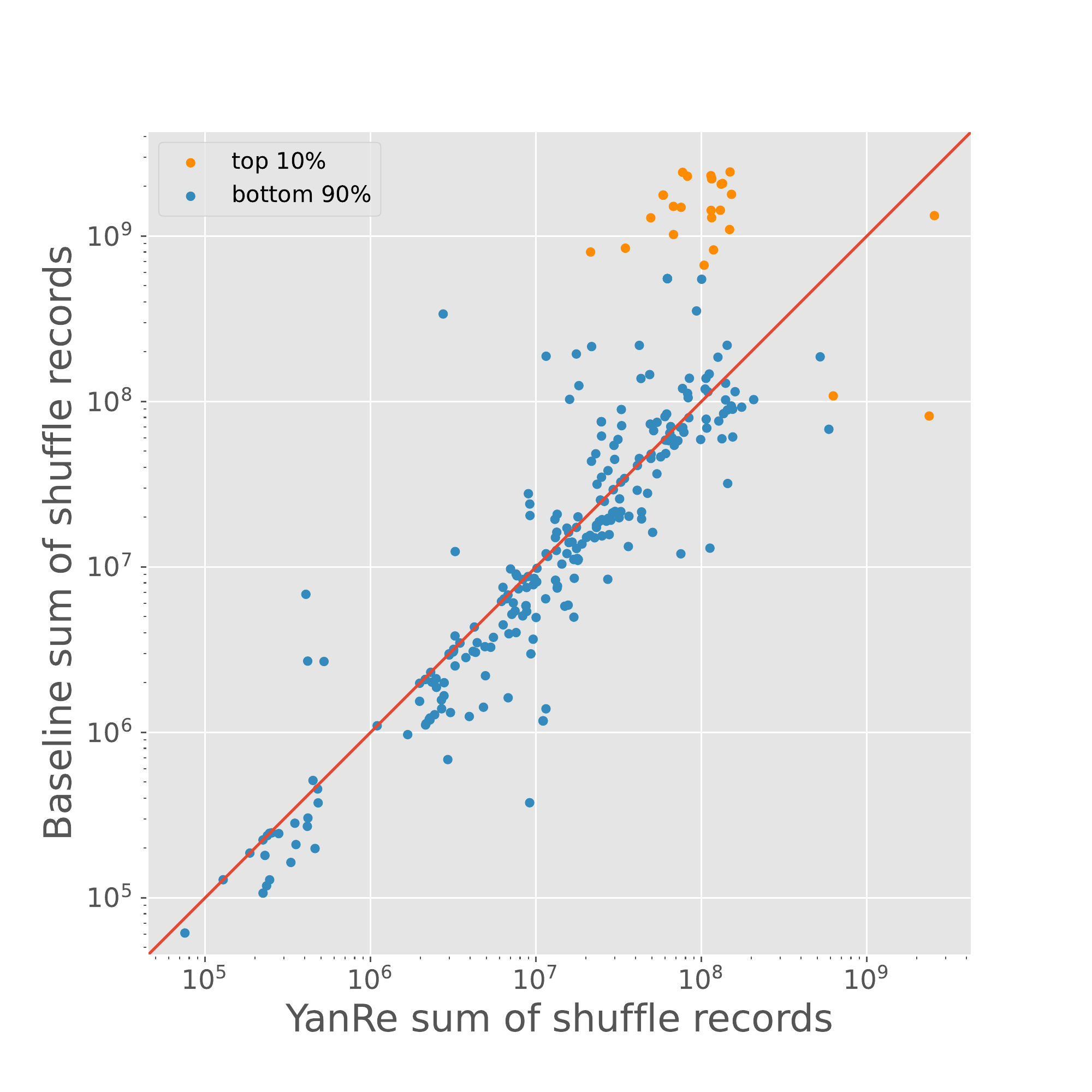}
  \caption{Sum of shuffle write records \newline 
  (aggregation, \ourSystem{} and plain Spark SQL; \newline
  only queries that terminated on both systems)}
  \label{fig:shuffle_records_agg}
\end{minipage}%
\end{figure*}

\begin{figure*}
\centering
\begin{minipage}{.48\textwidth}
  \centering
  \includegraphics[width=.95\linewidth]{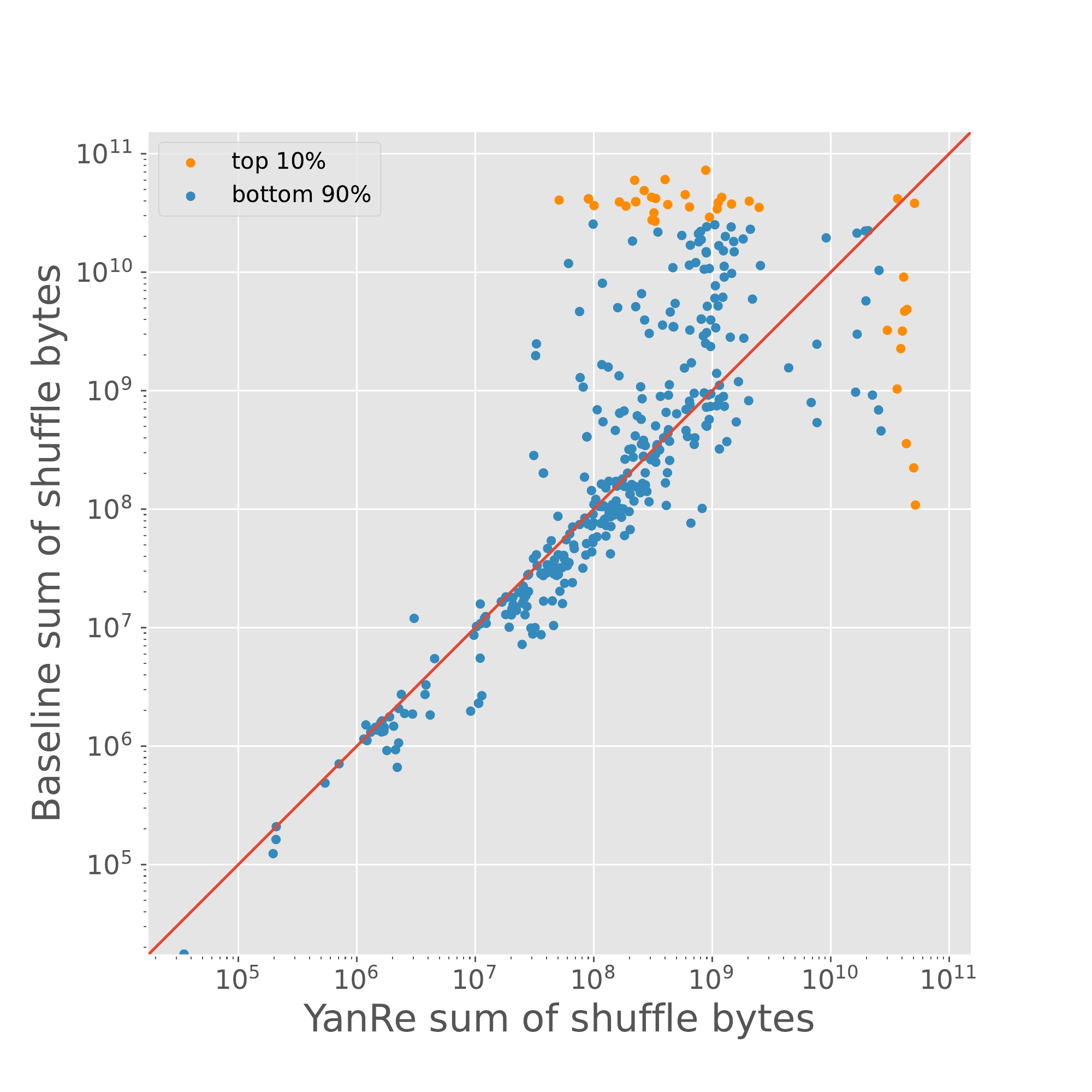}
  \caption{Sum of shuffle write bytes \newline 
  (aggregation, \ourSystem{} and Spark SQL; \newline
  including queries with timeout)}
  \label{fig:shuffle_bytes_agg_to}
\end{minipage}
\begin{minipage}{.5\textwidth}
  \centering
  \includegraphics[width=.95\linewidth]{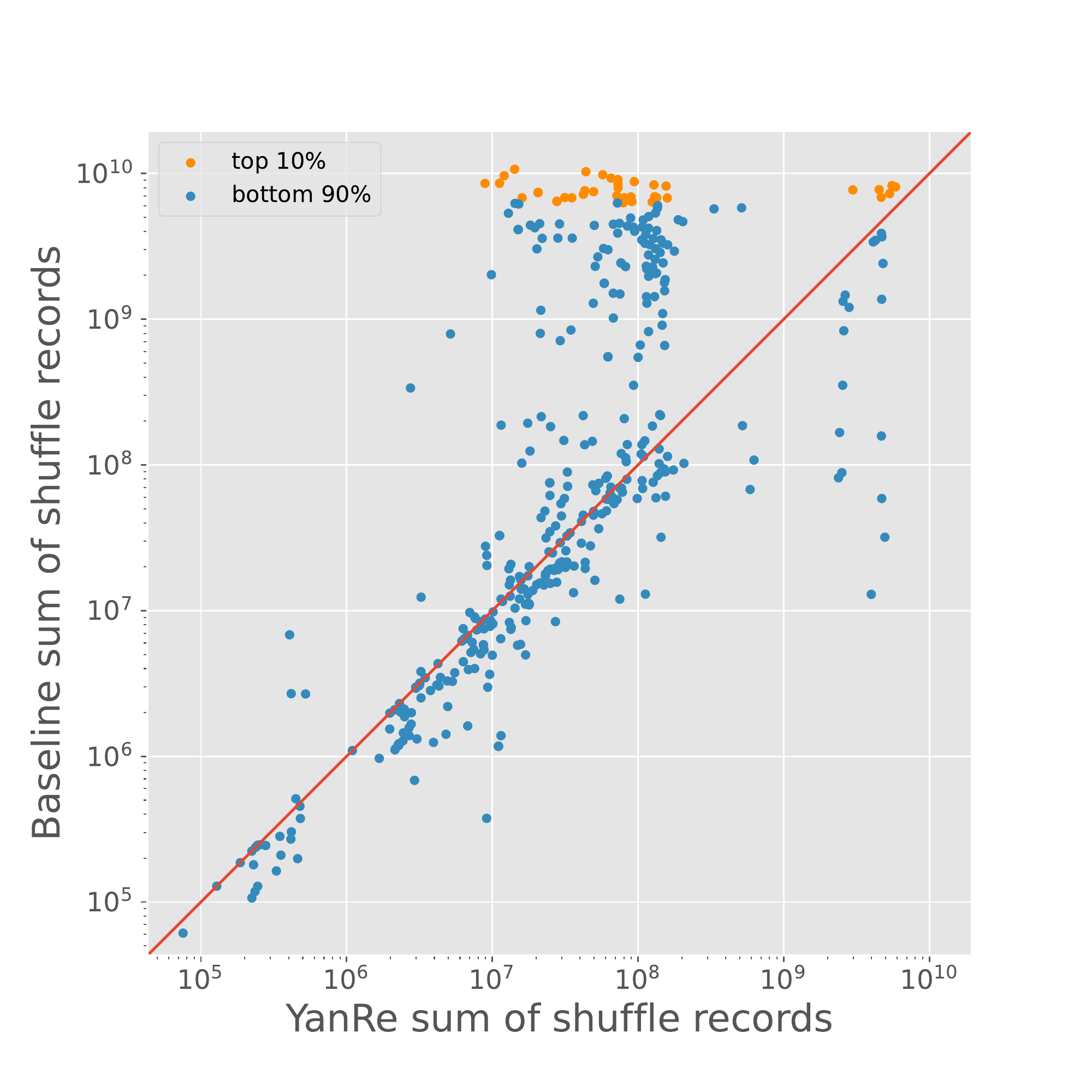}
  \caption{Sum of shuffle write records \newline 
  (aggregation, \ourSystem{} and Spark SQL; \newline
  including queries with timeout)}
  \label{fig:shuffle_records_agg_to}
\end{minipage}
\end{figure*}

\begin{figure*}
\centering
\begin{minipage}{.5\textwidth}
  \centering
  \includegraphics[width=.95\linewidth]{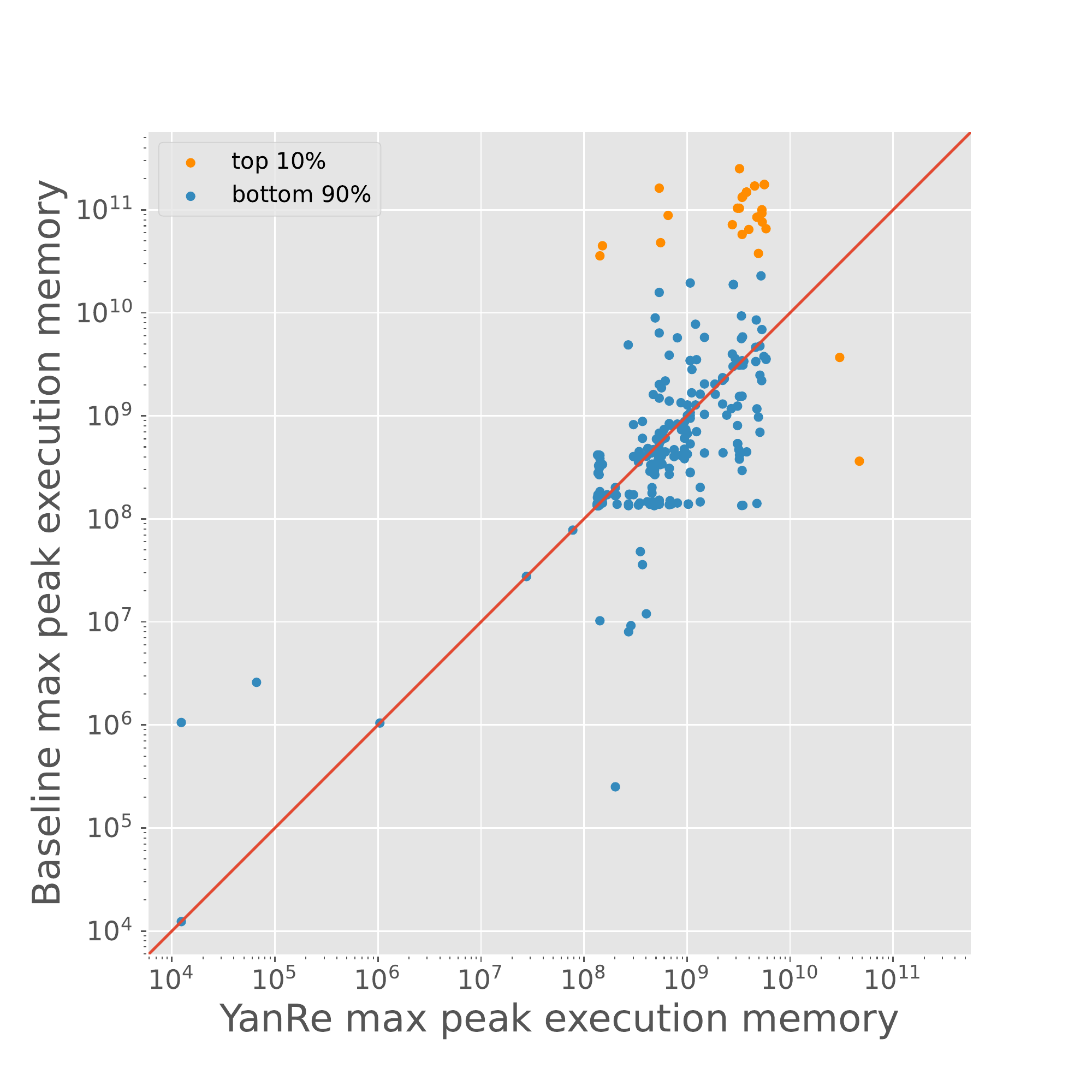}
  \caption{Peak memory consumption \newline 
  (aggregation, \ourSystem{} and Spark SQL; \newline
  only queries that terminated on both systems)}
  \label{fig:memory_agg}
\end{minipage}%
\begin{minipage}{.5\textwidth}
  \centering
  \includegraphics[width=.95\linewidth]{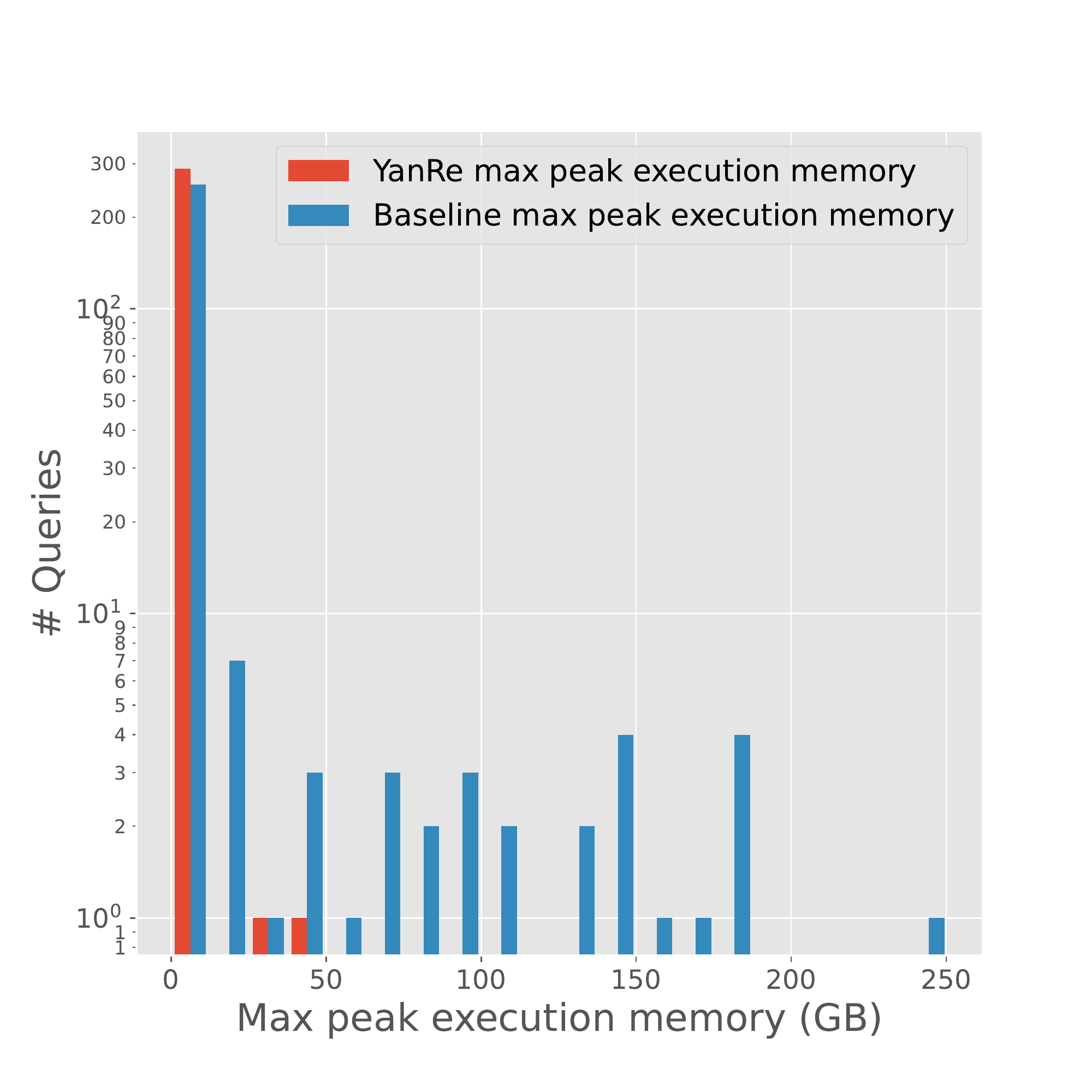}
  \caption{Peak memory consumption \newline
  (aggregation, \ourSystem{} and Spark SQL; \newline
  only queries that terminated on both systems)}
  \label{fig:memory_hist_agg}
\end{minipage}
\end{figure*}

\begin{figure*}
\centering
\begin{minipage}{.5\textwidth}
  \centering
  \includegraphics[width=.95\linewidth]{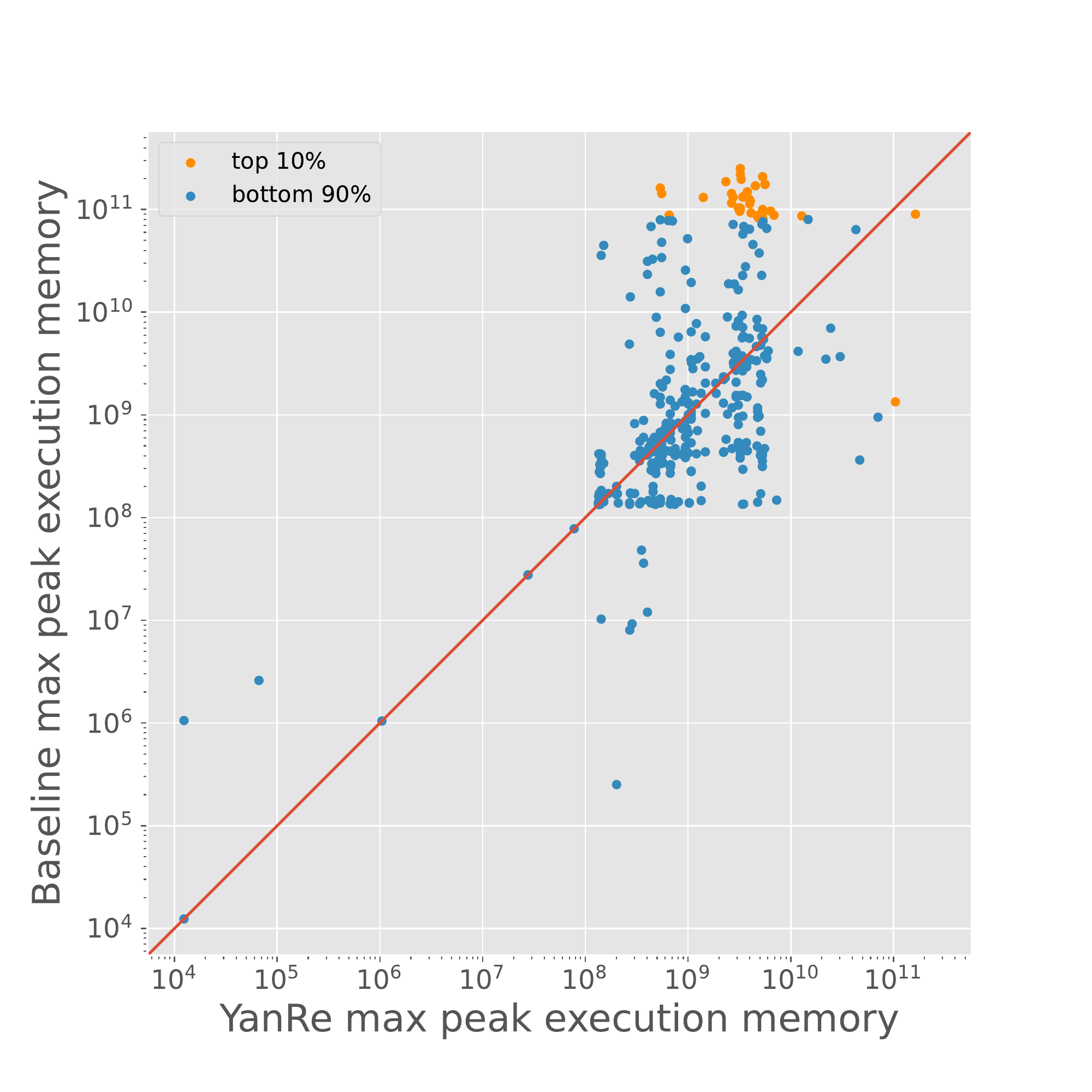}
  \caption{Peak memory consumption \newline 
  (aggregation, \ourSystem{} and Spark SQL; \newline
  including queries with timeout)}
  \label{fig:memory_agg_to}
\end{minipage}%
\begin{minipage}{.5\textwidth}
  \centering
  \includegraphics[width=.95\linewidth]{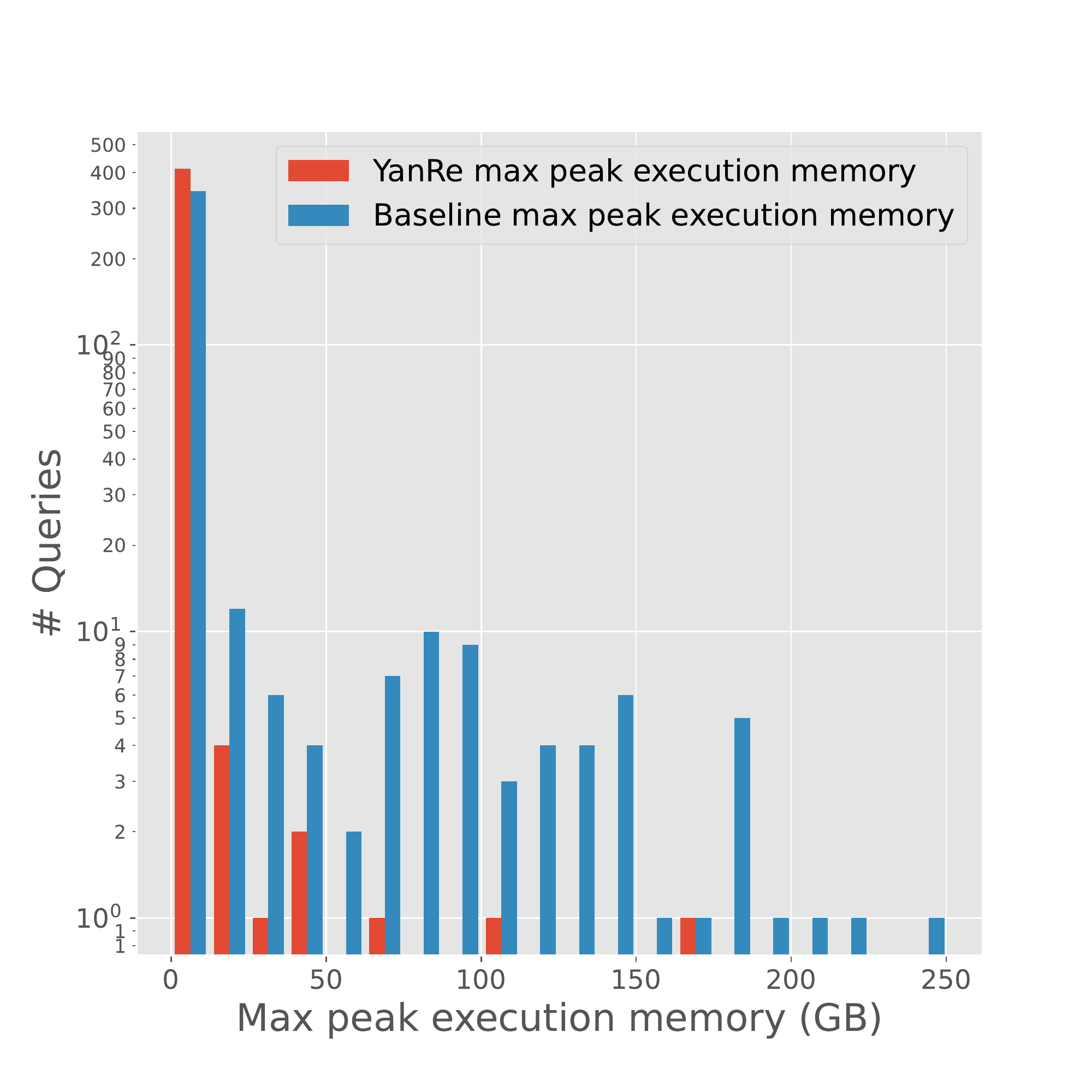}
  \caption{Peak memory consumption \newline
  (aggregation, \ourSystem{} and Spark SQL; \newline
  including queries with timeout)}
  \label{fig:memory_hist_agg_to}
\end{minipage}
\end{figure*}

\clearpage

\section{Cyclic Queries}
\label{sec:cycles}

In this section, we briefly discuss the additional challenges that need to be tackled to obtain similar improvements as reported in the acyclic case also for cyclic queries. We have already reported in 
Section~\ref{sec:expeval}
on some very preliminary
results with cyclic 
queries from~\cite{DBLP:conf/sigmod/ManciniKCMA22}.
The results shown in Table~\ref{tab:ghdqueries} were obtained by first 
computing different generalized hypertree decompositions (GHDs) for three
cyclic queries (09ac, 11ag, and 11al)
from the benchmark. These GHDs were constructed by repeated execution of the decomposition programme BalancedGo~\cite{DBLP:journals/constraints/GottlobOP22} with randomised search order. For each of the distinct GHDs computed in this way, 
we then proceed as in the acyclic case, with the only difference being that the initial relation associated with a tree node $u$ may now either 
be a base relation or a view obtained by joining the relations of the 
edge cover labelling the node of the GHD.

As could be seen in Table~\ref{tab:ghdqueries},
the 
effort of structure-guided query evaluation 
via GHDs can vary heavily, depending on the chosen GHD and, 
in particular, 
on the joins required to turn the GHD into a join tree.
Importantly, even small hypergraphs can have a relatively large number of different GHDs of minimal width. We are therefore confronted with another optimisation problem of finding the GHD with the most efficient reduction to the acyclic case.

We further illustrate this by taking a closer look at one of the cyclic queries thus studied, namely query \textsf{09ac}, which we recall in full in Figure~\ref{fig:queryNineAC}.
On the left-hand side of Figure~\ref{fig:09acex}, we have the hypergraph of this query.
For our purposes, only the structure of the hypergraph is relevant and not the precise names of the attributes. For the sake of better readability, 
we have therefore abbreviated the attribute names to 
a,b,c,d,e,f. Moreover,  attributes irrelevant to the query
have been omitted altogether. 
The correspondence between these abbreviations and 
the true attribute names 
is shown in Table~\ref{tab:abbreviations}. In this table, we have omitted 
the relations which only occur with a single attribute in the query. The correspondence between 
abbreviation and true name is obvious in these cases: 
artist\_credit.id (abbreviated to a), release\_country.release (abbreviated to c), 
release\_group\_secondary\_type\_join.release\_group (abbreviated to b), and 
release\_group\_prior\_type.id (abbreviated to e). 
Note that we have omitted unary edges (which correspond to relations with a 
single (relevant) attribute) from the hypergraph since they have no effect on the acyclicity of a query.
Of course, in the GHDs, the unary relations have to be reintroduced. However, the join with a 
unary relation trivially degenerates to a semi-join. Hence, they can never lead to a blow-up of intermediate results.

On the right-hand side of Figure~\ref{fig:09acex}, 
we have three of the different GHDs generated for this query in our experiments together with the overall execution time of DuckDB + \ourSystem to answer the query. For space reasons, the labels of the nodes contain abbreviations of relation names. 
The correspondence between these abbreviations and 
the true relation names 
are shown in Table~\ref{tab:abbreviationsRelations}. 
We can observe clear structural differences between the GHDs, with decomposition \textbf{Fast} branching only to at most 3 children, while decomposition \textbf{Timeout} is flat and very wide. More importantly, the joins needed to turn the GHDs into join trees are markedly different. Decomposition \textbf{Timeout} induces the costly cross product between  \textsf{medium} and \textsf{release\_group}, while decomposition \textbf{Fast} avoids such views. The third decomposition \textbf{Fast-2} shows a third GHD for which execution is even faster than for \textbf{Fast}. Notably, \textbf{Fast-2} requires only 2 joins to turn
the GHD into a join tree -- in contrast to the 5 joins needed in \textbf{Fast}. 
For reference, ``plain'' DuckDB (i.e., without the rewriting done by \ourSystem) times out on this query and PostgreSQL solves it in 85 seconds.

We conclude our discussion of cyclic CQs with a note on the complexity of 
computing decompositions. 
Until recently, computing GHDs quickly would have presented a
further challenge for cyclic queries. However, with significant recent advancements in
decomposition
algorithms~\cite{DBLP:conf/pods/GottlobLOP22,DBLP:journals/constraints/GottlobOP22,DBLP:journals/jea/FischlGLP21,DBLP:conf/ijcai/SchidlerS21},
the time to compute a width 2 decomposition for the CQs of
the benchmark used here has become trivial. All decompositions used in the
experiments of this section were computed in under 5 milliseconds using off-the-shelf hardware
(in comparison, PostgreSQL takes over 100 milliseconds to create the
query plan for query \textsf{09ac} of the benchmark system).

\begin{figure*}[h]
\small
    \begin{lstlisting}[language=SQL]
SELECT   track.recording, track.medium, medium.release, artist_credit.id, release.release_group, 
         release_group.type
FROM     artist_credit, recording, release_group, release_group_secondary_type_join,
         release_group_primary_type, track, release, medium, release_country
WHERE    artist_credit.id = recording.artist_credit AND release.id = medium.release
         AND artist_credit.id = release_group.artist_credit AND track.medium = medium.id
         AND release_group.id = release_group_secondary_type_join.release_group 
         AND release.id = release_country.release;
         AND release_group.type = release_group_primary_type.id
         AND artist_credit.id = track.artist_credit
         AND recording.id = track.recording
         AND artist_credit.id = release.artist_credit
         AND release_group.id = release.release_group
  
  
    \end{lstlisting}
    
\vspace{-4mm}
    \caption{Query 09ac (full enumeration)}
    \label{fig:queryNineAC}
\end{figure*}

\begin{figure*}[h]
\mbox{}
\bigskip

  \centering
\begin{minipage}{0.3\textwidth}
\centering
\includegraphics[width=\linewidth]{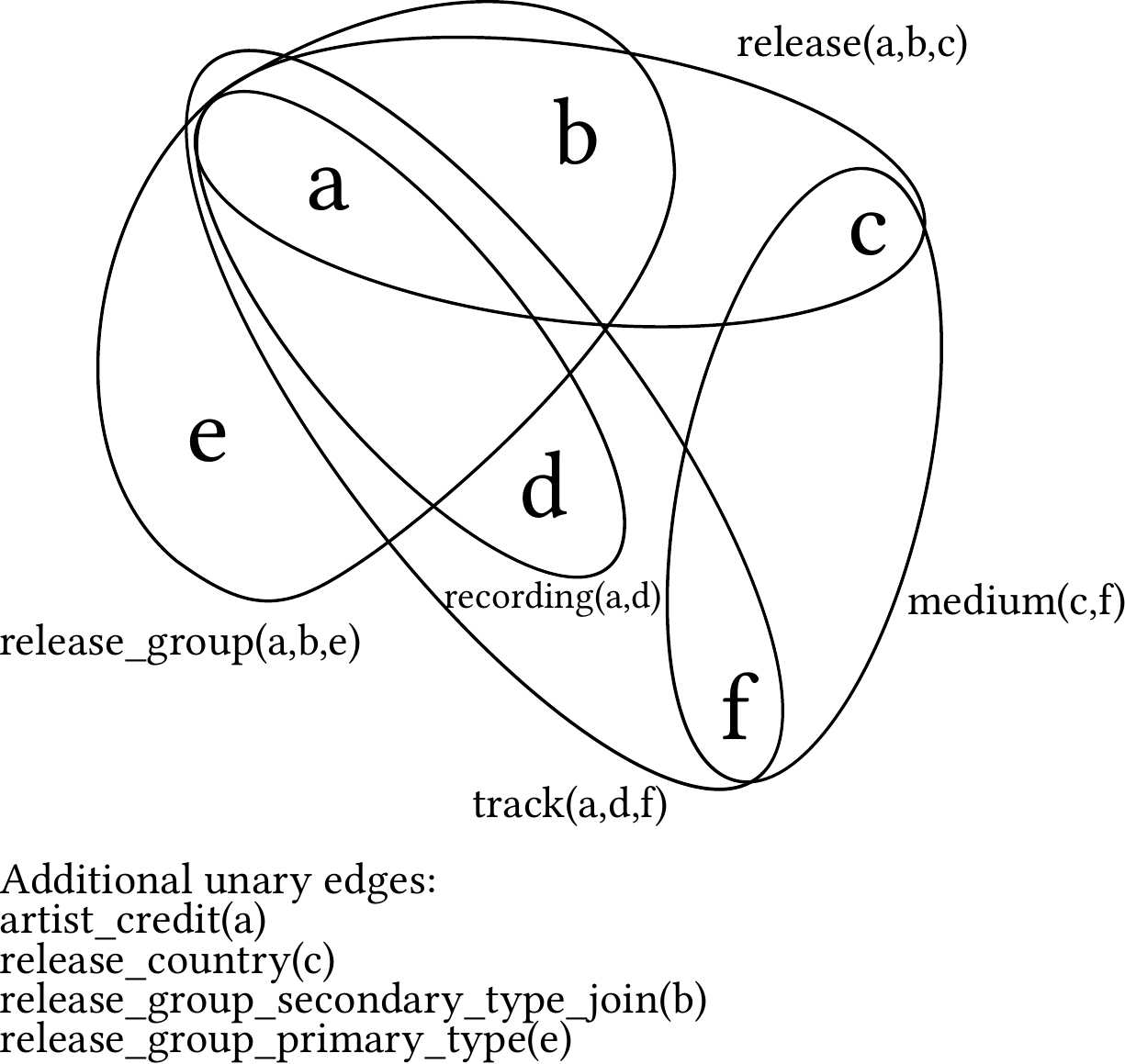}
\end{minipage}
\hfill
\begin{minipage}[c]{0.66\textwidth}
\begin{forest}
  my tree style2
  [, phantom, s sep = 8mm
  [ac $\bowtie$ rec, label={[label distance=1em]above:{\textbf{Fast} (16s)}}
    [r $\bowtie$ m
        [ t
            [rc]
        ]
    ]
    [   r 
        [r $\bowtie$ rg
            [rg  $\bowtie$ rgp]
        ]
    ]
    [r $\bowtie$ rgs]
  ]
  [m $\bowtie$ rg, label={[label distance=1em]above:{\textbf{Timeout (> 20min)}}}
    [ac, tier=A]
    [[rc $\bowtie$ t, tier=B]]        
    [r,tier=A]
    [[rec,tier=B]]
    [rgp, tier=A]
    [[rgs, tier=B]]
  ]  
    [r , label={[label distance=1em]above:{\textbf{Fast-2} (10s)}}
    [rg $\bowtie$ rgp]
    [[rgs,  tier=C]]
    [rec]
    [[ac, tier=C]]
    [m $\bowtie$ r
        [t
            [rc]
        ]
    ]
  ]
  ]
\end{forest}
\end{minipage}
  \caption{Hypergraph and different GHDs of the cyclic query 09ac}
  \label{fig:09acex}
\end{figure*}

\begin{table}[h]
    \caption{Abbreviations of attribute names in query q09ac}
    \centering
    	\vspace{-3mm}
    \begin{tabular}{l|l}
    \toprule
        relation plus schema & true attribute names (in the same order)\\
     \midrule
         medium(c,f) &  release, id \\
         recording(a,d) &  artist\_credit, id \\
         release(a,b,c) &  artist\_credit, release\_group, id \\
         release\_group (a,b,e) &   artist\_credit, id, type \\
         track(a,d,f) &  artist\_credit, recording, medium\\
    \bottomrule
    \end{tabular}
    \label{tab:abbreviations}
\end{table}

\begin{table}[h]
    \caption{Abbreviations of relation names in query q09ac}
    \centering
    	\vspace{-3mm}
    \begin{tabular}{l|l}
        \toprule
        abbreviation & true relation name\\
        \hline
         ac  &  artist\_credit \\
         m  &  medium \\
         r &  release \\
         rc & release\_country \\
         rec &  recording \\
         rg &   release\_group \\
         rgp &   release\_group\_primary\_type \\
         rgs &   release\_group\_secondary\_type\_join \\
         t &  track\\
    \bottomrule
    \end{tabular}

    \label{tab:abbreviationsRelations}
\end{table}
 
\else

\fi

\end{document}